\documentclass[%
 reprint,
 superscriptaddress,
 amsmath,
 aps,
 prl,
 longbibliography
]{revtex4-2}

\usepackage{graphicx, color}
\usepackage{amsthm}
\usepackage{newtxtext, newtxmath}
\usepackage{bm}
\usepackage{braket, mathtools}
\usepackage{algorithm}
\usepackage{algpseudocode}
\usepackage{enumitem}
\usepackage[
 colorlinks=true,
 citecolor=blue,
 linkcolor=blue
]{hyperref}

\graphicspath{{./images/}}

\DeclareMathOperator{\argmax}{arg\,max}
\DeclareMathOperator{\argmin}{arg\,min}

\let\tilde\widetilde
\newcommand{\ii}{\mathrm{i}}
\newcommand{\ee}{\mathrm{e}}
\newcommand{\relmiddle}[1]{\mathrel{}\middle#1\mathrel{}}
\theoremstyle{plain}
\newtheorem{theorem}{Theorem}
\newtheorem{corollary}{Corollary}
\newtheorem{lemma}{Lemma}
\newtheorem{proposition}{Proposition}

\newenvironment{propositionp}[1]{
  
  \propositionalt
}{\endpropositionalt}
\theoremstyle{definition}
\newtheorem{definition}{Definition}
\theoremstyle{remark}
\newtheorem*{remark}{Remark}

\begin{document}

\title{Anisotropy-induced spin parity effects}

\author{Shuntaro Sumita}
\email[]{s-sumita@g.ecc.u-tokyo.ac.jp}
\affiliation{%
 Department of Basic Science, The University of Tokyo, Meguro, Tokyo 153-8902, Japan
}%
\affiliation{%
 Komaba Institute for Science, The University of Tokyo, Meguro, Tokyo 153-8902, Japan
}%
\affiliation{%
 Condensed Matter Theory Laboratory, RIKEN CPR, Wako, Saitama 351-0198, Japan
}%

\author{Akihiro Tanaka}
\affiliation{%
Research Center for Materials Nanoarchitectonics, National Institute for Materials Science, Tsukuba, Ibaraki 305-0044, Japan
}%

\author{Yusuke Kato}
\affiliation{%
 Department of Basic Science, The University of Tokyo, Meguro, Tokyo 153-8902, Japan
}%
\affiliation{%
 Quantum Research Center for Chirality, Institute for Molecular Science, Okazaki, Aichi 444-8585, Japan
}%

\date{\today}

\begin{abstract}
 Spin parity effects refer to those special situations where a dichotomy in the physical behavior of a system arises, solely depending on whether the relevant spin quantum number is integral or half-odd integral.
 As is the case with the Haldane conjecture in antiferromagnetic spin chains, their pursuit often derives deep insights and invokes new developments in quantum condensed matter physics.
 Here, we put forth a simple and general scheme for generating such effects in any spatial dimension through the use of anisotropic interactions, and a setup within reasonable reach of state-of-the-art cold-atom implementations.
 We demonstrate its utility through a detailed analysis of the magnetization behavior of a specific one-dimensional spin chain model, an anisotropic antiferromagnet in a transverse magnetic field, unraveling along the way the quantum origin of finite-size effects observed in the magnetization curve that had previously been noted but not clearly understood.
\end{abstract}

\maketitle

\textit{Introduction.}%
It often happens that a pivotal development in quantum magnetism is triggered by the discovery of a \textit{spin parity effect} (SPE), a phenomenon in which the behavior of a magnetic system sharply depends on the \textit{parity of twice the spin quantum number $S$}.
The Haldane conjecture on antiferromagnetic spin chains~\cite{Haldane1981_preprint, Haldane1983, Haldane2017_review}, the prime example of an SPE, asserts that a spectral gap exists between the ground state (GS) and excited states for integer $S$, whereas the corresponding spectrum is gapless for half-odd-integer $S$.
While this claim is now long established, the activity that ensued has since evolved into themes central to present day condensed matter physics.
An example with far-reaching consequences to quantum many-body systems is the no-go theorem of Lieb, Schultz, and Mattis (LSM), which in its original form prohibits the existence of a unique and featureless gapped GS in an $S = 1/2$ Heisenberg chain~\cite{Lieb1961}.
Among its extensions are those to general $S$~\cite{Kolb1985, Affleck1986, Oshikawa1997, Yamanaka1997, Koma2000, Tasaki2018, Tasaki2022, Ogata2021}, higher dimensions~\cite{Oshikawa2000, Hastings2004, Hastings2005, Hastings2010_arXiv, Nachtergaele2007, Bachmann2020, Yao2020, Yao2021, Yao2022_PRL, Yao2022_PRB, Tada2021}, various symmetries~\cite{Parameswaran2013, Watanabe2015, Watanabe2018, Po2017, Yao2020, Yao2021, Yao2022_PRL, Yao2022_PRB, Yao2023_arXiv}, and electron systems~\cite{Parameswaran2013, Watanabe2015, Watanabe2018, Lu2020}.
The quantum dynamics of solitons~\cite{Braun1996_PRB, Braun1996_JAP, Kodama2023} and skyrmions~\cite{Takashima2016} in chiral magnets hosting Dzyaloshinskii--Moriya interactions is another active research front where SPEs have recently been identified;
there, the soliton/skyrmion states were found to have spin-parity-dependent crystal momenta.

Given how SPEs continue to shed new light on quantum condensed matter, it is desirable to have a generic and comprehensive scheme with which to generate them.
The purpose of this Letter is to put forth just such a method.
Our approach incorporates \textit{anisotropic interactions} as its key element and works in any spatial dimension, which is to be compared with how SPEs, including those mentioned above, usually have dimension-specific origins.
We are also motivated by rapid theoretical~\cite{Altman2003} and experimental~\cite{Chung2021, deHond2022} progress in cold-atom physics that have come a long way toward implementing higher-$S$ quantum spin systems with strong anisotropy.

To best illustrate our strategy, we apply it to a one-dimensional (1D) quantum spin system which has the merit of (1) being amenable to detailed analysis and (2) exhibiting a clear SPE that manifests itself in raw finite-size numerical data.
Feasibility aside, this problem turns out to be interesting in its own rights:
The finite-size effect studied, while long known, has a topological significance (in the sense of Haldane~\cite{Haldane1981_JPhys}) that had gone unnoticed.
The anisotropy-induced SPE in this model is nontrivial in that it evades detection by LSM-type arguments.
Finally, it can be considered an immediate target for cold-atom implementations.
Generalizations to a far wider range of quantum magnets will be discussed afterwards.

\textit{Model and exact diagonalization.}%
The Hamiltonian of our choice is
\begin{equation}
 \hat{\mathcal{H}} = J \sum_{j=1}^{L} \hat{\bm{S}}_j \cdot \hat{\bm{S}}_{j+1} - H \sum_{j=1}^{L} \hat{S}_j^z + K \sum_{j=1}^{L} (\hat{S}_j^y)^2.
 \label{eq:Hamiltonian}
\end{equation}
The $J (> 0)$ and $H$ terms are the exchange and Zeeman interactions, respectively, while the $K (\geq 0)$ term is an easy-plane single-ion anisotropy.
$L$ stands for the number of sites.
We impose a periodic boundary condition, with the system always consisting of an even number of sites.

\begin{figure*}
 \includegraphics[width=\linewidth, pagebox=artbox]{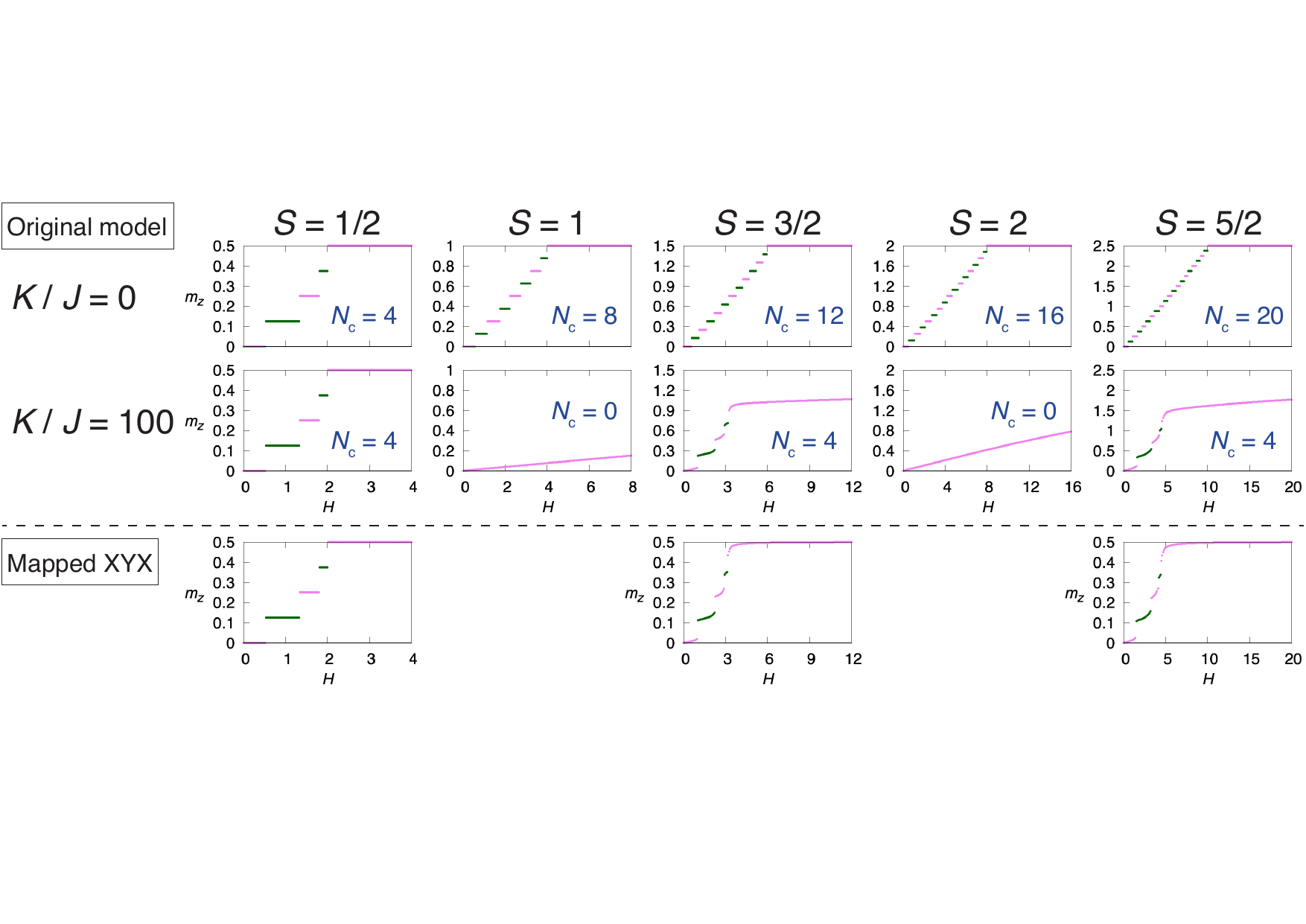}
 \caption{Magnetization $m_z$ vs magnetic field $H$ curves obtained by numerical calculations. $L$ is set to $8$ for all panels. The violet and green lines represent that the GS has crystal momenta $0$ and $\pi$, respectively. Top and middle panels: Results for zero and large anisotropy in the original model. Bottom panels: Results in the XYX model that are obtained by the mapping for large $K$ and half-odd-integer spins.}
 \label{fig:numerical_results}
\end{figure*}

We display in the first two rows of Fig.~\ref{fig:numerical_results} how the GS expectation value of the magnetization, $m_z := \frac{1}{L} \sum_{j=1}^{L} \braket{\hat{S}_j^z}_{\mathrm{GS}}$, evolves as a function of $H$.
Here, $J$ is set to unity.
The numerical exact diagonalization was performed for various values of $S$ with the use of QuSpin~\cite{Weinberg2017, Weinberg2019, Note1}.
\footnotetext{The source code used for exact diagonalization is available in \href{https://github.com/shuntarosumita/antiferro_spin_parity/tree/main/ed}{https://github.com/shuntarosumita/antiferro\_spin\_parity/tree/main/ed}.}%
Numerics for other parameter choices are given in the Supplemental Material (SM)~\cite{Note2}.
\footnotetext{See Supplemental Material, which include Refs.~\cite{Catalano2022, Schollwock1995, Tonegawa2011, Kurmann1982, Muller1985, Dmitriev2002, ITensor, ITensor-r0.3, Pollmann2010, Gioia2022, TanakaTotsukaHu2009, fradkin_book_2013, Wu_Yang_1976, Takayoshi_Totsuka_Tanaka2015, Haldane_PRL_1988, Sachdev2002, Affleck_merons_PRL_1986, Gu_Wen_PRB_2009, Fuji_PRL_2009, Sachdev_2023, Wen_2004} where calculational details and complete proofs, as well as a semiclassical account of the problem are provided.}%
The jumps in the magnetization are due to level crossings (LCs) between the GS and the first excited state; such features had been noted~\cite{Parkinson1985, Takahashi1991, Cabra1997, Arlego2003} but not fully understood.
We find that they are accompanied by the alternation in the GS's crystal momentum between two values, $0$ (violet) and $\pi$ (green).
Furthermore, a marked difference in behavior was found depending on the magnitude of $K$.
This is summarized in the tabular information below, where we indicate by $N_{\mathrm{c}}$ the number of LCs:
\begin{equation}
 \begin{array}{ccc}
  N_{\mathrm{c}} & \text{odd} \ 2S & \text{even} \ 2S \\ \hline
  K = 0, \text{small} \ K & LS & LS \\
  K \gg J & L/2 & 0 \\
 \end{array}
 \label{eq:level_crossings}
\end{equation}
When $K$ is small or zero the jumps are present irrespective of the spin parity, while in the strongly anisotropic regime $K \gg J$ they manifest themselves only when $S$ is a half-odd integer.
We discuss the two cases in turn.

\textit{$K = 0$\textup{:} XX model and Tomonaga--Luttinger liquid (TLL).}%
Our objective here is to explain the LCs with $N_{\mathrm{c}} = LS$ between the $0$- and $\pi$-momentum states for $K = 0$ (top panels in Fig.~\ref{fig:numerical_results}).
Before dealing with the full model Eq.~\eqref{eq:Hamiltonian}, it is instructive to warm up with the spin-$1/2$ XX model, $\hat{\mathcal{H}}_{\mathrm{XX}} = J \sum_{j=1}^{L} ( \hat{S}_j^x \hat{S}_{j+1}^x + \hat{S}_j^y \hat{S}_{j+1}^y ) - H \sum_{j=1}^{L} \hat{S}_j^z$, which allows for an intuitive understanding of the momentum switching.
The Jordan--Wigner (JW) transformation~\cite{Jordan-Wigner} maps this model into a noninteracting spinless fermion,
\begin{equation}
 \hat{\mathcal{H}}_{\mathrm{XX}} = \sum_{l=1}^{L} (J \cos k_l - H) \hat{a}_{k_l}^\dagger \hat{a}_{k_l} + \frac{LH}{2},
 \label{eq:XX_Hamiltonian_JW}
\end{equation}
where $\hat{a}_{k_l}$ is the annihilation operator of the JW fermion carrying momentum $k_l$.
The Hamiltonian commutes with the fermion number operator $\hat{N}_a := \sum_{l=1}^{L} \hat{a}_{k_l}^\dagger \hat{a}_{k_l}$, where the sum is taken over $k_l = (2l - 1) \pi / L$ when $N_a$ is even, while $k_l = 2 (l - 1) \pi / L$ for odd $N_a$.
Block-diagonalizing $\hat{\mathcal{H}}_{\mathrm{XX}}$ within eigensectors of $\hat{N}_a$, we show in Fig.~\ref{fig:XX_JW}(a) the energy eigenvalues $E_{N_a}(H)$ as a function of $H$, where plots for even (odd) $N_a$ are colored in violet (green).
As the magnetic field $H$ is ramped up the GS eigenvalue of $\hat{N}_a$ increases by unit increments which translates back to the number of up spins in the original model, $\hat{N}_a = \hat{S}_{\mathrm{tot}}^z + L/2$, where $\hat{S}_{\mathrm{tot}}^z = \sum_{j=1}^{L} \hat{S}_j^z$.
We depict the cosine term in Eq.~\eqref{eq:XX_Hamiltonian_JW} for $N_a = 4$ and $5$ in Figs.~\ref{fig:XX_JW}(b) and \ref{fig:XX_JW}(c), respectively.
Clearly the total momentum $k_{\mathrm{tot}} := \Braket{\sum_{l=1}^{L} k_l \hat{a}_{k_l}^\dagger \hat{a}_{k_l}} \pmod{2\pi}$ is zero for even $N_a$, and $\pi$ for odd $N_a$.
Returning now to the Heisenberg model Eq.~\eqref{eq:Hamiltonian} with $K = 0$, this momentum-counting argument no longer applies as the $\sum_{j=1}^{L} J \hat{S}_j^z \hat{S}_{j+1}^z$ term generates interactions between the JW fermions.
Using the Bethe ansatz, however, the total momentum $k_{\mathrm{tot}}$ can be shown to remain unaffected, despite the phase shift each momentum $k_l$ receives from particle scattering~\cite{Giamarchi_textbook, Note2}.
Whether the GS momentum is $0$ or $\pi$ is thus determined solely by the parity of the number of up spins.

\begin{figure}
 \includegraphics[width=\linewidth, pagebox=artbox]{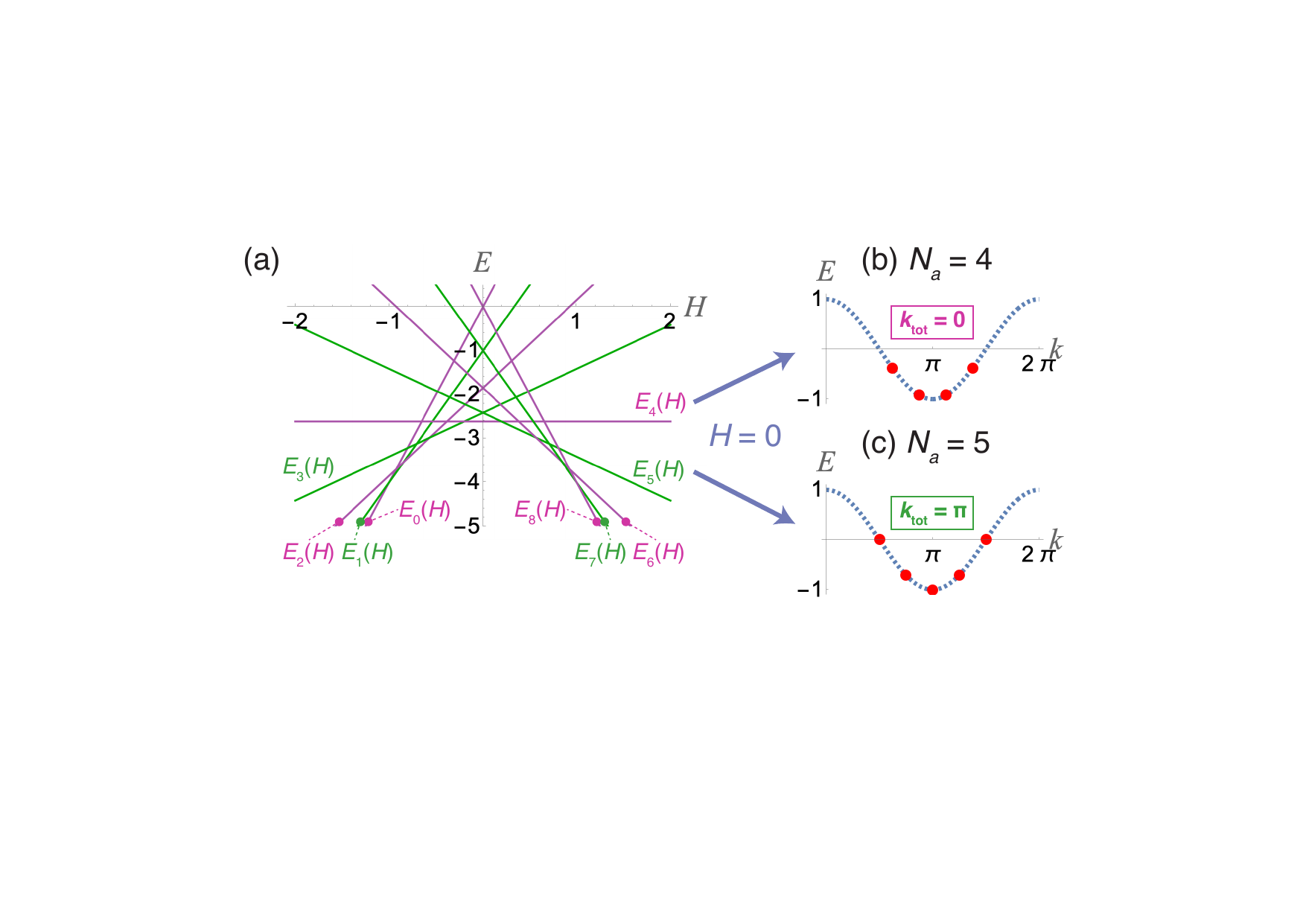}
 \caption{(a) Energy eigenvalues of the spin-$1/2$ XX chain with $L = 8$ for each $\hat{N}_a$ sector. Momentum--energy relations for $H = 0$ in the sectors with $N_a = 4$ and $5$ are shown in (b) and (c), respectively.}
 \label{fig:XX_JW}
\end{figure}

The preceding argument was limited to $S = 1/2$.
To generalize to arbitrary spin, we incorporate the powerful machinery of the TLL theory, known to correctly capture (for higher $S$ cases as well) the behavior of the gapless, linear dispersion of the $K = 0$ model, i.e., the Heisenberg model under an applied magnetic field.
In this framework, a low-energy excited state is fully characterized by a quartet of integer-valued quantum numbers: $\Delta N$ (charge excitation), $\Delta D$ (current excitation), and $N^\pm$ (right/left-moving particle-hole pair).
The corresponding energy eigenvalue and momentum are given by~\cite{Kawakami1992}
\begin{align}
 \Delta E &= \frac{2\pi v}{L} \left[ \frac{(\Delta N)^2}{4K_{\mathrm{L}} }+ K_{\mathrm{L}} (\Delta D)^2 + N^+ + N^- \right] + \mu \Delta N,
 \label{eq:TLL_excitation_e} \\
 \Delta k &= 2\pi\rho \Delta D + \frac{2\pi}{L} [\Delta N \Delta D + N^+ - N^-] + \pi \Delta N,
 \label{eq:TLL_excitation_p}
\end{align}
where $K_{\mathrm{L}}$, $v$, $\mu$, and $2\pi\rho$ are each the Luttinger liquid parameter, the excitation's velocity, the chemical potential, and the intrinsic momentum~\cite{Haldane1982_PRL, *[][{ (erratum).}]{Haldane1982_PRL_erratum}, Konik2002}.
The last term in Eq.~\eqref{eq:TLL_excitation_p} arises from antiferromagnetic correlations inherent to our lattice model; retaining it on taking the continuum limit is crucial.
The only quantum number coupling to a magnetic field is $\Delta N$; it then follows that the state $(\Delta N, \Delta D, N^+, N^-) = (1, 0, 0, 0)$, which amounts to a change in momenta by $\Delta k = \pi$, always becomes the next GS as the field is increased.
The quantity $\braket{\hat{S}_{\mathrm{tot}}^z}_{\mathrm{GS}}$, which is conserved when $K=0$, increases by one each time a crossover with $\Delta N = 1$ takes place.
Accordingly the value of $\braket{\hat{S}_{\mathrm{tot}}^z}_{\mathrm{GS}}$ undergoes the sequence: $0 \rightarrow 1 \rightarrow \dots \rightarrow LS$.
This is in agreement with the relation $N_{\mathrm{c}} = LS$ [Eq.~\eqref{eq:level_crossings}], which we extracted from the magnetization processes in the top panels of Fig.~\ref{fig:numerical_results}.

\begin{figure}
 \includegraphics[width=\linewidth, pagebox=artbox]{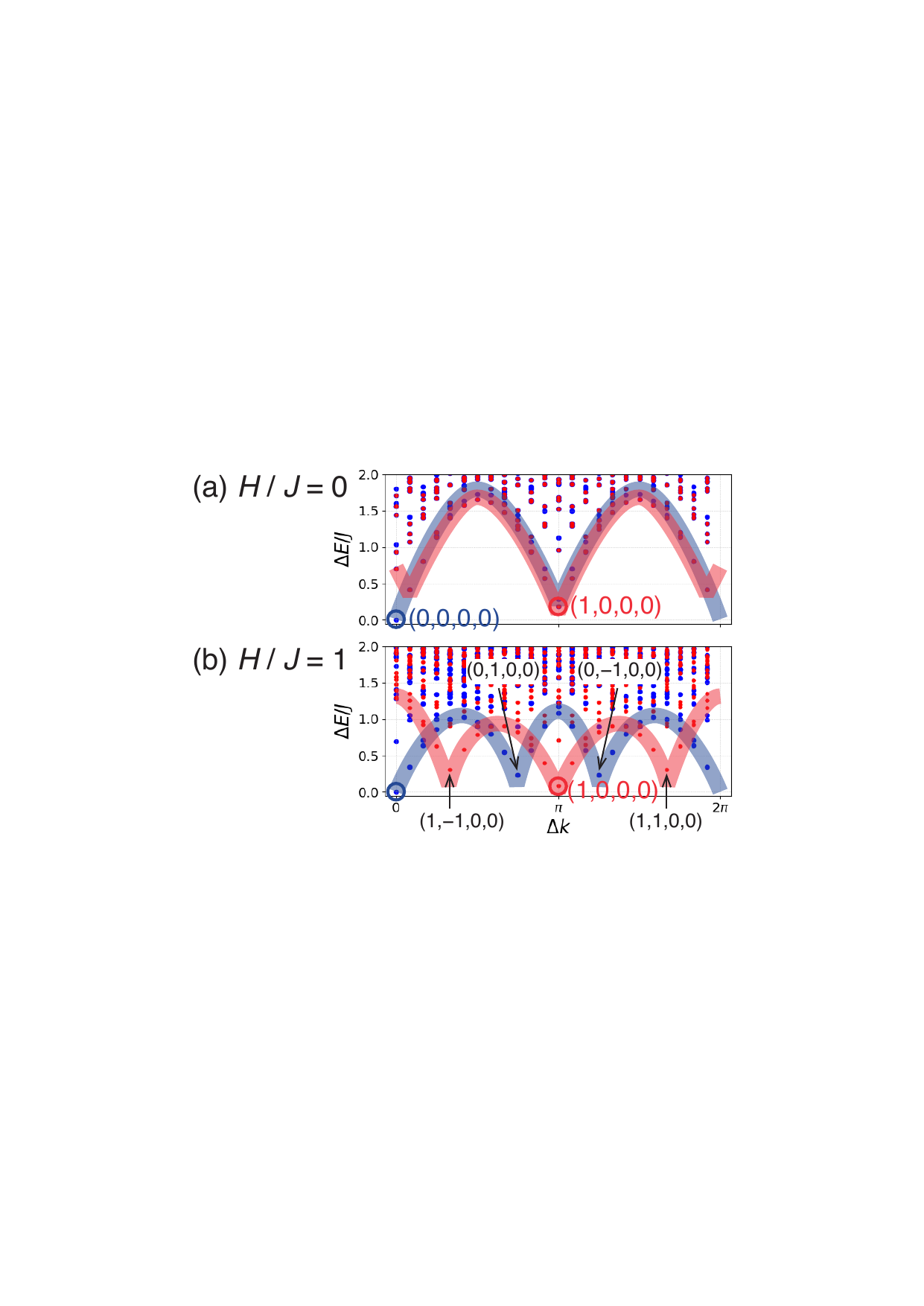}
 \caption{Energy dispersions of the spin-$1/2$ Heisenberg chain with $L = 24$ for (a) $H / J = 0$ and (b) $1$. In both panels, the GS is positioned at $(\Delta E, \Delta k) = (0, 0)$, and the other points show the spectrum of excited states. Each blue (red) point represents the spectrum in the $\Delta S_{\mathrm{tot}}^z = \Delta N = 0$ ($1$) sector. Sets of quantum numbers $(\Delta N, \Delta D, N^+, N^-)$ are shown for some characteristic points. The translucent lines are guides for the eyes.}
 \label{fig:Heisenberg_TLL}
\end{figure}

Figures~\ref{fig:Heisenberg_TLL}(a) and \ref{fig:Heisenberg_TLL}(b) show in full detail the energy dispersions of the spin-$1/2$ Heisenberg model, which we use to check the validity of the above picture.
States for which $\Delta S_{\mathrm{tot}}^z := \braket{\hat{S}_{\mathrm{tot}}^z} - \braket{\hat{S}_{\mathrm{tot}}^z}_{\mathrm{GS}} (= \Delta N)$ is $0$ and $1$ are each plotted as blue and red points.
The low-energy spectrum contains conformal towers, located at $k = 0, \pm 2\pi\rho$ for $\Delta S_{\mathrm{tot}}^z = 0$ (blue lines), and at $k = \pi, \pi \pm (2\pi\rho + 2\pi/L)$ for $\Delta S_{\mathrm{tot}}^z = 1$ (red lines).
The results are consistent with Eqs.~\eqref{eq:TLL_excitation_e} and \eqref{eq:TLL_excitation_p}; in particular, $\rho$ is equal to $1/2$ for the ``half-filling'' case $H = 0$ [cf. Fig.~\ref{fig:XX_JW}(b)].

\textit{$\mathbb{Z}_2$ symmetry and crystal momentum.}%
We proceed to nonzero $K$ cases, where $\hat{S}_{\mathrm{tot}}^z$ is not conserved.
We can still take advantage of a discrete $\mathbb{Z}_2 \times \mathbb{Z}_2^T$ symmetry of the Hamiltonian of Eq.~\eqref{eq:Hamiltonian}, where the unitary part is generated by $\hat{Z} := \bigotimes_{j=1}^{L} \ee^{\ii \pi (S - \hat{S}_j^z)}$, i.e., a $\pi$-rotation of all spins with respect to the $z$ axis.
Noting that $\hat{Z}^2 = 1$ for any $S$, the Hamiltonian can be block-diagonalized into sectors labeled by the $\mathbb{Z}_2$ values $\hat{Z} = \pm 1$: $\hat{\mathcal{H}} = \hat{\mathcal{H}}^{(+)} \oplus \hat{\mathcal{H}}^{(-)}$.
The following is true for arbitrary positive values of $K$:
\begin{theorem}
 \label{thm:Z2_momentum}
 When the GS of the Hamiltonian Eq.~\eqref{eq:Hamiltonian} is a simultaneous eigenstate with $\hat{Z} = +1 \ (-1)$, it has a crystal momentum $0 \ (\pi)$.
\end{theorem}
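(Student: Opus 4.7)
My plan is to perform a Marshall-type sublattice rotation to put $\hat{\mathcal{H}}$ in Perron--Frobenius form, and then track phases under translation so as to read off the crystal momentum directly from the $\hat{Z}$ eigenvalue. First, I introduce $\hat{V} := \prod_{j \text{ even}} \ee^{\ii\pi \hat{S}_j^z}$, a $\pi$-rotation about $\hat{z}$ on every even site. A short computation shows that $\hat{\mathcal{H}}' := \hat{V}\hat{\mathcal{H}}\hat{V}^{-1}$ flips the sign of the XY part of each bond while leaving $\hat{S}^z\hat{S}^z$, the Zeeman term, and $(\hat{S}^y)^2$ intact; in the $\hat{S}^z$ product basis the resulting off-diagonal entries---$-(J/2)(\hat{S}_j^+ \hat{S}_{j+1}^- + \mathrm{h.c.})$ from the XY part, and $-(K/4)[(\hat{S}_j^+)^2 + (\hat{S}_j^-)^2]$ from the anisotropy---are all non-positive, using $J, K \geq 0$ and the standard convention that matrix elements of $\hat{S}^\pm$ are non-negative. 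Since $\hat{V}$ is diagonal in this basis it commutes with $\hat{Z}$, so the $\mathbb{Z}_2$ symmetry survives. Within each $\hat{Z} = \epsilon$ sector the graph of non-zero off-diagonal couplings is connected---hops rearrange spins inside a fixed $\hat{S}_{\mathrm{tot}}^z$ subsector, and (for $S \geq 1$, $K > 0$) the $(\hat{S}_j^\pm)^2$ moves connect $\hat{S}_{\mathrm{tot}}^z$ subsectors of equal parity---so Perron--Frobenius yields a unique ground state $|\psi'_\epsilon\rangle = \sum_{\{m\}} c_{\{m\}} |\{m\}\rangle$ with strictly positive coefficients $c_{\{m\}} > 0$.

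Next I compute translation in the rotated frame. Writing $\hat{T}' := \hat{V}\hat{T}\hat{V}^{-1}$ and using $\hat{V}|\{m\}\rangle = \ee^{\ii\pi \sum_{j \text{ even}} m_j}|\{m\}\rangle$, a one-line calculation yields
\begin{equation*}
 \hat{T}'|\{m\}\rangle = \ee^{\ii\pi s_{\{m\}}}\,|\{m_{j-1}\}\rangle, \qquad s_{\{m\}} := \sum_j (-1)^{j+1} m_j.
\end{equation*}
The key identity is $\ee^{\ii\pi\hat{s}} = \hat{Z}$ as operators (with $L$ even): writing $\hat{s} + \hat{S}_{\mathrm{tot}}^z - LS = 2\hat{S}_{\mathrm{tot,odd}}^z - LS$, using $\ee^{2\pi\ii\hat{S}_j^z} = (-1)^{2S}$ per site, and observing that $((-1)^{2S})^{L/2}\ee^{-\ii\pi LS} = 1$ when $L$ is even makes the discrepancy cancel. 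Consequently $\ee^{\ii\pi s_{\{m\}}} = \epsilon$ on every basis state of the $\hat{Z} = \epsilon$ sector.

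Applying $\hat{T}'$ to $|\psi'_\epsilon\rangle$ and matching coefficients with $\ee^{\ii k}|\psi'_\epsilon\rangle$ then gives
\begin{equation*}
 \epsilon\, c_{\hat{T}^{-1}\{m\}} = \ee^{\ii k}\, c_{\{m\}} \qquad \text{for every } \{m\}.
\end{equation*}
Since $c_{\{m\}} > 0$ strictly, the ratio $c_{\hat{T}^{-1}\{m\}}/c_{\{m\}} = \ee^{\ii k}/\epsilon$ must be a positive real, and combined with $|\ee^{\ii k}|=1$ this forces $\ee^{\ii k} = \epsilon$, i.e., $k = 0$ when $\epsilon = +1$ and $k = \pi$ when $\epsilon = -1$, which is exactly the claim.

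The main obstacle I expect is verifying irreducibility of the off-diagonal graph inside each $\hat{Z}$ sector. The $S = 1/2$ case is a degenerate exception: $(\hat{S}^y)^2$ is then a constant, so $\hat{S}_{\mathrm{tot}}^z$ remains conserved and the $\hat{Z}$ sector splits further into $\hat{S}_{\mathrm{tot}}^z$ subsectors. The very same argument, however, applies unchanged inside each such subsector, where $\hat{Z}$ still takes a definite value $(-1)^{LS - \hat{S}_{\mathrm{tot}}^z}$ and the conclusion $\ee^{\ii k} = \epsilon$ continues to hold. A secondary technical point is the careful phase bookkeeping behind $\ee^{\ii\pi\hat{s}} = \hat{Z}$ for half-odd integer $S$, where $\ee^{\pm\ii\pi\hat{S}^z}$ differ by a sign and the hypothesis that $L$ is even is essential.
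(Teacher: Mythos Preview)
Your proof is correct and follows essentially the same route as the paper: a Marshall-type sublattice rephasing (your unitary $\hat{V}$ is the operator version of the paper's ``signed basis'' $|\widetilde{\bm{m}}\rangle := (-1)^{\sum_j j(S-m_j)}|\bm{m}\rangle$) to make the off-diagonal elements nonpositive, Perron--Frobenius within each $\hat{Z}$ sector to obtain a positive ground state, and then a phase computation under translation that reduces to the $\hat{Z}$ eigenvalue. The paper spells out irreducibility by an explicit algorithm built from the ``fundamental operators'' $\hat{P}_j^{\pm\mp}$, $\hat{Q}_j^{\pm}$, while you only sketch it, but your description of the mechanism (hops within fixed $\hat{S}^z_{\mathrm{tot}}$, $(\hat{S}_j^{\pm})^2$ linking parity-equal sectors, and the $S=1/2$ degeneration) matches the paper's content.
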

A sketch of the proof goes as follows;
the full details are given in the SM~\cite{Note2}.
Let $\ket{\bm{m}} := \ket{m_1 m_2 \dots m_L} \ (m_j = -S, -S+1, \dots, S)$ be the usual spin basis such that $\hat{S}_j^z \ket{\bm{m}} = m_j \ket{\bm{m}}$.
Introducing a \textit{signed basis} $\ket{\widetilde{\bm{m}}} := (-1)^{\delta(\bm{m})} \ket{\bm{m}}$ with $\delta(\bm{m}) := \sum_{j=1}^{L} j (S - m_j)$, one can show that (i) off-diagonal elements of the Hamiltonian are nonpositive, and (ii) both eigensectors with $\hat{Z} = \pm 1$ are irreducible.
From (i) and (ii), the Perron--Frobenius theorem~\cite{Perron1907, Frobenius1912} applied to the Hamiltonian block $\hat{\mathcal{H}}^{(\pm)}$ leads to the uniqueness of the GS in each eigensector given by $\Ket{\Psi_{\mathrm{GS}}^{(\pm)}} = \sum_{\bm{m} \in V^{(\pm)}} a(\bm{m}) \ket{\widetilde{\bm{m}}}$, where $a(\bm{m}) > 0$ and $V^{(\pm)} := \left\{\bm{m} \relmiddle| \hat{Z} \ket{\widetilde{\bm{m}}} = \pm \ket{\widetilde{\bm{m}}}\right\}$.
Further, the one-site translation operator $\hat{T}$ affects the sign of the basis: $(-1)^{\delta(\hat{T}(\bm{m}))} = (-1)^{\sum_{j} (S - m_j)} (-1)^{\delta(\bm{m})} = \hat{Z} (-1)^{\delta(\bm{m})}$.
Combining these, a little algebra shows that $\hat{T} \Ket{\Psi_{\mathrm{GS}}^{(\pm)}} = \pm \Ket{\Psi_{\mathrm{GS}}^{(\pm)}}$.

Theorem~\ref{thm:Z2_momentum} implies that the LCs between $0$- and $\pi$-momentum states are characterized by the parity switching of $\hat{S}_{\mathrm{tot}}^z$, although $\hat{S}_{\mathrm{tot}}^z$ itself is in general not conserved.
Moreover, as long as $\hat{Z}$ is preserved, the LCs can survive even when the translation symmetry is broken.

\textit{Large $K$\textup{:} perturbation theory.}%
We now address the SPE arising at large $K$ (middle panels in Fig.~\ref{fig:numerical_results}).
This can be understood in terms of a perturbation theory applicable for $K \gg J, H$.
The nonperturbative Hamiltonian is just $\hat{\mathcal{H}}_0 = K \sum_{j=1}^{L} (\hat{S}_j^y)^2$, which reduces to a single-site problem:
Remembering that $K$ is positive, the GS of $\hat{\mathcal{H}}_0$ is 
just a product state of doublets $\ket{S_j^y = \pm \frac{1}{2}}$ residing on each site when $S$ is a half-odd integer.
Meanwhile a unique GS $\ket{S_j^y = 0}$ is formed for integer $S$ (Fig.~\ref{fig:K_model_energy}).
Since the magnitude of the energy gap from the GS is of order $O(K)$, excited states are negligible in discussing the large-$K$ physics.
In other words, at low energies our model in the large-$K$ limit translates into effective spin-$0$ and spin-$1/2$ systems for integer and half-odd-integer $S$ cases, respectively~\cite{Note3}.
\footnotetext{In Ref.~\cite{Oshikawa1997} this observation was applied to the magnetization plateau problem for an $S = 3/2$ spin chain in a longitudinal magnetic field.}%
In the former case, the GS is unique and trivial even when the perturbation $\hat{\mathcal{H}} - \hat{\mathcal{H}}_0$ is switched on, as long as $K \gg J, H$ is satisfied, explaining the absence of LCs.
This GS can be identified with the so-called large-$D$ phase, a gapped phase known to be topologically trivial~\cite{Schulz1986, denNijs1989, Chen2003, Pollmann2012, Huang2021}.

\begin{figure}
 \includegraphics[width=.9\linewidth, pagebox=artbox]{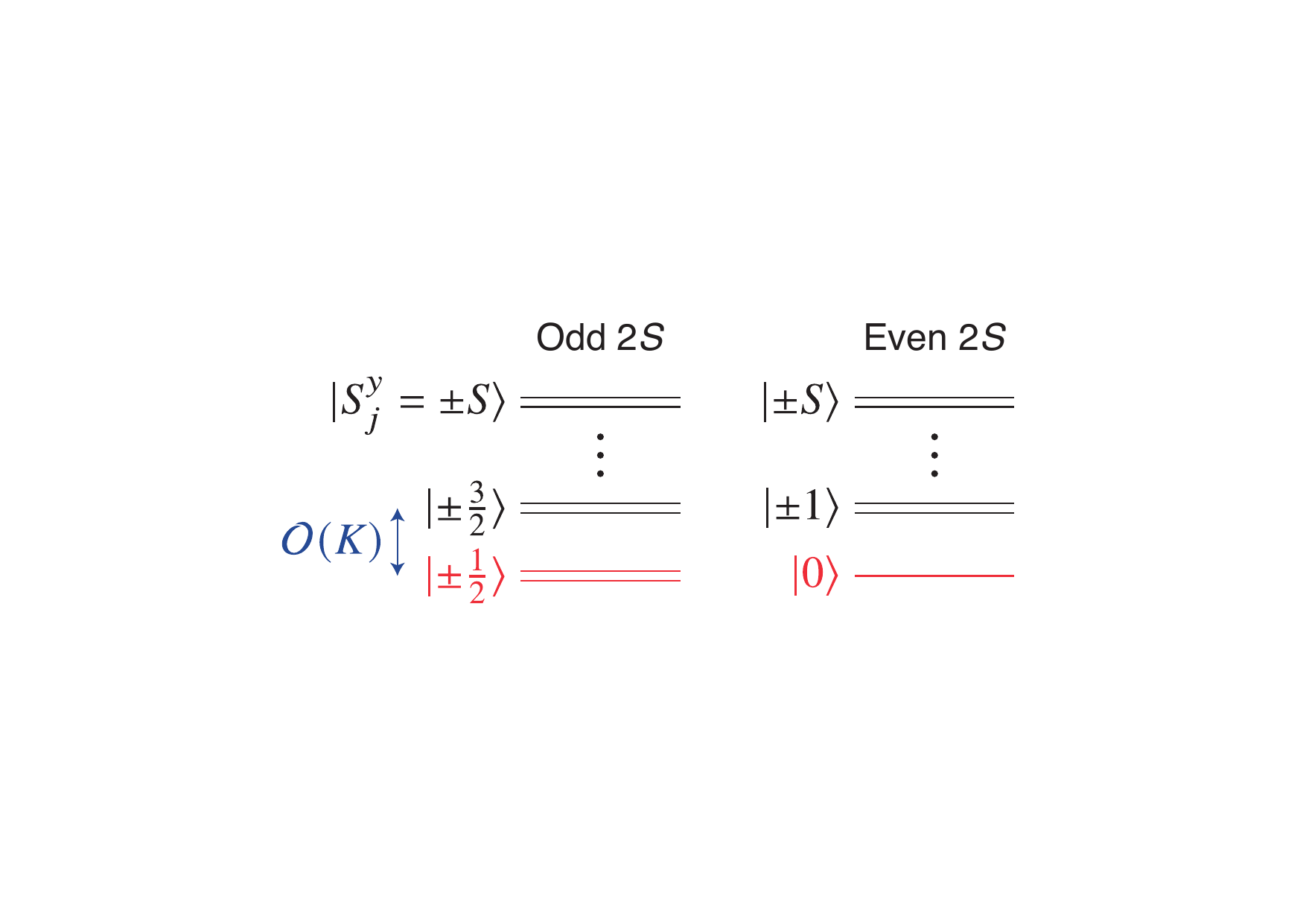}
 \caption{Energy levels of the $K$ model on the $j$th site. The GS degeneracy is two (one) in odd-$2S$ (even-$2S$) systems.}
 \label{fig:K_model_energy}
\end{figure}

The situation is quite different in half-odd-integer spin cases owing to the GS degeneracy of the nonperturbative Hamiltonian.
Working within a first-order perturbation scheme, we derive the following effective spin-$1/2$ XYX model,
\begin{equation}
 \hat{\mathcal{H}}_{\mathrm{map}} = \tilde{J} \sum_{j=1}^{L} (\hat{s}_j^x \hat{s}_{j+1}^x + \Delta \hat{s}_j^y \hat{s}_{j+1}^y + \hat{s}_j^z \hat{s}_{j+1}^z) - \tilde{H} \sum_{j=1}^{L} \hat{s}_j^z,
 \label{eq:Hamiltonian_map}
\end{equation}
where $\hat{\bm{s}}_j$ is the effective spin-$1/2$ operator on the $j$th site; see the SM for derivations~\cite{Note2}.
The coefficients are defined as $\tilde{J} := J / \Delta$, $\tilde{H} := H / \sqrt{\Delta}$ with $\Delta := (S + 1/2)^{-2}$.
Notice that the single-ion anisotropy of the original Hamiltonian has effectively transformed here into an anisotropy of the exchange interaction.
Numerical calculations for $\hat{\mathcal{H}}_{\mathrm{map}}$ are shown in the bottom panels of Fig.~\ref{fig:numerical_results}, where we find a good agreement between the obtained magnetization curve in the low-field regime with those of the original model (middle panels of Fig.~\ref{fig:numerical_results}).
This reduction to effective spin-$1/2$ systems naturally explains the entry $N_{\mathrm{c}} = L/2$ in Eq.~\eqref{eq:level_crossings}.

\textit{Generalization to higher dimensions and implementation.}%
Crucially one notices that anisotropy-induced SPEs are \textit{not} specific to our model, Eq.~\eqref{eq:Hamiltonian}:
In the presence of a dominant $K$ term, the spectrum \textit{always} reduces to Fig.~\ref{fig:K_model_energy} in the large-$K$ limit, whatever the spatial dimension $D$, the lattice geometry, or the Hamiltonian itself are.
Our perturbative scheme applies therefore to a much wider variety of spin systems.

Consider how this works for nearest-neighbor Heisenberg models on $D$-dimensional hypercubic lattices in the absence of external fields.
As in $D = 1$, the spectral gap $\sim O(K)$ remains robust for integer spins, while the low-energy physics of the half-integer spin systems are represented by effective spin-$1/2$ XYX models.
The latter spontaneously breaks the spin rotation symmetry around the $y$ axis, accompanied by a gapless Nambu--Goldstone mode.

Though new features arise, this SPE persists when terms which disrupt spin ordering (e.g., frustrated exchange and four-body interactions) are added.
A trivially gapped state continues to form for integer spins, whereas the effective spin-$1/2$ model for the half-odd-integer spin case can now also sustain a degenerate spin-Peierls state~\cite{Majumdar1969, Haldane1982_PRB, *[][{ (erratum).}]{Haldane1982_PRB_erratum}, Read1989, Read1990, Sachdev2002} as well as a topologically ordered phase (see the SM for digressions~\cite{Note2}).
Somewhat surprisingly, this dichotomy arises even under conditions where the LSM theorem~\cite{Lieb1961, Kolb1985, Affleck1986, Oshikawa1997, Yamanaka1997, Koma2000, Tasaki2018, Tasaki2022, Ogata2021, Oshikawa2000, Hastings2004, Hastings2005, Hastings2010_arXiv, Nachtergaele2007, Bachmann2020, Yao2020, Yao2021, Yao2022_PRL, Yao2022_PRB, Tada2021, Parameswaran2013, Watanabe2015, Watanabe2018, Po2017, Yao2023_arXiv, Lu2020} predicts no SPE.
\textit{The foregoing discussion on Eq.~\eqref{eq:Hamiltonian}, a 1D model with only an on-site $\mathbb{Z}_2 \times \mathbb{Z}_2^T$ symmetry, in fact serves as an example of this highly nontrivial case.}
A dominant $K$ term thus constitutes a simple but powerful point of departure for creating SPEs.

Finally, we comment on the implications for experiments. Higher-$S$ systems of finite size are becoming accessible with recent cold-atom technologies.
Indeed, the 1D Hamiltonian Eq.~\eqref{eq:Hamiltonian}, at least for $S = 1$, has recently been implemented in ultracold ${}^{87}$Rb atoms in an optical lattice, where the single-ion anisotropy is controllable~\cite{Chung2021, deHond2022}.
Although ferromagnetic correlations were realized in these experiments, it is possible to make them antiferromagnetic by using negative temperature states~\cite{Braun2013, Chen2023_coldatom}.
There are also reasons to anticipate the relevance of our approach to magnetic materials as well.
Notably, our numerical results show that SPEs persist even when $K$ and $J$ are of the same order (see Fig.~S1 in SM~\cite{Note2}), a regime better suited for materials search.
Moreover, we expect traces of the SPE to be detectable in the thermodynamic limit, to which actual materials correspond:
Half-odd-integer spin systems would exhibit a small hysteresis in the magnetization curve due to the dense LCs, as observed in a classical chiral magnet~\cite{Tsuruta2016}.
This is not the case with integer spin systems.

\textit{Summary.}%
We analyzed an antiferromagnetic spin chain with easy-plane anisotropy, focusing on LCs between states with momenta $0$ and $\pi$ which alternately appear as a transverse magnetic field is increased.
At zero or small anisotropy the LCs were accounted for using TLL theory and exact symmetry arguments.
At strong anisotropy the LCs occur only for half-odd-integer spins.
We showed how the latter is a realization of a very generic anisotropy-induced SPE, whose implementation is likely within reach of cold-atom technologies and magnetic materials science.

\begin{acknowledgments}
 The authors are grateful to Sohei Kodama for earlier collaboration on related topics, and Shunsuke C. Furuya, Akira Furusaki, and Masaya Kunimi for informative discussions.
 This work was supported by JSPS KAKENHI Grants No.~JP23K13056, No.~JP23K03333, No.~JP20K03855, No.~JP21H01032, and No.~JP19K03662, and JST CREST Grant No.~JPMJCR19T2.
 The computation in this work was done using the facilities of the Supercomputer Center, the Institute for Solid State Physics, the University of Tokyo.
\end{acknowledgments}

\clearpage

\renewcommand{\thesection}{S\arabic{section}}
\renewcommand{\theequation}{S\arabic{equation}}
\setcounter{equation}{0}
\renewcommand{\thefigure}{S\arabic{figure}}
\setcounter{figure}{0}
\renewcommand{\thetable}{S\arabic{table}}
\setcounter{table}{0}
\setcounter{theorem}{0}
\makeatletter
\c@secnumdepth = 2
\makeatother

\onecolumngrid

\begin{center}
 {\large \textmd{Supplemental Materials:} \\[0.3em]
 {\bfseries Anisotropy-Induced Spin Parity Effects}} \\[1em]
 Shuntaro Sumita, Akihiro Tanaka, and Yusuke Kato
\end{center}

\setcounter{page}{1}
\section{Numerical results for other values of $K$ and $L$}
\begin{figure}[b]
 \includegraphics[width=\linewidth, pagebox=artbox]{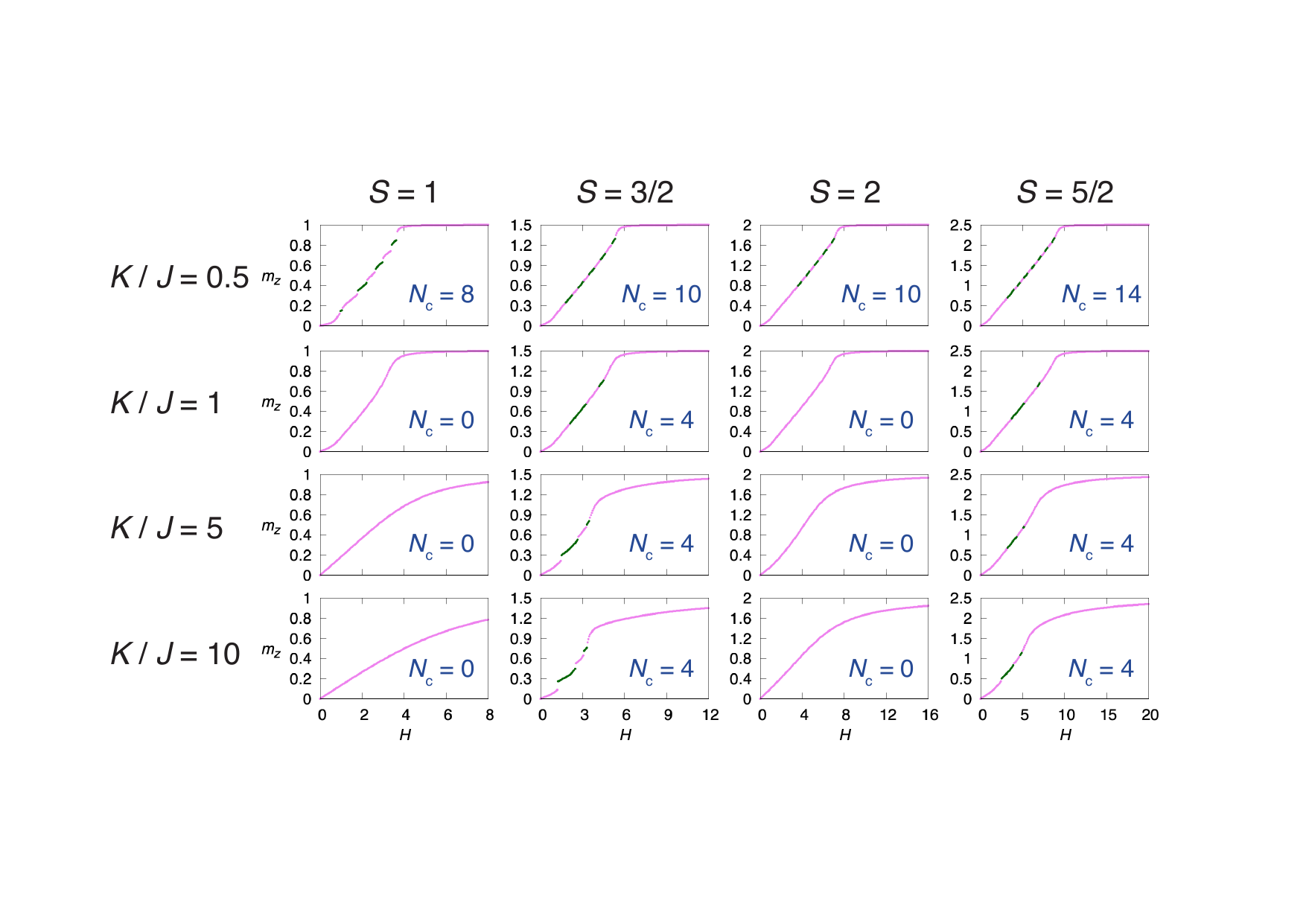}
 \caption{Magnetization $m_z$ vs magnetic field $H$ curves for $K / J = 0.5$, $1$, $5$, and $10$ obtained by exact diagonalization of the Heisenberg model [Eq.~(1) of the main text]. $L$ is set to $8$ for all panels. The violet and green lines represent that the ground state has crystal momenta $0$ and $\pi$, respectively.}
 \label{fig:numerical_results_supple_K}
\end{figure}

In Fig.~1 in the main text, the magnetization curves of a Heisenberg chain with a single-ion anisotropy under a transverse magnetic field for $K/J = 0$ and $100$ are exhibited.
As additional information, in Fig.~\ref{fig:numerical_results_supple_K}, we show numerical results for other anisotropy parameters: $K/J = 0.5$, $1$, $5$, and $10$.
For $K/J \geq 1$, the number of level crossings $N_{\mathrm{c}}$ is $L/2$ and $0$ for half-odd-integer and integer spins, respectively.
Therefore, the anisotropy-induced spin parity effects start to occur when the anisotropy $K$ is comparable to $J$.
With the mapping to the effective model (or an alternative small-parameter expansion) unavailable, it is difficult to extract a general mechanism/reason behind the appearance of the spin parity effect in the intermediate region.
We here thus opt to numerically investigate the critical value of $K$ at zero magnetic field.
According to earlier studies on the $J$--$K$ phase diagrams for $S = 1$ and $2$ Heisenberg chains, the transition from the Haldane/XY to the large-$K$ ($D$) phase occurs at around $K/J \approx 1$ and $3$, respectively~\cite{Chen2003_SM, Schollwock1995_SM, Tonegawa2011_SM}.
When $K/J$ exceeds its value at the transition point, one expects the behavior of the ground state to be qualitatively the same as that in the large-$K$ limit.
Unfortunately, it was difficult to actually check the expected difference in the critical value between $S = 1$ and $2$ in our exact diagonalization, since the spectral gap due to the finite size $L \approx 10$ turns out to be larger than the excitation gap in the thermodynamic limit.

\begin{figure}[tbp]
 \includegraphics[width=\linewidth, pagebox=artbox]{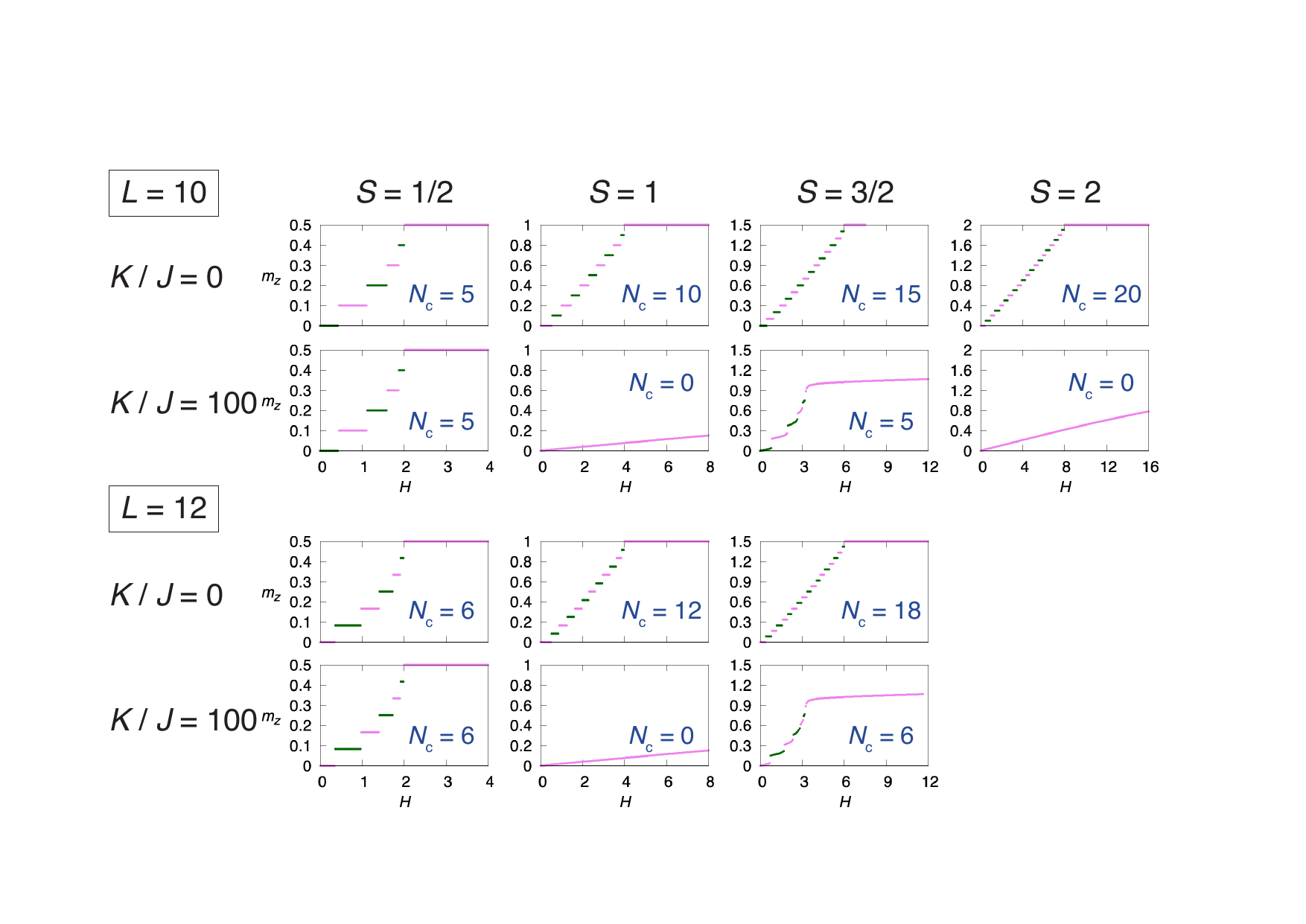}
 \caption{Magnetization $m_z$ vs magnetic field $H$ curves for $L = 10$ and $12$ obtained by exact diagonalization of the Heisenberg model [Eq.~(1) of the main text].}
 \label{fig:numerical_results_supple_L}
\end{figure}

In Fig.~\ref{fig:numerical_results_supple_L}, we also show the system size dependence of the magnetization curve.
For both sizes $L = 10$ and $12$, the number of level crossings $N_{\mathrm{c}}$ satisfies the relations in Eq.~(2) in the main text.
Therefore, our proposal of the anisotropy-induced spin parity effects would be robust against the change in the system size.

While this study focuses solely on the situation where the number of sites $L$ is even, let us briefly remark on odd-$L$ cases.
For example, Catalano \textit{et al.} discussed the behavior of a spin-$1/2$ XYZ chain for odd $L$s and revealed level crossings between different ground-state momenta~\cite{Catalano2022_SM}, which are similar to our results.
The new aspect for odd $L$ is that various finite-momentum ground states appear because the momentum $k = \pi$ is ill-defined.
Therefore, we expect a similar phenomenon to occur in our Heisenberg model.

\section{Total momentum of spin-$1/2$ XXZ model from Bethe ansatz}
We here show that the total momentum of the ground state is always $0$ or $\pi$ in a one-dimensional $S = 1/2$ XXZ model under a longitudinal magnetic field,
\begin{equation}
 \hat{\mathcal{H}}_{\mathrm{XXZ}} = J \sum_{j=1}^{L} (\hat{S}_j^x \hat{S}_{j+1}^x + \hat{S}_j^y \hat{S}_{j+1}^y + \Delta \hat{S}_j^z \hat{S}_{j+1}^z) - H \sum_{j=1}^{L} \hat{S}_j^z,
\end{equation}
where $J > 0$, $|\Delta| \leq 1$, and $L$ is even.
The ground state energy of the Hamiltonian is obtained by using the Bethe ansatz, which is given by summations from free particles~\cite{Giamarchi_textbook_SM},
\begin{equation}
 E = J \sum_{l=1}^{N_a} \cos k_l - H \left(\frac{L}{2} - N_a\right) + J\Delta \left(\frac{L}{4} - N_a\right),
\end{equation}
where $N_a$ represents the number of up spins.
The momentum $k_l$ is given as a solution of the Bethe equation taking the phase shift into account:
\begin{align}
 L k_l &= 2\pi I_l + \sum_{l'=1}^{N_a} \Theta(k_l, k_{l'}) \quad (l = 1, 2, \dots, N_a),
 \label{eq:Bethe_equation} \\
 \Theta(k, k') &= 2 \arctan\left\{ \frac{\Delta \sin[(k - k')/2]}{\Delta \cos[(k - k')/2] - \cos[(k + k')/2]} \right\},
 \label{eq:phase_shift}
\end{align}
where $I_1 < I_2 < \dots < I_{N_a}$ are integers (half-odd integers) when $N_a$ is odd (even).
Assuming that $I_1, \dots, I_{N_a}$ correspond to the equally spaced momenta in the $S = 1/2$ XX model (see the main text), they are given by
\begin{equation}
 I_l = \frac{L}{2} + \frac{2l - N_a - 1}{2} \quad (l = 1, 2, \dots, N_a),
 \label{eq:I_l}
\end{equation}
for both parity of $N_a$.
From Eq.~\eqref{eq:phase_shift}, furthermore, we can easily show that the phase shift satisfies the following equation:
\begin{equation}
 \Theta(k, k') = - \Theta(-k, -k').
\end{equation}
When $k_l$ is a solution of the Bethe equation \eqref{eq:Bethe_equation}, $-k_l$ is also a solution, as a consequence of the inversion symmetry of the Hamiltonian $\hat{H}_{\mathrm{XXZ}}$.
Therefore,
\begin{equation}
 \sum_{l=1}^{N_a} \sum_{l'=1}^{N_a} \Theta(k_l, k_{l'}) = 0.
 \label{eq:Theta_zero}
\end{equation}
Equations~\eqref{eq:I_l} and \eqref{eq:Theta_zero} result in the following property of the total momentum:
\begin{equation}
 k_{\mathrm{tot}} = \sum_{l=1}^{N_a} k_l
 = \frac{2\pi}{L} \sum_{l=1}^{N_a} I_l + \frac{1}{L} \sum_{l=1}^{N_a} \sum_{l'=1}^{N_a} \Theta(k_l, k_{l'}) \equiv
 \begin{cases}
  0 & N_a = \text{even}, \\
  \pi & N_a = \text{odd},
 \end{cases}
\end{equation}
where $k \equiv k'$ means that $k$ is equivalent to $k'$ modulo a reciprocal lattice vector.

\section{Relation between $\mathbb{Z}_2$ rotation symmetry and crystal momentum}
In this section, we show the detailed proof of the following theorem on the one-dimensional antiferromagnetic chain with a transverse magnetic field and easy-plane anisotropy [see Eqs.~\eqref{eq:Hamiltonian_SM} and \eqref{eq:Hamiltonian_map_SM}].
\begin{theorem}
 \label{thm:Z2_momentum_SM}
 Let $L$ be an even number of sites.
 Let $\hat{Z} = \bigotimes_{j=1}^{L} \hat{Z}_j$ be $\pi$-rotation symmetry of spins with respect to the $z$ axis such that $(\hat{Z}_j)^2 = 1$ (the precise definition of $\hat{Z}$ will be shown later).
 If the ground state is a simultaneous eigenstate of $\hat{Z}$ with a nontrivial eigenvalue $-1$, the state has a crystal momentum $\pi$; otherwise, the crystal momentum of the ground state is zero.
\end{theorem}

\subsection{The original model}
\label{sec:Z2_original}
First we consider the original model,
\begin{equation}
 \hat{\mathcal{H}} = J \sum_{j=1}^{L} \hat{\bm{S}}_j \cdot \hat{\bm{S}}_{j+1} + K \sum_{j=1}^{L} (\hat{S}_j^y)^2 - H \sum_{j=1}^{L} \hat{S}_j^z.
 \label{eq:Hamiltonian_SM}
\end{equation}
In the following, we assume the parameter regime with $J, K > 0$, $H \neq 0$, and $S \neq 1/2$, in which the total spin $\hat{S}_{\mathrm{tot}}^z = \sum_{j=1}^{L} \hat{S}_j^z$ is not conserved but the $\pi$-rotation symmetry $\hat{Z} := \bigotimes_{j=1}^{L} \ee^{\ii \pi (S - \hat{S}_j^z)}$ exists.
As a prerequisite for proving Theorem~\ref{thm:Z2_momentum_SM}, let us introduce the following Definition and two Propositions.
\begin{definition}[signed basis]
 Let $\ket{\bm{m}} := \ket{m_1 m_2 \dots m_L} \ (m_j = -S, -S+1, \dots, S)$ be a conventional basis of the Hilbert space that is an eigenstate of $\hat{S}_j^z$:
 \begin{equation}
  \hat{S}_j^z \ket{\bm{m}} = m_j \ket{\bm{m}}, \quad
  \hat{S}_j^{\pm} \ket{\bm{m}} = \sqrt{S (S + 1) - m_j (m_j \pm 1)} \ket{m_1, \dots, m_{j-1}, m_j \pm 1, m_{j+1}, \dots, m_L}
  =: c_{\pm}(m_j) \ket{\bar{\bm{m}}^{(j, \pm 1)}},
 \end{equation}
 for all $j \in \{1, 2, \dots, L\}$.
 Then we define a new set of basis called a \textit{signed basis}, which is given by $\ket{\tilde{\bm{m}}} := (-1)^{\delta(\bm{m})} \ket{\bm{m}}$ with
 \begin{equation}
  \delta(\bm{m}) := \sum_{j=1}^{L} j (S - m_j).
  \label{eq:sign}
 \end{equation}
\end{definition}

\begin{proposition}[off-diagonal matrix element]
 \label{prop:off-diagonal_original}
 All the off-diagonal elements of $\hat{\mathcal{H}}$ in the signed basis are nonpositive\textup{:}
 \begin{equation}
  \braket{\tilde{\bm{m}} | \hat{\mathcal{H}} | \tilde{\bm{m}}'} \leq 0,
  \ \forall \bm{m}, \bm{m}' \ \text{s.t.} \ \bm{m} \neq \bm{m}'.
  \label{eq:H_nonpositive_original}
 \end{equation}
\end{proposition}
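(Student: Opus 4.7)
The plan is to note that $\braket{\tilde{\bm{m}}|\hat{\mathcal{H}}|\tilde{\bm{m}}'} = (-1)^{\delta(\bm{m})-\delta(\bm{m}')}\braket{\bm{m}|\hat{\mathcal{H}}|\bm{m}'}$ and to check, term by term, that the sign factor has the right sign to render the product $\leq 0$. First I would isolate the genuinely off-diagonal pieces of $\hat{\mathcal{H}}$ in the $\{\ket{\bm{m}}\}$ basis. The Zeeman piece $-H\sum_j \hat{S}_j^z$ and the Ising piece $J\hat{S}_j^z\hat{S}_{j+1}^z$ are diagonal. Using $(\hat{S}_j^y)^2 = \frac{1}{2}[\hat{\bm{S}}_j^2 - (\hat{S}_j^z)^2] - \frac{1}{4}[(\hat{S}_j^+)^2 + (\hat{S}_j^-)^2]$, the first bracket of the $K$ term is also diagonal. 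Thus only two classes of processes connect distinct basis states: the exchange flip-flop $\frac{J}{2}(\hat{S}_j^+\hat{S}_{j+1}^- + \hat{S}_j^-\hat{S}_{j+1}^+)$ and the on-site double flip $-\frac{K}{4}[(\hat{S}_j^+)^2+(\hat{S}_j^-)^2]$.

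For the flip-flop, a transition $\bm{m}\leftrightarrow\bm{m}'$ satisfies $m_k-m_k'=0$ except $(m_j-m_j',m_{j+1}-m_{j+1}')=(\pm 1,\mp 1)$. From $\delta(\bm{m})-\delta(\bm{m}') = \sum_k k(m_k'-m_k)$, the exponent equals $\mp 1$ for a bulk bond and $\mp(1-L)$ across the periodic bond. Because $L$ is even, $(-1)^{\delta(\bm{m})-\delta(\bm{m}')}=-1$ in both subcases, and multiplying by the nonnegative factor $\frac{J}{2}c_\pm(m_j')c_\mp(m_{j+1}')$ yields an entry $\leq 0$. For the single-ion off-diagonal piece, the transition requires $m_j-m_j'=\pm 2$, so $\delta(\bm{m})-\delta(\bm{m}')=\mp 2j$ is even and the sign factor is $+1$; combined with the already negative coefficient $-K/4$ and the nonnegative product $c_\pm(m_j')c_\pm(m_j'\pm 1)$, the signed-basis entry is again $\leq 0$. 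Since the two classes of off-diagonal terms never connect the same pair $(\bm{m},\bm{m}')$ — one changes two neighboring $m$'s by $\pm 1$, the other changes a single $m$ by $\pm 2$ — no cancellation between contributions can spoil the sign, so the claimed inequality follows entry by entry.

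The one step that actually uses the hypotheses of the setting is the boundary bond $j=L$ for the flip-flop: were $L$ odd, that lone matrix element would pick up $+1$ instead of $-1$, leaving a strictly positive off-diagonal entry and breaking the subsequent Perron--Frobenius argument. So the only place the proof is delicate is in verifying that the parity of $L$ enters precisely where assumed; this is also a sanity check that the linear weighting $\delta(\bm{m})=\sum_j j(S-m_j)$ in the definition of the signed basis is tuned to absorb exactly one minus sign per unit of spin transported between neighbors, in the same spirit as the Jordan--Wigner string underlying the $S=1/2$ analysis recalled earlier in the Letter.
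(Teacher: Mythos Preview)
Your proof is correct and follows essentially the same route as the paper's: decompose $\hat{\mathcal{H}}$ into diagonal pieces plus the exchange flip-flop $\frac{J}{2}(\hat S_j^+\hat S_{j+1}^- + \text{h.c.})$ and the on-site double flip $-\frac{K}{4}[(\hat S_j^+)^2+(\hat S_j^-)^2]$, then track how $\delta(\bm m)$ changes under each and conclude nonpositivity. Your explicit treatment of the boundary bond $j=L$ (where the exponent is $\pm(1-L)$ and hence still odd because $L$ is even) is a nice touch that the paper leaves implicit; note, however, that your remark about ``no cancellation'' is superfluous, since once both classes of off-diagonal contributions are individually $\le 0$ their sum is automatically $\le 0$ regardless of whether they ever connect the same pair.
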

\begin{proof}
 The Hamiltonian Eq.~\eqref{eq:Hamiltonian_SM} is rewritten by using the ladder operators $\hat{S}_j^{\pm} = \hat{S}_j^x \pm \ii \hat{S}_j^y$:
 \begin{equation}
  \hat{\mathcal{H}}
  = \sum_{j=1}^{L} \left\{ J \left[ \frac{1}{2} (\hat{S}_j^+ \hat{S}_{j+1}^- + \hat{S}_j^- \hat{S}_{j+1}^+) + \hat{S}_j^z \hat{S}_{j+1}^z \right] - \frac{K}{4} \left[ (\hat{S}_j^+)^2 + (\hat{S}_j^-)^2 - \hat{S}_j^+ \hat{S}_j^- - \hat{S}_j^- \hat{S}_j^+ \right] - H \hat{S}_j^z \right\}
  =: \sum_{j=1}^{L} \hat{h}_j.
 \end{equation}
 The local operator $\hat{h}_j$ acts on the conventional basis as
 \begin{align}
  \hat{h}_j \ket{\bm{m}}
  &= \frac{J}{2} \left[ c_+(m_j) c_-(m_{j+1}) \ket{\bar{\bm{m}}^{(j, +1; j+1, -1)}} + c_-(m_j) c_+(m_{j+1}) \ket{\bar{\bm{m}}^{(j, -1; j+1, +1)}} \right] \notag \\
  &\quad - \frac{K}{4} \left[ c_+(m_j + 1) c_+(m_j) \ket{\bar{\bm{m}}^{(j, +2)}} + c_-(m_j - 1) c_-(m_j) \ket{\bar{\bm{m}}^{(j, -2)}} \right] \notag \\
  &\quad + \left\{ \frac{K}{4} \left[ c_+(m_j - 1) c_-(m_j) + c_-(m_j + 1) c_+(m_j) \right] + (J m_{j+1} - H) m_j \right\} \ket{\bm{m}}.
 \end{align}
 Here, using Eq.~\eqref{eq:sign}, we can obtain the following relations:
 \begin{align}
  (-1)^{\delta(\bar{\bm{m}}^{(j, \pm 1; j+1, \mp 1)})} &= (-1)^{\delta(\bm{m}) \pm 1} = - (-1)^{\delta(\bm{m})}, \\
  (-1)^{\delta(\bar{\bm{m}}^{(j, \pm 2)})} &= (-1)^{\delta(\bm{m}) \mp 2j} = (-1)^{\delta(\bm{m})}.
 \end{align}
 Therefore, the local Hamiltonian acts on the signed basis as
 \begin{align}
  \hat{h}_j \ket{\tilde{\bm{m}}}
  &= \textcolor{red}{-} \frac{J}{2} \left[ c_+(m_j) c_-(m_{j+1}) \ket{\tilde{\bar{\bm{m}}^{(j, +1; j+1, -1)}}} + c_-(m_j) c_+(m_{j+1}) \ket{\tilde{\bar{\bm{m}}^{(j, -1; j+1, +1)}}} \right] \notag \\
  &\quad - \frac{K}{4} \left[ c_+(m_j + 1) c_+(m_j) \ket{\tilde{\bar{\bm{m}}^{(j, +2)}}} + c_-(m_j - 1) c_-(m_j) \ket{\tilde{\bar{\bm{m}}^{(j, -2)}}} \right] \notag \\
  &\quad + \left\{ \frac{K}{4} \left[ c_+(m_j - 1) c_-(m_j) + c_-(m_j + 1) c_+(m_j) \right] + (J m_{j+1} - H) m_j \right\} \ket{\tilde{\bm{m}}}.
  \label{eq:local_op_signed}
 \end{align}
 Since $J$, $K$, and $c_{\pm}(m)$ are all nonnegative, the off-diagonal elements of $\hat{h}_j$, and those of $\hat{\mathcal{H}}$, in the signed basis are nonpositive.
\end{proof}

\begin{proposition}[irreducibility]
 \label{prop:irreducibility_original}
 Since the Hamiltonian $\hat{\mathcal{H}}$ has the $\mathbb{Z}_2$ symmetry and thus commutes with $\hat{Z}$, it can be block-diagonalized by eigensectors of $\hat{Z} = \pm 1$\textup{:} $\hat{\mathcal{H}} = \hat{\mathcal{H}}^{(+)} \oplus \hat{\mathcal{H}}^{(-)}$.
 Let $V^{(\pm)}$ be a vector space with its eigenvalue of $\hat{Z}$ being $\pm 1$: $V^{(\pm)} := \left\{\bm{m} \relmiddle| \hat{Z} \ket{\tilde{\bm{m}}} = \pm \ket{\tilde{\bm{m}}}\right\}$.
 Then the following statement holds\textup{:}
 \begin{equation}
  \forall \bm{m}, \bm{m}' \in V^{(\pm)} \ \text{s.t.} \ \bm{m} \neq \bm{m}',
  \exists l \in \mathbb{N} \ \text{s.t.} \
  \braket{\tilde{\bm{m}} | (- \hat{\mathcal{H}}^{(\pm)})^l | \tilde{\bm{m}}'} \neq 0.
  \label{eq:H_irreducibility_original}
 \end{equation}
\end{proposition}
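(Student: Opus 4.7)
The plan is to reduce irreducibility to an elementary connectivity question. By Proposition~\ref{prop:off-diagonal_original}, every off-diagonal entry of $\hat{\mathcal{H}}$ in the signed basis is nonpositive, so each matrix element of $(-\hat{\mathcal{H}}^{(\pm)})^l$ expands as a sum of nonnegative products $\prod_i|\langle\tilde{\bm{m}}_i|\hat{\mathcal{H}}|\tilde{\bm{m}}_{i+1}\rangle|$ taken over length-$l$ walks $\bm{m}'\to\bm{m}_1\to\cdots\to\bm{m}_{l-1}\to\bm{m}$ in $V^{(\pm)}$. Because no cancellations are possible, it suffices to exhibit, for every pair $\bm{m}',\bm{m}\in V^{(\pm)}$, a single such walk on which all amplitudes are strictly positive.

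Reading off Eq.~\eqref{eq:local_op_signed}, the off-diagonal part of $\hat{\mathcal{H}}$ generates only two kinds of elementary moves: (i) an XY exchange $(m_j,m_{j+1})\to(m_j\pm 1,m_{j+1}\mp 1)$ with amplitude $(J/2)\,c_\pm(m_j)\,c_\mp(m_{j+1})$, and (ii) an on-site $K$ move $m_j\to m_j\pm 2$ with amplitude $(K/4)\,c_\pm(m_j\pm 1)\,c_\pm(m_j)$. Both shift $\hat{S}_{\mathrm{tot}}^z$ by an even integer, hence preserve the parity of $\sum_j(S-m_j)$, i.e.\ the eigenvalue of $\hat{Z}$; no move can therefore leave $V^{(\pm)}$. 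The amplitudes are strictly positive except when a raising is attempted at $m=S$ or a lowering at $m=-S$, and the hypothesis $S\neq 1/2$ (so $2S\geq 2$) guarantees that every site has enough room for at least one nontrivial move of each type.

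Connectivity within each sector is then established as follows. I would relabel $n_j:=S-m_j\in\{0,1,\ldots,2S\}$, so that XY hops transport a single ``quantum'' between nearest neighbors while $K$ moves create or annihilate a pair of quanta on one site; the $\hat{Z}$ eigenvalue becomes $(-1)^{\sum_j n_j}$. Fix reference configurations, e.g.\ $(0,\ldots,0)\in V^{(+)}$ and $(1,0,\ldots,0)\in V^{(-)}$. Given any target $(n_1',\ldots,n_L')$ in the same sector, one reaches it from the reference by alternating (a) $K$ moves at an unsaturated site to inject the required number of pairs and (b) XY hops to transport individual quanta to their final destinations; since $2S\geq 2$, each step can be executed without pushing any $n_j$ outside the admissible range $[0,2S]$. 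The main obstacle I anticipate is bookkeeping near configurations with many saturated sites ($n_j=0$ or $2S$), where a naive choice of moves could hit a vanishing amplitude; this is handled by a careful but routine routing argument, first shuttling spin away from a would-be saturated site before acting on it. Combining the existence of such a positive-amplitude walk with Proposition~\ref{prop:off-diagonal_original} yields Eq.~\eqref{eq:H_irreducibility_original}.
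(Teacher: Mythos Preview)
Your proposal is correct and follows essentially the same strategy as the paper: both reduce the proposition to a connectivity statement for the graph whose edges are the off-diagonal moves (your XY hops and on-site $K$ moves are precisely the paper's fundamental operators $\hat{P}_j^{\pm\mp}$ and $\hat{Q}_j^{\pm}$), and both pass to the occupation variables $n_j=S-m_j$. The only organizational difference is that the paper routes directly between arbitrary $\bm{n}$ and $\bm{n}'$ via an explicit two-step algorithm, whereas you route through a fixed reference; the ``careful but routine'' bookkeeping you allude to is made fully explicit in the paper by a composite move $\hat{R}_{j_1,j_2}$ that pads empty intermediate sites with $\hat{Q}^+$, hops through them with $\hat{P}^{\mp\pm}$, and then undoes the padding with $\hat{Q}^-$.

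One small imprecision worth tightening: your claim that $(-\hat{\mathcal{H}}^{(\pm)})^l$ expands as a sum of \emph{nonnegative} products is not literally true for arbitrary $l$, since walks may include diagonal steps and $(-\hat{\mathcal{H}})_{\tilde{\bm m}\tilde{\bm m}}$ carries no sign constraint. The standard fix is to take $l$ equal to the length of the shortest purely off-diagonal walk from $\bm{m}'$ to $\bm{m}$; at that $l$ every contributing walk is forced to be purely off-diagonal, so all terms are nonnegative and your exhibited walk gives a strictly positive contribution.
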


\begin{figure}[tbp]
 \includegraphics[width=.9\linewidth, pagebox=artbox]{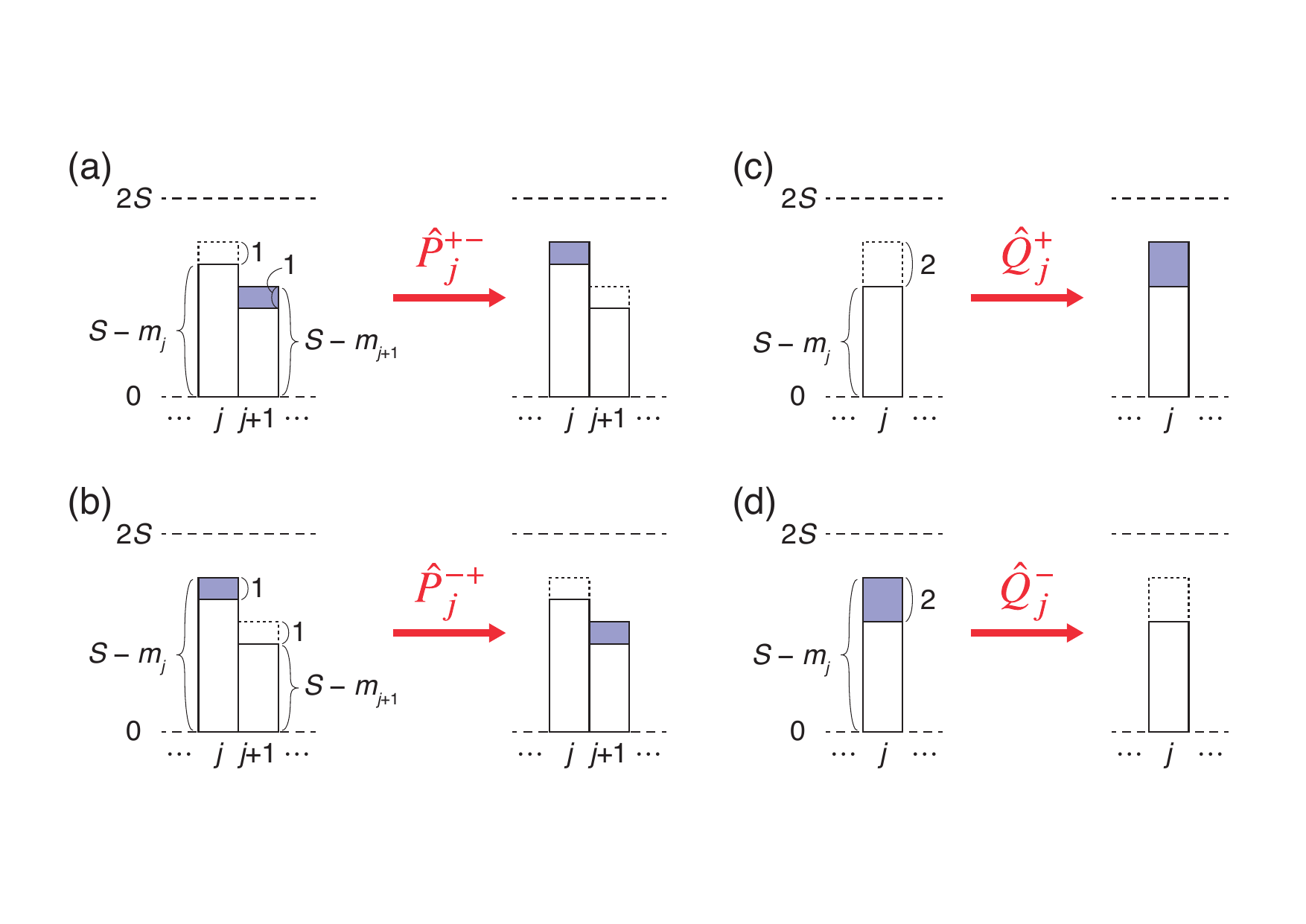}
 \caption{Schematic illustrations of the fundamental operators (a) $\hat{P}_j^{+-}$, (b) $\hat{P}_j^{-+}$, (c) $\hat{Q}_j^{+}$, and (d) $\hat{Q}_j^{-}$. The length of the $j$th rectangle shows $S - m_j = 0, 1, \dots, 2S$. The coefficients in Eqs.~\eqref{eq:fundamental_op_P} and \eqref{eq:fundamental_op_Q} are not reflected on these figures.}
 \label{fig:fundamental_op}
\end{figure}

For the proof of Proposition~\ref{prop:irreducibility_original}, we decompose the off-diagonal part, with respect to the signed basis, of the local operator $\hat{h}_j$ [Eq.~\eqref{eq:local_op_signed}] as
\begin{equation}
 - \hat{h}_j = \hat{P}_j^{+-} + \hat{P}_j^{-+} + \hat{Q}_j^{+} + \hat{Q}_j^{-} + \text{(diagonal part)},
\end{equation}
where the four \textit{fundamental operators} are defined by
\begin{align}
 \hat{P}_j^{\mp\pm}
 &:= \frac{J}{2} c_{\pm}(m_j) c_{\mp}(m_{j+1}) \ket{\tilde{\bar{\bm{m}}^{(j, \pm 1; j+1, \mp 1)}}} \bra{\tilde{\bm{m}}},
 \label{eq:fundamental_op_P} \\
 \hat{Q}_j^{\mp}
 &:= \frac{K}{4} c_{\pm}(m_j \pm 1) c_{\pm}(m_j) \ket{\tilde{\bar{\bm{m}}^{(j, \pm 2)}}} \bra{\tilde{\bm{m}}}.
 \label{eq:fundamental_op_Q}
\end{align}
These fundamental operators are schematically shown in Fig.~\ref{fig:fundamental_op}.
$\hat{P}_j^{+-}$ and $\hat{P}_j^{-+}$ do not alter $n_{\mathrm{tot}} := \sum_{j=1}^{L} (S - m_j)$ [Figs.~\ref{fig:fundamental_op}(a) and \ref{fig:fundamental_op}(b)], whereas $\hat{Q}_j^{+}$ and $\hat{Q}_j^{-}$ change $n_{\mathrm{tot}}$ to $n_{\mathrm{tot}} \pm 2$ [Figs.~\ref{fig:fundamental_op}(c) and \ref{fig:fundamental_op}(d)], respectively.
Using the operators, Eq.~\eqref{eq:H_irreducibility_original} is restated as follows: \textit{for any choices of $\bm{m} \neq \bm{m}' \in V^{(\pm)}$, there exists a combination of the fundamental operators that has a nonzero matrix element between $\bra{\tilde{\bm{m}}}$ and $\ket{\tilde{\bm{m}}'}$}~\footnote{There is no need to consider the cancellation of different matrix elements since Proposition~\ref{prop:off-diagonal_original} ensures the absence of off-diagonal matrix elements with opposite signs.}.
Therefore, Proposition~\ref{prop:irreducibility_original} is equivalent to the following statement.

\begin{propositionp}{\ref*{prop:irreducibility_original}$'$}
 \label{prop:irreducibility_original_2}
 We use the alternative representation $\bm{n} = [n_1, n_2, \dots, n_L]^{\mathrm{T}}$ instead of $\bm{m}$, where $n_j := S - m_j = 0, 1, \dots, 2S$.
 Let $V^{(+)} \ (V^{(-)})$ be a set of such vectors whose total number $n_{\mathrm{tot}} := \sum_{j=1}^{L} n_j$ is even (odd)\textup{:} $V^{(\pm)} := \{\bm{n} \mid n_j = 0, \dots, 2S; (-1)^{n_{\mathrm{tot}}} = \pm 1\}$.
 Let $\hat{P}_j^{\pm\mp}$ and $\hat{Q}_j^{\pm}$ $(j = 1, 2, \dots, L)$ be fundamental operators acting on $\bm{n}$ as follows\textup{:}
 \begin{alignat}{2}
  \hat{P}_j^{+-}\colon & \bm{n} \mapsto [n_1, \dots, n_{j-1}, n_j + 1, n_{j+1} - 1, n_{j+2}, \dots, n_L]^{\mathrm{T}}
  & \quad & (n_j < 2S, \, n_{j+1} > 0),
  \label{eq:P+-_condition} \\
  \hat{P}_j^{-+}\colon & \bm{n} \mapsto [n_1, \dots, n_{j-1}, n_j - 1, n_{j+1} + 1, n_{j+2}, \dots, n_L]^{\mathrm{T}}
  & & (n_j > 0, \, n_{j+1} < 2S),
  \label{eq:P-+_condition} \\
  \hat{Q}_j^{+}\colon & \bm{n} \mapsto [n_1, \dots, n_{j-1}, n_j + 2, n_{j+1}, \dots, n_L]^{\mathrm{T}}
  & & (n_j < 2S - 1),
  \label{eq:Q+_condition} \\
  \hat{Q}_j^{-}\colon & \bm{n} \mapsto [n_1, \dots, n_{j-1}, n_j - 2, n_{j+1}, \dots, n_L]^{\mathrm{T}}
  & & (n_j > 1),
  \label{eq:Q-_condition}
 \end{alignat}
 where we consider the periodic boundary condition $n_{L+1} = n_1$.
 Note that every operator is defined only when the condition denoted in the parentheses is satisfied.
 When the domain of the map is $V^{(+)} \ (V^{(-)})$, the codomain is also $V^{(+)} \ (V^{(-)})$.
 Then the following statement holds\textup{:} for any $\bm{n} \neq \bm{n}' \in V^{(\pm)}$, there exists a combination of the fundamental operators that maps $\bm{n}$ to $\bm{n}'$.
\end{propositionp}

\begin{proof}
 We prove Proposition~\ref{prop:irreducibility_original_2} instead of Proposition~\ref{prop:irreducibility_original}.
 For this purpose, we first show the following Lemma.
 \begin{lemma}
  \label{lem:operate_R}
  For any choices of two indices $j_1 \neq j_2$ such that $n_{j_1} > 0$ and $n_{j_2} < 2S$, there exists an operator $\hat{R}_{j_1, j_2}$ constructed by products of $\hat{P}^{\pm\mp}_j$ and $\hat{Q}^{\pm}_j$, which acts on $\bm{n}$ as follows\textup{:}
  \begin{equation}
   \hat{R}_{j_1, j_2}\colon
   \begin{cases}
    [n_1, \dots, n_{j_1}, \dots, n_{j_2}, \dots, n_L]^{\mathrm{T}} \mapsto [n_1, \dots, n_{j_1} - 1, \dots, n_{j_2} + 1, \dots, n_L]^{\mathrm{T}} & (j_1 < j_2), \\
    [n_1, \dots, n_{j_2}, \dots, n_{j_1}, \dots, n_L]^{\mathrm{T}} \mapsto [n_1, \dots, n_{j_2} + 1, \dots, n_{j_1} - 1, \dots, n_L]^{\mathrm{T}} & (j_1 > j_2).
   \end{cases}
  \end{equation}
 \end{lemma}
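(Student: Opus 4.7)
The plan is to prove the lemma by induction on the ``distance'' $d := j_2 - j_1 \pmod{L}$ measured in a fixed positive direction around the ring; the case $j_1 > j_2$ is then handled by the symmetric construction with $\hat{P}^{+-}_j$ in place of $\hat{P}^{-+}_j$, so I would focus on $j_1 < j_2$. A pleasant feature one expects to find is that $\hat{Q}^{\pm}_j$ is not actually needed here: the entire transfer can be realized as a relay of $\hat{P}^{-+}_j$ hops along the chain, since the allowed operator depends on the configuration $\bm{n}$ (which the lemma permits).

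For the base case $d = 1$ we have $j_2 = j_1 + 1$, and the hypotheses $n_{j_1} > 0$ and $n_{j_1+1} = n_{j_2} < 2S$ are exactly the domain conditions \eqref{eq:P-+_condition} for $\hat{P}^{-+}_{j_1}$, so I set $\hat{R}_{j_1, j_1+1} := \hat{P}^{-+}_{j_1}$. For $d \geq 2$, I would split on the initial value at $j_1 + 1$. In Case A, $n_{j_1+1} < 2S$: set $\hat{R}_{j_1, j_2} := \hat{R}_{j_1+1, j_2} \cdot \hat{P}^{-+}_{j_1}$; the rightmost factor acts first, is applicable because $n_{j_1} > 0$ and $n_{j_1+1} < 2S$, and leaves $n_{j_1+1} + 1 > 0$ at site $j_1+1$, so the inductive $\hat{R}_{j_1+1, j_2}$ (distance $d-1$) fires legitimately. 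In Case B, $n_{j_1+1} = 2S$, which forces $j_2 \neq j_1 + 1$ (since $n_{j_2} < 2S$); here I reverse the order and set $\hat{R}_{j_1, j_2} := \hat{P}^{-+}_{j_1} \cdot \hat{R}_{j_1+1, j_2}$. The induction first lowers $n_{j_1+1}$ from $2S$ to $2S - 1$ (and raises $n_{j_2}$ by one), whereupon $\hat{P}^{-+}_{j_1}$ is applicable with $n_{j_1+1} = 2S - 1 < 2S$, decrementing $n_{j_1}$ and restoring $n_{j_1+1}$ back to $2S$. A straightforward accounting in either case shows the net effect is $n_{j_1} \to n_{j_1} - 1$, $n_{j_2} \to n_{j_2} + 1$, with every other entry returning to its original value.

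The main obstacle I foresee is careful bookkeeping: at every stage of the composition one must check the domain conditions against the \emph{current} values of $\bm{n}$, not the original ones, and also confirm that the recursively supplied $\hat{R}_{j_1+1, j_2}$ really does leave all sites other than $j_1+1$ and $j_2$ untouched (which follows from the induction hypothesis). The reversal of operator order between Case A and Case B is essential---in Case B, hopping out of $j_1$ first would illegally demand $n_{j_1+1} = 2S + 1$, whereas relaying one unit onward through $\hat{R}_{j_1+1, j_2}$ first opens up the needed vacancy at $j_1+1$. Once this relay trick is in place the recursion terminates cleanly at $d = 1$, and the produced $\hat{R}_{j_1, j_2}$ is a word of length $d$ in the $\hat{P}^{-+}_j$ alphabet alone.
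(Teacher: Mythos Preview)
Your argument is correct, and it is genuinely different from the paper's. The paper proves the lemma by a single explicit formula rather than by induction: it sets
\[
\hat{R}_{j_1, j_2} = \prod_{j \in \Lambda_0} \hat{Q}^{-}_j \cdot \hat{P}^{-+}_{j_1} \hat{P}^{-+}_{j_1+1} \dotsm \hat{P}^{-+}_{j_2-1} \cdot \prod_{j \in \Lambda_0} \hat{Q}^{+}_j,
\]
where $\Lambda_0$ is the set of intermediate sites with $n_j = 0$. The $\hat{Q}^{+}$ factors first lift every zero site to $n_j = 2$ so that the rigid right-to-left relay $\hat{P}^{-+}_{j_2-1}, \ldots, \hat{P}^{-+}_{j_1}$ is everywhere defined, and the matching $\hat{Q}^{-}$ factors undo the lift at the end. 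Your inductive construction instead adapts the \emph{order} of the $\hat{P}^{-+}$ hops to the configuration (hop in first when $n_{j_1+1} < 2S$, relay onward first when $n_{j_1+1} = 2S$), and thereby dispenses with $\hat{Q}^{\pm}$ altogether---a mildly stronger statement than the paper actually needs, since $\hat{Q}^{\pm}$ is required elsewhere in the irreducibility proof anyway. The paper's version has the advantage of being a closed-form word that is easy to turn into the pseudocode they display; yours has the advantage of a cleaner invariant and a shorter word (length exactly $|j_2 - j_1|$ in the $\hat{P}^{-+}$ alphabet, versus $|j_2 - j_1| + 2|\Lambda_0|$ for theirs).
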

 \noindent
 We can confirm this Lemma by explicitly constructing the operator:
 \begin{equation}
  \hat{R}_{j_1, j_2} =
  \begin{cases}
   \displaystyle
   \prod_{j \in \Lambda_0} \hat{Q}^{-}_j \cdot \hat{P}^{-+}_{j_1} \hat{P}^{-+}_{j_1 + 1} \dotsm \hat{P}^{-+}_{j_2 - 2} \hat{P}^{-+}_{j_2 - 1} \cdot \prod_{j \in \Lambda_0} \hat{Q}^{+}_j & (j_1 < j_2), \\
   \displaystyle
   \prod_{j \in \Lambda_0} \hat{Q}^{-}_j \cdot \hat{P}^{+-}_{j_1 - 1} \hat{P}^{+-}_{j_1 - 2} \dotsm \hat{P}^{+-}_{j_2 + 1} \hat{P}^{+-}_{j_2} \cdot \prod_{j \in \Lambda_0} \hat{Q}^{+}_j & (j_1 > j_2),
  \end{cases}
  \label{eq:operate_R}
 \end{equation}
 where $\Lambda_0 := \{j \mid n_j = 0, \, \min\{j_1, j_2\} < j < \max\{j_1, j_2\}\}$.
 In Fig.~\ref{fig:operate_R_ex}, for example, we show the processes in $\hat{R}_{2, 7}$ acting on $\bm{n} = [0, 1, 3, 0, 1, 0, 2, 1]^{\mathrm{T}}$ for $S = 3/2$ systems to intuitively understand Eq.~\eqref{eq:operate_R}.
 The image of the map is $[0, 0, 3, 0, 1, 0, 3, 1]^{\mathrm{T}}$ [Fig.~\ref{fig:operate_R_ex}(e)], which some would regard as obtainable by using $\hat{P}^{-+}_2 \dotsm \hat{P}^{-+}_6$.
 However, the initial vector has zero components at $j \in \Lambda_0 = \{4, 6\}$ between $j_1 (= 2)$ and $j_2 (= 7)$, which prevents directly multiplying $\hat{P}^{-+}_2 \dotsm \hat{P}^{-+}_6$ [see the condition in Eq.~\eqref{eq:P+-_condition}].
 Therefore, we first raise the zero elements by two using $\hat{Q}^{+}_4 \hat{Q}^{+}_6$ [Figs.~\ref{fig:operate_R_ex}(a)--\ref{fig:operate_R_ex}(b)], and then operate $\hat{P}^{-+}_2 \dotsm \hat{P}^{-+}_6$ [Figs.~\ref{fig:operate_R_ex}(b)--\ref{fig:operate_R_ex}(d)]; finally, we offset the increases by acting $\hat{Q}^{-}_4 \hat{Q}^{-}_6$ [Figs.~\ref{fig:operate_R_ex}(d)--\ref{fig:operate_R_ex}(e)].
 The construction of the operator $\hat{R}_{j_1, j_2}$ is also represented by Algorithm~\ref{alg:operate_R}.

 Using Lemma~\ref{lem:operate_R}, let us prove Proposition~\ref{prop:irreducibility_original_2}.
 Let $\Delta\bm{n}$ and $\Delta n_{\mathrm{tot}}$ be $\bm{n}' - \bm{n}$ and $n_{\mathrm{tot}}' - n_{\mathrm{tot}}$, respectively.
 Then we construct an algorithm to transform $\bm{n}$ to $\bm{n}'$ by combinations of the fundamental operators, which is composed of two steps.
 \begin{enumerate}[label=\underline{Step (\roman*)}, leftmargin=15mm]
  \item \textit{If $\Delta n_{\mathrm{tot}} \neq 0$, transform $\bm{n}$ until $\Delta n_{\mathrm{tot}} = 0$ is satisfied.} \\
  First, note that $\Delta n_{\mathrm{tot}}$ must be even since $\bm{n}$ and $\bm{n}'$ belong to the same set $V^{(+)}$ or $V^{(-)}$.
  We repeat one of the two operations (i-1) and (i-2) depending on the sign of $\Delta n_{\mathrm{tot}}$.
  \begin{enumerate}[label=(\roman{enumi}-\arabic*)]
   \item When $\Delta n_{\mathrm{tot}} < 0$, we need to decrease $n_{\mathrm{tot}}$.
   Thus we search for indices $j_1 < j_2 < \dotsb$ of the maximum components in $\bm{n}$: $n_{j_1} = n_{j_2} = \dots = \max_{1 \leq j \leq L} n_j$.
   If $n_{j_1} \geq 2$, we operate $\hat{Q}^{-}_{j_1}$ and reduce the component by two.
   Otherwise, namely $n_{j_1} = 1$, there exists at least one different index $j_2$ such that $n_{j_2} = 1$ because of the even $\Delta n_{\mathrm{tot}}$; in this case, we can transform $(n_{j_1}, n_{j_2}) = (1, 1)$ to $(2, 0)$ by acting $\hat{R}_{j_2, j_1}$.
   \item When $\Delta n_{\mathrm{tot}} > 0$, we need to increase $n_{\mathrm{tot}}$.
   Thus we search for indices $j_1 < j_2 < \dotsb$ of the minimum components in $\bm{n}$: $n_{j_1} = n_{j_2} = \dots = \min_{1 \leq j \leq L} n_j$.
   If $n_{j_1} \leq 2S - 2$, we operate $\hat{Q}^{+}_{j_1}$ and raise the component by two.
   Otherwise, namely $n_{j_1} = 2S - 1$, there exists at least one different index $j_2$ such that $n_{j_2} = 2S - 1$; in this case, we can transform $(n_{j_1}, n_{j_2}) = (2S - 1, 2S - 1)$ to $(2S - 2, 2S)$ by acting $\hat{R}_{j_1, j_2}$.
  \end{enumerate}
  \item \textit{If $\Delta\bm{n} \neq \bm{0}$, transform $\bm{n}$ until $\Delta\bm{n} = \bm{0}$ is satisfied.} \\
  Let $j_{\mathrm{max}}$ ($j_{\mathrm{min}}$) be the smallest index in $\argmax_{1 \leq j \leq L} \Delta n_j$ ($\argmin_{1 \leq j \leq L} \Delta n_j$).
  Since $\Delta n_{\mathrm{tot}} = 0$ due to the previous step, $\max_{1 \leq j \leq L} \Delta n_j$ and $\min_{1 \leq j \leq L} \Delta n_j$ have opposite signs as long as $\Delta\bm{n} \neq \bm{0}$.
  Therefore, we repeat acting $\hat{R}_{j_{\mathrm{min}}, j_{\mathrm{max}}}$ to $\bm{n}$ and finally get $\bm{n} = \bm{n}'$.
 \end{enumerate}
 The above procedures are represented in Algorithm~\ref{alg:main}.
 In summary, we can transform $\bm{n}$ to $\bm{n}'$ by multiple actions of the fundamental operators, whose construction method is shown in Algorithms~\ref{alg:operate_R} and \ref{alg:main}~\footnote{The algorithms are indeed implemented in Julia: \href{https://github.com/shuntarosumita/antiferro_spin_parity/tree/main/irreducibility}{https://github.com/shuntarosumita/antiferro\_spin\_parity/tree/main/irreducibility}}.
 This completes the proof.
\end{proof}

\begin{figure}[tbp]
 \includegraphics[width=.8\linewidth, pagebox=artbox]{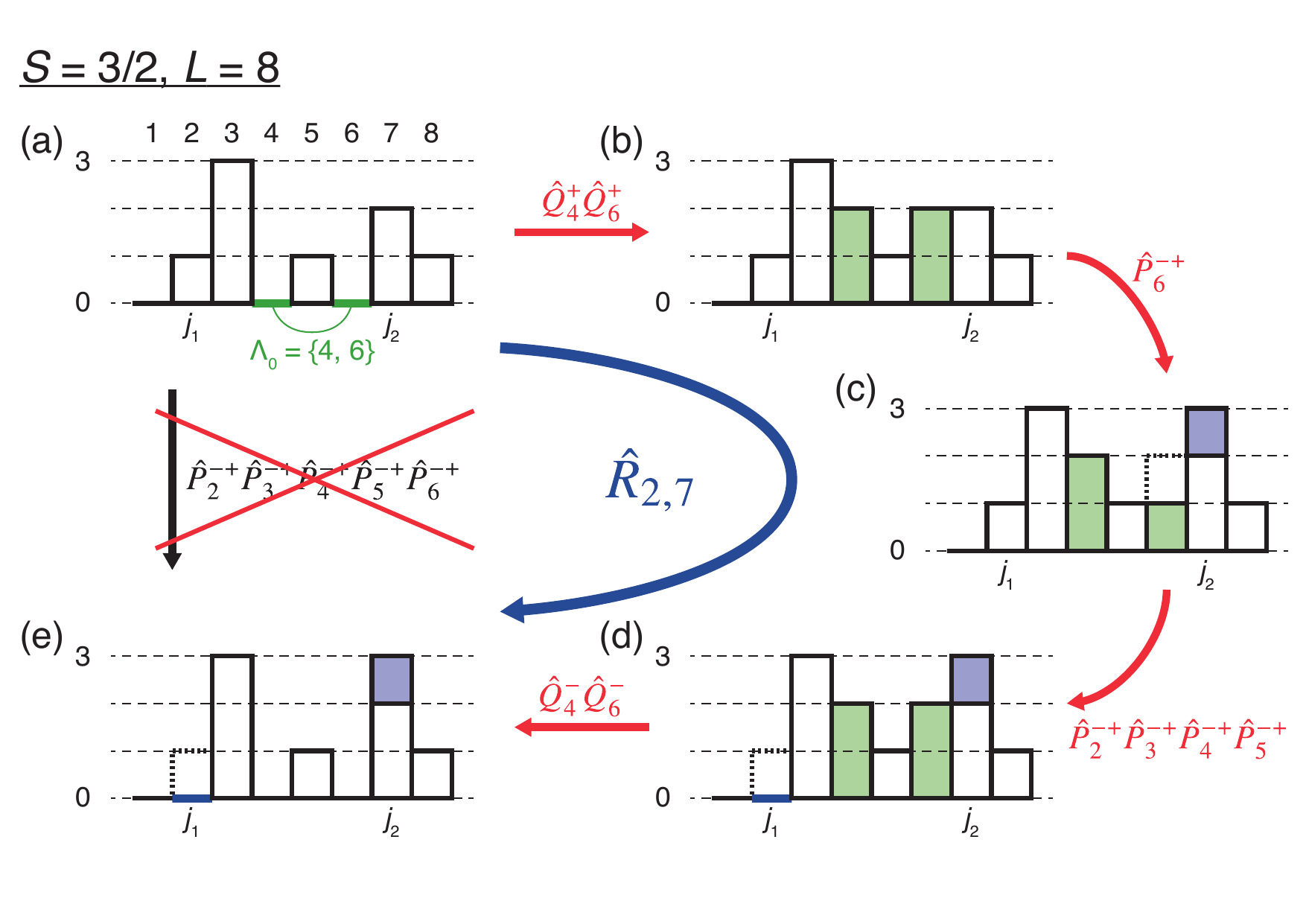}
 \caption{Schematic illustrations of the operator $\hat{R}_{j_1, j_2}$ for $S = 3/2$, $j_1 = 2$, and $j_2 = 7$. We take an input $\bm{n} = [0, 1, 3, 0, 1, 0, 2, 1]^{\mathrm{T}}$ as an example, and obtain the final result $\hat{R}_{2, 7}(\bm{n}) = [0, 0, 3, 0, 1, 0, 3, 1]^{\mathrm{T}}$.}
 \label{fig:operate_R_ex}
\end{figure}

\begin{table}[tbp]
 \begin{minipage}[t]{.45\linewidth}
  \begin{algorithm}[H]
   \caption{Implementation of Eq.~\eqref{eq:operate_R}.}
   \label{alg:operate_R}
   \begin{algorithmic}[1]
    \Require{$n_{j_1} > 0, \, n_{j_2} < 2S, \, j_1 \neq j_2$}
    \Function{operate\_R}{$\bm{n}^{\mathrm{in}}, j_1, j_2$}
     \State{$\Lambda_0 \gets \{j \mid n^{\mathrm{in}}_j = 0, \, \min\{j_1, j_2\} < j < \max\{j_1, j_2\}\}$}
     \State{$\bm{n}^{\mathrm{out}} \gets \bm{n}^{\mathrm{in}}$}
     \State
     \ForAll{$j \ \text{in} \ \Lambda_0$}
      \State{$n^{\mathrm{out}}_j \gets n^{\mathrm{out}}_j + 2$}
      \Comment{$\hat{Q}_j^{+}$}
     \EndFor
     \State
     \If{$j_1 < j_2$}
      \For{$j = j_2 - 1, j_2 - 2, \dots, j_1 + 1, j_1$}
       \State{$n^{\mathrm{out}}_j \gets n^{\mathrm{out}}_j - 1$; $n^{\mathrm{out}}_{j+1} \gets n^{\mathrm{out}}_{j+1} + 1$}
       \Comment{$\hat{P}_j^{-+}$}
      \EndFor
     \Else
      \For{$j = j_2, j_2 + 1, \dots, j_1 - 2, j_1 - 1$}
       \State{$n^{\mathrm{out}}_j \gets n^{\mathrm{out}}_j + 1$; $n^{\mathrm{out}}_{j+1} \gets n^{\mathrm{out}}_{j+1} - 1$}
       \Comment{$\hat{P}_j^{+-}$}
      \EndFor
     \EndIf
     \State
     \ForAll{$j \ \text{in} \ \Lambda_0$}
      \State{$n^{\mathrm{out}}_j \gets n^{\mathrm{out}}_j - 2$}
      \Comment{$\hat{Q}_j^{-}$}
     \EndFor
     \State
     \State{\Return{$\bm{n}^{\mathrm{out}}$}}
    \EndFunction
   \end{algorithmic}
  \end{algorithm}
  \begin{remark}
   ``$a \gets b$'' in the algorithms means ``substitute $b$ into $a$.'' Also, ``$\argmax$'' and ``$\argmin$'' functions are defined as follows:
   \begin{align*}
    \argmax_{j} n_j &:= \{j \mid \forall j': n_{j'} \leq n_{j}\}, \\
    \argmin_{j} n_j &:= \{j \mid \forall j': n_{j'} \geq n_{j}\}.
   \end{align*}
  \end{remark}
 \end{minipage}
 \hspace{10mm}
 \begin{minipage}[t]{.45\linewidth}
  \begin{algorithm}[H]
   \caption{A main algorithm for transforming $\bm{n}$ to $\bm{n}'$.}
   \label{alg:main}
   \begin{algorithmic}[1]
    \Require{$S, L, \bm{n}, \bm{n}'$}
    \State{$\Delta \bm{n} \gets \bm{n}' - \bm{n}$; $\Delta n_{\mathrm{tot}} \gets \sum_{j=1}^{L} \Delta n_j$}
    \While {$\Delta n_{\mathrm{tot}} \neq 0$}
     \If{$\Delta n_{\mathrm{tot}} < 0$}
      \State{$\Lambda_{\mathrm{max}} \gets \argmax_{1 \leq j \leq L} n_j$}
      \State{$j_1 \gets \min(\Lambda_{\mathrm{max}})$}
      \If{$n_{j_1} \geq 2$}
       \State{$n_{j_1} \gets n_{j_1} - 2$}
       \Comment{$\hat{Q}_{j_1}^{-}$}
      \Else
       \State{$j_2 \gets \min(\Lambda_{\mathrm{max}} \setminus \{j_1\})$}
       \State{$\bm{n} \gets \text{\Call{operate\_R}{$\bm{n}, j_2, j_1$}}$}
       \Comment{$\hat{R}_{j_2, j_1}$}
      \EndIf
     \Else
      \State{$\Lambda_{\mathrm{min}} \gets \argmin_{1 \leq j \leq L} n_j$}
      \State{$j_1 \gets \min(\Lambda_{\mathrm{min}})$}
      \If{$n_{j_1} \leq 2S - 2$}
       \State{$n_{j_1} \gets n_{j_1} + 2$}
       \Comment{$\hat{Q}_{j_1}^{+}$}
      \Else
       \State{$j_2 \gets \min(\Lambda_{\mathrm{min}} \setminus \{j_1\})$}
       \State{$\bm{n} \gets \text{\Call{operate\_R}{$\bm{n}, j_1, j_2$}}$}
       \Comment{$\hat{R}_{j_1, j_2}$}
      \EndIf
     \EndIf
     \State{$\Delta\bm{n} \gets \bm{n}' - \bm{n}$; $\Delta n_{\mathrm{tot}} \gets \sum_{j=1}^{L} \Delta n_j$}
    \EndWhile
    \State
    \While{$\Delta\bm{n} \neq \bm{0}$}
     \State{$j_{\mathrm{max}} \gets \min(\argmax_{1 \leq j \leq L} \Delta n_j)$}
     \State{$j_{\mathrm{min}} \gets \min(\argmin_{1 \leq j \leq L} \Delta n_j)$}
     \State{$\bm{n} \gets \text{\Call{operate\_R}{$\bm{n}, j_{\mathrm{min}}, j_{\mathrm{max}}$}}$}
     \Comment{$\hat{R}_{j_{\mathrm{min}}, j_{\mathrm{max}}}$}
     \State{$\Delta \bm{n} \gets \bm{n}' - \bm{n}$}
    \EndWhile
   \end{algorithmic}
  \end{algorithm}
 \end{minipage}
\end{table}

Using Propositions~\ref{prop:off-diagonal_original} and \ref{prop:irreducibility_original}, we prove the main theorem.
\begin{proof}[Proof of Theorem~\ref{thm:Z2_momentum_SM} for Eq.~\eqref{eq:Hamiltonian_SM}]
 According to Propositions~\ref{prop:off-diagonal_original} and \ref{prop:irreducibility_original} [Eqs.~\eqref{eq:H_nonpositive_original} and \eqref{eq:H_irreducibility_original}], we can apply the Perron--Frobenius theorem~\cite{Perron1907_SM, Frobenius1912_SM} to $\hat{\mathcal{H}}^{(\pm)}$.
 As a result, the ground state of each Hamiltonian block is nondegenerate, and is given by
 \begin{equation}
  \Ket{\Psi_{\mathrm{GS}}^{(\pm)}} = \sum_{\bm{m} \in V^{(\pm)}} a(\bm{m}) \ket{\tilde{\bm{m}}}, \quad
  a(\bm{m}) > 0.
 \end{equation}
 We now calculate the eigenvalue of the one-site translation operator $\hat{T}$.
 Noting that $(-1)^{L (S - m_L)} = 1$ since $L$ is even,
 \begin{align}
  (-1)^{\delta(\hat{T}(\bm{m}))}
  &= (-1)^{\sum_{j=1}^{L} j (S - m_{j-1})} \notag \\
  &= (-1)^{\sum_{j=1}^{L-1} (j+1) (S - m_j) + (S - m_L)} \notag \\
  &= (-1)^{\sum_{j=1}^{L} (j+1) (S - m_j)}
  = (-1)^{\sum_{j=1}^{L} (S - m_j)} (-1)^{\delta(\bm{m})}.
  \label{eq:sign_translation}
 \end{align}
 Therefore, $\hat{T}$ acts on the ground state as
 \begin{align}
  \hat{T} \Ket{\Psi_{\mathrm{GS}}^{(\pm)}}
  &= \sum_{\bm{m} \in V^{(\pm)}} a(\bm{m}) (-1)^{\delta(\bm{m})} \ket{\hat{T}(\bm{m})} \notag \\
  &= \sum_{\bm{m} \in V^{(\pm)}} a(\bm{m}) (-1)^{\sum_{j=1}^{L} (S - m_j)} (-1)^{\delta(\hat{T}(\bm{m}))} \ket{\hat{T}(\bm{m})} \notag \\
  &= \sum_{\bm{m} \in V^{(\pm)}} a(\hat{T}^{-1}(\bm{m})) \left[\prod_{j=1}^{L} (-1)^{S - m_j}\right] (-1)^{\delta(\bm{m})} \ket{\bm{m}}.
 \end{align}
 Because of the uniqueness of the ground state, $\Ket{\Psi_{\mathrm{GS}}^{(\pm)}}$ is an eigenstate of $\hat{T}$, that is, $|a(\hat{T}(\bm{m}))| = |a(\bm{m})|$.
 Since $a(\bm{m}) > 0$, this results in $a(\hat{T}(\bm{m})) = a(\bm{m})$.
 Thus we obtain
 \begin{equation}
  \hat{T} \Ket{\Psi_{\mathrm{GS}}^{(\pm)}}
  = \sum_{\bm{m} \in V^{(\pm)}} a(\bm{m}) \underbrace{\left[\prod_{j=1}^{L} (-1)^{S - m_j}\right] \ket{\tilde{\bm{m}}}}_{\hat{Z} \ket{\tilde{\bm{m}}} \, = \, \pm \ket{\tilde{\bm{m}}}}
  = \pm \Ket{\Psi_{\mathrm{GS}}^{(\pm)}}.
 \end{equation}
 This completes the proof.
\end{proof}

\subsection{The mapped spin-$1/2$ XYX model}
Let us move on to the spin-$1/2$ XYX model under a transverse field, which is obtained by the first-order perturbation with respect to the exchange and Zeeman interactions in half-odd-integer spin cases of Eq.~\eqref{eq:Hamiltonian_SM}; for details, see Sec.~\ref{sec:mapping_largeK}.
The mapped Hamiltonian is given by
\begin{equation}
 \hat{\mathcal{H}}_{\mathrm{map}} = \tilde{J} \sum_{j=1}^{L} (\hat{s}_j^x \hat{s}_{j+1}^x + \Delta \hat{s}_j^y \hat{s}_{j+1}^y + \hat{s}_j^z \hat{s}_{j+1}^z) - \tilde{H} \sum_{j=1}^{L} \hat{s}_j^z,
 \label{eq:Hamiltonian_map_SM}
\end{equation}
where $\hat{\bm{s}}_j$ is the mapped spin-$1/2$ operator on the $j$th site.
In the mapped model, the single-ion anisotropy is taken into the anisotropy of the exchange interaction $\Delta := 1 / (S + 1/2)^2$.
Note that $\tilde{J} := J / \Delta$ and $\tilde{H} := H / \sqrt{\Delta}$.
In the following, we focus on $\tilde{J} > 0$, $\tilde{H} \neq 0$, and $S \geq 3/2$ (i.e., $0 < \Delta < 1$)~%
\footnote{%
For $S = 1/2$ ($\Delta = 1$), the mapped model is equivalent to the original model, whose property of level crossings is described by gapless TLL, as we explain in the main text.
In this case, we do not need to consider the discrete $\pi$-rotation symmetry $\hat{Z}$ since the total spin $\sum_{j=1}^{L} \hat{S}_j^z$ is conserved.
}.
Let $\hat{Z} := \bigotimes_{j=1}^{L} \ee^{\ii \pi (1/2 - \hat{s}_j^z)} = \bigotimes_{j=1}^{L} (2\hat{s}_j^z)$ be the $\pi$-rotation symmetry.
To prove Theorem~\ref{thm:Z2_momentum_SM}, we present the following two Propositions, which correspond to Propositions~\ref{prop:off-diagonal_original} and \ref{prop:irreducibility_original} in the original model.

\begin{proposition}[off-diagonal matrix element]
 \label{prop:off-diagonal_map}
 All the off-diagonal elements of $\hat{\mathcal{H}}_{\mathrm{map}}$ in the signed basis are nonpositive\textup{:}
 \begin{equation}
  \braket{\tilde{\bm{m}} | \hat{\mathcal{H}}_{\mathrm{map}} | \tilde{\bm{m}}'} \leq 0
  \ \text{for all} \ \bm{m}, \bm{m}' \ \text{s.t.} \ \bm{m} \neq \bm{m}'.
  \label{eq:H_nonpositive_map}
 \end{equation}
\end{proposition}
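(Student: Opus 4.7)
The plan is to follow the blueprint of Proposition~\ref{prop:off-diagonal_original}: rewrite $\hat{\mathcal{H}}_{\mathrm{map}}$ in terms of the ladder operators $\hat{s}_j^{\pm} = \hat{s}_j^x \pm \ii \hat{s}_j^y$, enumerate the off-diagonal operators term by term, and show that in each case the manifestly positive coefficient out front is exactly offset by a sign flip in $(-1)^{\delta(\bm{m})}$ induced by the spin flips. A key difference from Proposition~\ref{prop:off-diagonal_original} is that the easy-plane anisotropy, formerly an on-site term, has been absorbed into an $xy$-anisotropic exchange, so in addition to the usual spin-conserving hopping one obtains genuine pair-creation/annihilation terms on neighbouring sites. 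Handling these with the correct sign is where the hypothesis $0 < \Delta < 1$ (i.e.\ $S \geq 3/2$) is consumed.

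First I would decompose the $XY$-like part as
\begin{equation*}
 \hat{s}_j^x \hat{s}_{j+1}^x + \Delta\, \hat{s}_j^y \hat{s}_{j+1}^y = \tfrac{1+\Delta}{4}\bigl(\hat{s}_j^{+}\hat{s}_{j+1}^{-} + \hat{s}_j^{-}\hat{s}_{j+1}^{+}\bigr) + \tfrac{1-\Delta}{4}\bigl(\hat{s}_j^{+}\hat{s}_{j+1}^{+} + \hat{s}_j^{-}\hat{s}_{j+1}^{-}\bigr).
\end{equation*}
The $\hat{s}_j^z\hat{s}_{j+1}^z$ and $\hat{s}_j^z$ pieces are diagonal in $\{\ket{\bm{m}}\}$ and contribute nothing to the off-diagonal elements. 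The hopping coefficient $\tilde{J}(1+\Delta)/4$ is trivially positive, while the pair coefficient $\tilde{J}(1-\Delta)/4$ is positive precisely because $\Delta < 1$; this is the only place the strict inequality is used.

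Next I would compute the sign factor picked up under each off-diagonal move, using $\delta(\bm{m}) = \sum_{j=1}^{L} j(1/2 - m_j)$ as in the spin-$1/2$ specialization of the definition. A hopping $\hat{s}_j^{\pm}\hat{s}_{j+1}^{\mp}$ shifts $\delta$ by $j(\mp 1)+(j+1)(\pm 1)=\pm 1$, while a pair operator $\hat{s}_j^{\pm}\hat{s}_{j+1}^{\pm}$ shifts $\delta$ by $\mp(2j+1)$. Both shifts are odd, so in passing to the signed basis $\ket{\tilde{\bm{m}}} = (-1)^{\delta(\bm{m})}\ket{\bm{m}}$ each term acquires an extra factor of $-1$. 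Multiplying by the positive coefficients above yields the desired inequality $\braket{\tilde{\bm{m}} | \hat{\mathcal{H}}_{\mathrm{map}} | \tilde{\bm{m}}'} \leq 0$ for $\bm{m} \neq \bm{m}'$, with equality exactly when the ladder operators annihilate $\ket{\bm{m}}$.

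The subtle step---and the genuinely new input relative to Sec.~\ref{sec:Z2_original}---is the sign bookkeeping for the pair terms $\hat{s}_j^{\pm}\hat{s}_{j+1}^{\pm}$: they were absent from $\hat{\mathcal{H}}$ in \eqref{eq:Hamiltonian} because the $K$-term acted on a single site, and they must be verified to flip $(-1)^{\delta}$ and to carry a nonnegative prefactor. Once this is established, the same Perron--Frobenius machinery employed for the original model can be invoked for $\hat{\mathcal{H}}_{\mathrm{map}}$, provided one also supplies the analogue of Proposition~\ref{prop:irreducibility_original} for the mapped Hamiltonian; the off-diagonal nonpositivity established here is exactly the first of the two ingredients required.
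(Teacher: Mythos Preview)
Your proposal is correct and follows essentially the same route as the paper's proof: rewrite $\hat{\mathcal{H}}_{\mathrm{map}}$ in ladder operators, identify the hopping $(1+\Delta)$ and pair $(1-\Delta)$ pieces with positive coefficients (using $0<\Delta<1$), and verify that each off-diagonal move changes $\delta(\bm{m})$ by an odd integer so that the signed basis absorbs the sign. The only cosmetic difference is that the paper treats all four off-diagonal terms at once by observing that for spin-$1/2$ every nontrivial ladder action simply flips $m_j\mapsto -m_j$ at the two sites, giving a $\delta$-shift of $2jm_j+2(j+1)m_{j+1}$ whose parity is that of $2j+1$; your case-by-case computation of $\pm 1$ and $\mp(2j+1)$ reaches the identical conclusion.
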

\begin{proof}
 The Hamiltonian is rewritten by using ladder operators $\hat{s}_j^{\pm} = \hat{s}_j^x \pm \ii \hat{s}_j^y$:
 \begin{equation}
  \hat{\mathcal{H}}_{\mathrm{map}}
  = \sum_{j=1}^{L} \left\{ \tilde{J} \left[ \frac{1+\Delta}{4} (\hat{s}_j^+ \hat{s}_{j+1}^- + \hat{s}_j^- \hat{s}_{j+1}^+) + \frac{1-\Delta}{4} (\hat{s}_j^+ \hat{s}_{j+1}^+ + \hat{s}_j^- \hat{s}_{j+1}^-) + \hat{s}_j^z \hat{s}_{j+1}^z \right] - \tilde{H} \hat{s}_j^z \right\}
  =: \sum_{j=1}^{L} \hat{h}_{\mathrm{map}, j}.
 \end{equation}
 The local operator $\hat{h}_{\mathrm{map}, j}$ acts on the usual basis as
 \begin{align}
  \hat{h}_{\mathrm{map}, j} \ket{\bm{m}}
  &= \frac{\tilde{J}}{4} \left[ (1+\Delta) \left(\delta^{(j; j+1)}(-+) + \delta^{(j; j+1)}(+-)\right) + (1-\Delta) \left(\delta^{(j; j+1)}(--) + \delta^{(j; j+1)}(++)\right) \right] \ket{\bar{\bm{m}}^{(j; j+1)}} \notag \\
  &\quad + (\tilde{J} m_{j+1} - \tilde{H}) m_j \ket{\bm{m}},
 \end{align}
 where
 \begin{align}
  \delta^{(j; j+1)}(\pm \pm') &:=
  \begin{cases}
   1 & \text{if} \ m_j = \pm \frac{1}{2} \ \text{and} \ m_{j+1} = \pm' \frac{1}{2}, \\
   0 & \text{otherwise},
  \end{cases}
  \\
  \ket{\bar{\bm{m}}^{(j; j+1)}} &:= \ket{m_1 \dots m_{j-1} \bar{m}_j \bar{m}_{j+1} m_{j+2} \dots m_L}
  \quad (\bar{m}_j = - m_j).
 \end{align}
 Here, using Eq.~\eqref{eq:sign}, we can easily show that the sign $(-1)^{\delta(\bm{m})}$ changes when the $j$th and $(j+1)$th spins are flipped:
 \begin{equation}
  (-1)^{\delta(\bar{\bm{m}}^{(j; j+1)})}
  = (-1)^{\delta(\bm{m}) + 2j m_j + 2(j+1) m_{j+1}}
  = - (-1)^{\delta(\bm{m})},
 \end{equation}
 since both $2m_j$ and $2m_{j+1}$ are odd integers.
 Therefore, the local Hamiltonian acts on the signed basis $\ket{\tilde{\bm{m}}} = (-1)^{\delta(\bm{m})} \ket{\bm{m}}$ as
 \begin{align}
  \hat{h}_{\mathrm{map}, j} \ket{\tilde{\bm{m}}}
  &= \textcolor{red}{-} \frac{\tilde{J}}{4} \left[ (1+\Delta) \left(\delta^{(j; j+1)}(-+) + \delta^{(j; j+1)}(+-)\right) + (1-\Delta) \left(\delta^{(j; j+1)}(--) + \delta^{(j; j+1)}(++)\right) \right] \ket{\tilde{\bar{\bm{m}}^{(j; j+1)}}} \notag \\
  &\quad + (\tilde{J} m_{j+1} - \tilde{H}) m_j \ket{\tilde{\bm{m}}}.
 \end{align}
 Since $\tilde{J}$, $1+\Delta$, and $1-\Delta$ are all positive, the off-diagonal elements of $\hat{h}_{\mathrm{map}, j}$, and those of $\hat{\mathcal{H}}_{\mathrm{map}}$, in the signed basis are nonpositive.
\end{proof}

\begin{proposition}[irreducibility]
 \label{prop:irreducibility_map}
 The Hamiltonian $\hat{\mathcal{H}}_{\mathrm{map}}$ can be block-diagonalized by eigensectors of the $\mathbb{Z}_2$ symmetry $\hat{Z}$ with eigenvalues $\pm 1$\textup{:} $\hat{\mathcal{H}}_{\mathrm{map}} = \hat{\mathcal{H}}_{\mathrm{map}}^{(+)} \oplus \hat{\mathcal{H}}_{\mathrm{map}}^{(-)}$.
 Then the following statement holds\textup{:}
 \begin{equation}
  \text{for all} \ \bm{m}, \bm{m}' \ \text{s.t.} \ \bm{m} \neq \bm{m}',
  \text{there exists} \ l \in \mathbb{N} \ \text{s.t.} \
  \braket{\tilde{\bm{m}} | (- \hat{\mathcal{H}}_{\mathrm{map}}^{(\pm)})^l | \tilde{\bm{m}}'} \neq 0.
  \label{eq:H_irreducibility_map}
 \end{equation}
\end{proposition}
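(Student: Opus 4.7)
The plan is to follow the architecture of the proof of Proposition~\ref{prop:irreducibility_original} and reduce Proposition~\ref{prop:irreducibility_map} to a purely combinatorial connectivity statement in the occupation variables $n_j := 1/2 - m_j \in \{0,1\}$, with $n_{\mathrm{tot}} := \sum_{j=1}^{L} n_j$; the sectors become $V^{(\pm)} = \{\bm{n} \mid (-1)^{n_{\mathrm{tot}}} = \pm 1\}$. Reading off the off-diagonal part of $-\hat{\mathcal{H}}_{\mathrm{map}}$ from the signed-basis computation in the proof of Proposition~\ref{prop:off-diagonal_map}, I identify four local moves with strictly positive weights (positivity uses $S \geq 3/2$, so $0 < \Delta < 1$): hopping moves $\hat{P}_j^{+-}\colon (0,1) \mapsto (1,0)$ and $\hat{P}_j^{-+}\colon (1,0) \mapsto (0,1)$ on neighbors $(j,j{+}1)$, each with weight $\tilde{J}(1+\Delta)/4$, and pair-flip moves $\hat{Q}_j^{++}\colon (0,0) \mapsto (1,1)$ and $\hat{Q}_j^{--}\colon (1,1) \mapsto (0,0)$, each with weight $\tilde{J}(1-\Delta)/4$. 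Because of Proposition~\ref{prop:off-diagonal_map} there is no sign cancellation among signed-basis matrix elements, so Eq.~\eqref{eq:H_irreducibility_map} is equivalent to the combinatorial claim that any $\bm{n}, \bm{n}' \in V^{(\pm)}$ can be joined by a word in these moves.

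I would proceed in two steps. First, a \emph{hopping lemma}: using only $\hat{P}_j^{\pm\mp}$ one can realize any rearrangement of the positions of the $1$s on the ring, so on each level set $\{n_{\mathrm{tot}} = \mathrm{const}\}$ the hopping moves act transitively. This is the standard hard-core-boson argument---each $\hat{P}$ hops a $1$ into an adjacent $0$, and iterating such nearest-neighbor transpositions connects any two configurations with the same number of $1$s on the ring. Second, an \emph{$n_{\mathrm{tot}}$-shift step}: provided $0 < n_{\mathrm{tot}} < L$, the hopping lemma lets me pre-arrange two neighboring sites to both be $0$ (respectively both $1$), enabling a single $\hat{Q}_j^{++}$ (respectively $\hat{Q}_j^{--}$) to change $n_{\mathrm{tot}}$ by $\pm 2$; the boundary cases $n_{\mathrm{tot}} \in \{0, L\}$ are immediate since every neighboring pair already qualifies. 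Since $\bm{n}$ and $\bm{n}'$ share a $\mathbb{Z}_2$ sector, $n_{\mathrm{tot}}' - n_{\mathrm{tot}}$ is even, so iterating the shift step drives $n_{\mathrm{tot}}$ to $n_{\mathrm{tot}}'$, after which a final application of the hopping lemma sends the configuration to $\bm{n}'$.

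The principal obstacle compared to Proposition~\ref{prop:irreducibility_original} is that the $\hat{Q}$ moves here are \emph{two-site} rather than on-site, and thus carry a nontrivial admissibility condition (two equal neighboring values). The hopping lemma dissolves this obstruction whenever the configuration is neither fully empty nor fully full, and the extreme cases are trivial; this observation is the analogue---but considerably cleaner---of the $\hat{R}_{j_1,j_2}$ gadget of Algorithm~\ref{alg:operate_R}. With irreducibility established, the Perron--Frobenius theorem applied to $\hat{\mathcal{H}}_{\mathrm{map}}^{(\pm)}$ yields a nondegenerate, strictly positive ground state in each sector, and the translation-symmetry calculation from Sec.~\ref{sec:Z2_original} transcribes verbatim (Eq.~\eqref{eq:sign_translation} only uses that $L$ is even and that $2(S-m_j)$ are even integers for half-odd integer $S$), completing the proof of Theorem~\ref{thm:Z2_momentum} for the mapped XYX model.
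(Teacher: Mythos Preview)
Your proposal is correct and follows exactly the route the paper indicates---its own proof of Proposition~\ref{prop:irreducibility_map} is the single line that one argues ``in a way similar to the proof of Proposition~\ref{prop:irreducibility_original}.'' Your explicit handling of the two-site pair-flip moves $\hat{Q}_j^{\pm\pm}$ (hop first to manufacture an adjacent equal pair, then flip) is the natural adaptation of Lemma~\ref{lem:operate_R} to the spin-$\tfrac12$ setting and is, as you note, cleaner than forcing the $\hat{R}_{j_1,j_2}$ gadget.
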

\begin{proof}
 We can prove Proposition~\ref{prop:irreducibility_map} in a way similar to the proof of Proposition~\ref{prop:irreducibility_original}.
\end{proof}

\begin{proof}[Proof of Theorem~\ref{thm:Z2_momentum_SM} for Eq.~\eqref{eq:Hamiltonian_map_SM}]
 Using Propositions~\ref{prop:off-diagonal_map} and \ref{prop:irreducibility_map} [Eqs.~\eqref{eq:H_nonpositive_map} and \eqref{eq:H_irreducibility_map}], we can straightforwardly prove the Theorem in the same way as the proof in the original model.
\end{proof}

\subsection{Extension to inversion symmetry}
\label{sec:momentum_inversion}
The original and mapped Hamiltonian operators commute with not only the $\pi$-rotation operator $\hat{Z}$ but also the inversion one $\hat{I}$:
\begin{equation}
 [\hat{\mathcal{H}}, \hat{I}] = [\hat{\mathcal{H}}_{\mathrm{map}}, \hat{I}] = 0,
\end{equation}
where $\hat{I}$ transforms an operator and a state on the $j$th site to those on the $(L-j+1)$th site, and $\hat{I}^2 = 1$.
Here we extend Theorem~\ref{thm:Z2_momentum_SM} for the $\pi$-rotation symmetry to the inversion symmetry.
\begin{corollary}
 If the ground state is a simultaneous eigenstate of $\hat{I}$ with a nontrivial eigenvalue $-1$, the state has a crystal momentum $\pi$; otherwise, the crystal momentum of the ground state is zero.
\end{corollary}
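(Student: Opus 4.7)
The plan is to reduce the corollary to Theorem~\ref{thm:Z2_momentum} by showing that whenever the ground state is an eigenstate of the inversion $\hat{I}$, its eigenvalue $\eta_I$ necessarily coincides with the $\pi$-rotation eigenvalue $\eta_Z$; since Theorem~\ref{thm:Z2_momentum} already fixes the crystal momentum from $\eta_Z$, the same correspondence will then hold for $\eta_I$. To execute this, I use the Perron--Frobenius form established in the proof of Theorem~\ref{thm:Z2_momentum}: within each $\hat{Z} = \eta_Z$ sector the ground state is unique and given by $\ket{\Psi_{\mathrm{GS}}^{(\eta_Z)}} = \sum_{\bm{m} \in V^{(\eta_Z)}} a(\bm{m}) \ket{\tilde{\bm{m}}}$ with strictly positive $a(\bm{m})$. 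Because $\hat{I}$ commutes with both $\hat{\mathcal{H}}$ and $\hat{Z}$, it preserves this sector and, by uniqueness, acts on the sector ground state as a scalar $\eta_I = \pm 1$.

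The central computation is to determine the sign that $\hat{I}$ produces on the signed basis. Using $\hat{I}\ket{\bm{m}} = \ket{m_L, m_{L-1}, \dots, m_1}$, one gets
\begin{equation*}
\hat{I} \ket{\tilde{\bm{m}}} = (-1)^{\delta(\bm{m}) - \delta(\hat{I}(\bm{m}))} \ket{\widetilde{\hat{I}(\bm{m})}},
\end{equation*}
with $\delta(\bm{m}) - \delta(\hat{I}(\bm{m})) = \sum_{j=1}^{L}(2j - L - 1)(S - m_j)$. Reducing this modulo $2$, every coefficient $2j - L - 1$ becomes $L + 1 \equiv 1$ because $L$ is even, so the sum collapses to $n_{\mathrm{tot}} := \sum_{j}(S - m_j)$. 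For $\bm{m} \in V^{(\eta_Z)}$ we have $(-1)^{n_{\mathrm{tot}}} = \eta_Z$, which yields the clean identity $\hat{I} \ket{\tilde{\bm{m}}} = \eta_Z \ket{\widetilde{\hat{I}(\bm{m})}}$.

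Substituting into $\hat{I} \ket{\Psi_{\mathrm{GS}}^{(\eta_Z)}} = \eta_I \ket{\Psi_{\mathrm{GS}}^{(\eta_Z)}}$ and matching coefficients after the relabeling $\bm{n} = \hat{I}(\bm{m})$ gives $\eta_Z\, a(\hat{I}(\bm{n})) = \eta_I\, a(\bm{n})$ for every $\bm{n}$. Strict positivity of $a$ then forces $\eta_I = \eta_Z$ (and, as a byproduct, $a$ is $\hat{I}$-invariant), after which Theorem~\ref{thm:Z2_momentum} delivers the desired correspondence between the inversion eigenvalue and the crystal momentum. The argument carries over verbatim to the mapped spin-$1/2$ XYX Hamiltonian $\hat{\mathcal{H}}_{\mathrm{map}}$ via Propositions~\ref{prop:off-diagonal_map} and \ref{prop:irreducibility_map}, since the signed basis is defined identically there. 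The only delicate step is the parity identity in the middle paragraph: once the even-$L$ hypothesis converts the $j$-dependent exponent into the simple $n_{\mathrm{tot}}$ that already labels the $\hat{Z}$-sectors, the remainder is a bookkeeping consequence of Perron--Frobenius positivity.
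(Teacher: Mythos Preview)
Your proof is correct and follows essentially the same approach as the paper: both arguments rest on the Perron--Frobenius form of the sector ground state, the sign computation $(-1)^{\delta(\bm{m})-\delta(\hat{I}(\bm{m}))}=(-1)^{n_{\mathrm{tot}}}$ (using even $L$), and the strict positivity of the coefficients $a(\bm{m})$. The only cosmetic difference is that the paper phrases the sign identity as $(-1)^{\delta(\hat{I}(\bm{m}))}=(-1)^{\delta(\hat{T}(\bm{m}))}$ and then invokes the already-computed translation eigenvalue, whereas you identify the sign directly with $\eta_Z$ and extract $\eta_I=\eta_Z$ from coefficient matching; the substance is identical.
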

\begin{proof}
 We here consider the original model $\hat{\mathcal{H}}$; the proof for the mapped model $\hat{\mathcal{H}}_{\mathrm{map}}$ is straightforward.
 Let us revisit the Hamiltonian blocks $\hat{\mathcal{H}}^{(\pm)}$, which are the eigensectors of the $\pi$-rotation symmetry $\hat{Z} = \pm 1$, respectively.
 Since $\hat{I}$ commutes with $\hat{Z}$, the unique ground state $\Ket{\Psi_{\mathrm{GS}}^{(\pm)}}$ in each sector is a simultaneous eigenstate of $\hat{I}$ and $\hat{Z}$.
 Also, the irreducibility of $\hat{\mathcal{H}}^{(\pm)}$ (Proposition~\ref{prop:irreducibility_original}) ensures that the eigenvalue of $\hat{I}$ is uniquely determined in each block.
 The relation between the inversion eigenvalue and the momentum of the ground state is derived as follows.
 First, the sign of the basis is transformed under the inversion as
 \begin{align}
  (-1)^{\delta(\hat{I}(\bm{m}))}
  &= (-1)^{\sum_{j=1}^{L} j (S - m_{L-j+1})} \notag \\
  &= (-1)^{\sum_{j=1}^{L} (L-j+1) (S - m_j)} \notag \\
  &= (-1)^{\sum_{j=1}^{L} (L+1) (S - m_j)} (-1)^{\delta(\bm{m})} \notag \\
  &= (-1)^{\sum_{j=1}^{L} (S - m_j)} (-1)^{\delta(\bm{m})}
  = (-1)^{\delta(\hat{T}(\bm{m}))},
 \end{align}
 where we use Eq.~\eqref{eq:sign_translation}.
 Using $a(\hat{T}(\bm{m})) = a(\hat{I}(\bm{m})) = a(\bm{m})$, therefore, we obtain
 \begin{align}
  \hat{I} \Ket{\Psi_{\mathrm{GS}}^{(\pm)}}
  &= \sum_{\bm{m} \in V^{(\pm)}} a(\bm{m}) (-1)^{\delta(\bm{m})} \ket{\hat{I}(\bm{m})} \notag \\
  &= \sum_{\bm{m} \in V^{(\pm)}} a(\bm{m}) (-1)^{\delta(\hat{I}^{-1}(\bm{m}))} \ket{\bm{m}} \notag \\
  &= \sum_{\bm{m} \in V^{(\pm)}} a(\bm{m}) (-1)^{\delta(\hat{T}^{-1}(\bm{m}))} \ket{\bm{m}} \notag \\
  &= \sum_{\bm{m} \in V^{(\pm)}} a(\bm{m}) (-1)^{\delta(\bm{m})} \ket{\hat{T}(\bm{m})}
  = \hat{T} \Ket{\Psi_{\mathrm{GS}}^{(\pm)}}
  = \pm \Ket{\Psi_{\mathrm{GS}}^{(\pm)}}.
 \end{align}
 This completes the proof.
\end{proof}

\section{$\pi$-momentum state and entanglement}
\begin{figure}[tbp]
 \includegraphics[width=\linewidth, pagebox=artbox]{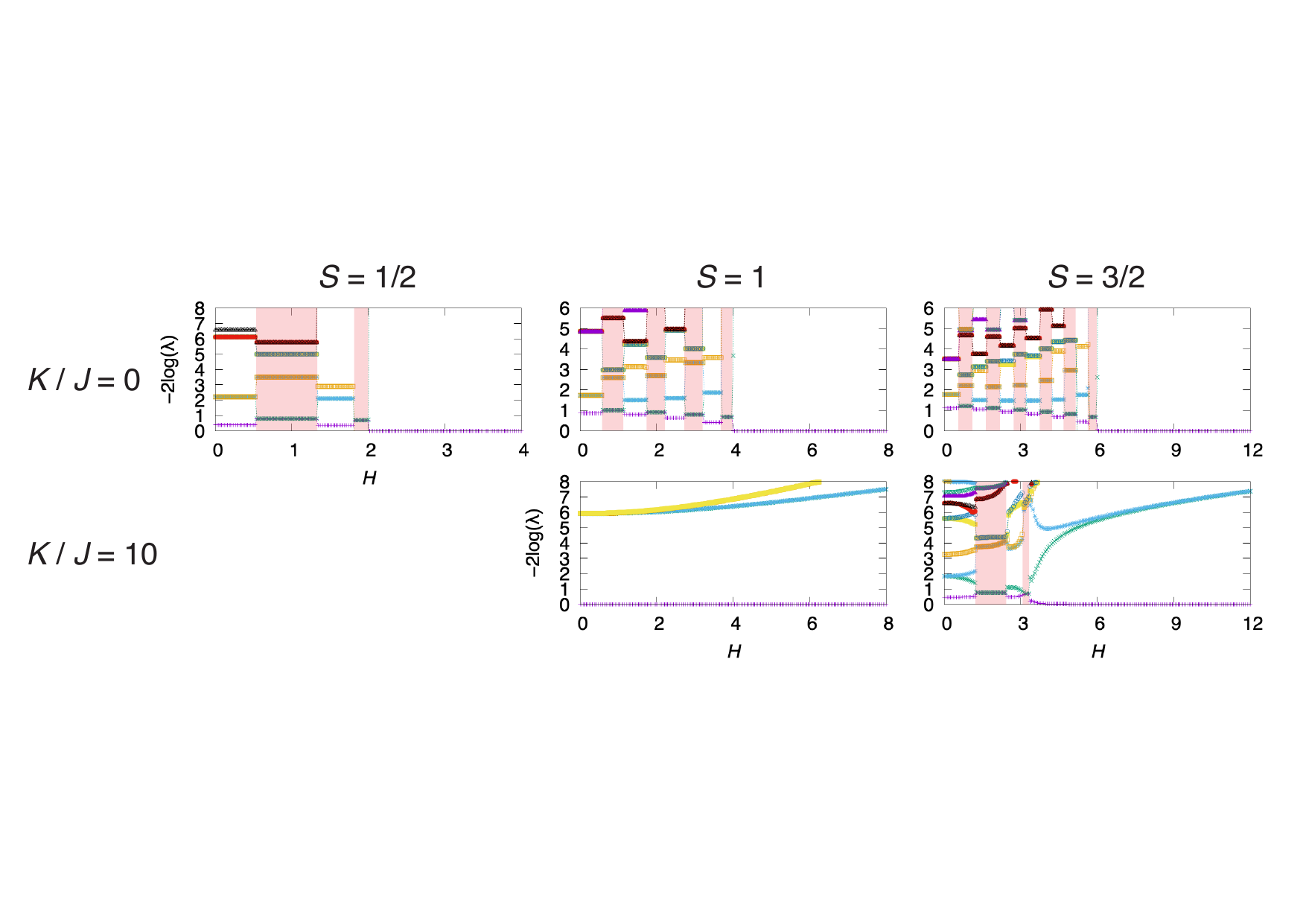}
 \caption{Ground-state entanglement spectra for $K / J = 0$ and $10$. $L$ is set to 8 for all panels. The spectrum has double degeneracy in the shaded regions where the crystal momentum of the ground state is $\pi$.}
 \label{fig:entanglement_spectrum}
\end{figure}

Here, we exhibit the relationship between crystal momentum and entanglement for the ground state.
First, we show entanglement spectra of the ground state for $K / J = 0$ and $10$ in Fig.~\ref{fig:entanglement_spectrum}, which are obtained by the density matrix renormalization group analyses using \texttt{ITensors.jl}~\cite{ITensor_SM, ITensor-r0.3_SM}.
In this figure, the shaded regions represent magnetic fields in which the ground state satisfies $\hat{Z} = -1$, namely $k = \pi$ (see Theorem~\ref{thm:Z2_momentum_SM}).
One can find that the entanglement spectrum is always doubly degenerate in the parameter regimes.

This double degeneracy can be understood as follows.
First, notice that the Hamiltonian Eq.~\eqref{eq:Hamiltonian_SM} commutes with $\hat{I}$ as well as $\hat{Z}$.
According to the symmetry arguments in Sec.~\ref{sec:momentum_inversion}, the eigenvalue of $\hat{I}$ is the same as that of $\hat{Z}$ for the ground state.
Following the discussions in Ref.~\cite{Pollmann2010_SM}, furthermore, one can reveal that the ground state with a nontrivial eigenvalue $\hat{I} = -1$ is characterized by the doubly degenerate entanglement spectrum.
The degeneracy may be relevant to a recent theoretical study in Ref.~\cite{Gioia2022_SM}, which suggested that the eigenstate with nonzero momentum must be long-range entangled.

\section{Mapping to spin-$1/2$ XYX model in large $K$ limit}
\label{sec:mapping_largeK}
In this section, we derive a spin-$1/2$ XYX model from Eq.~\eqref{eq:Hamiltonian_SM} for half-odd-integer $S$, through the first-order perturbation theory with respect to $J, H \ll K$.
Let us begin with the ``$K$ model'' $\hat{\mathcal{H}}_0 = K \sum_{j=1}^{L} (\hat{S}_j^y)^2$, which has twofold degenerate ground states $\ket{S_j^y = \pm\tfrac{1}{2}}$ on the $j$th site.
In the following, we consider two perturbation terms $\hat{H}_j' := -H \hat{S}_j^z$ and $\hat{H}_j'' := J \hat{\bm{S}}_j \cdot \hat{\bm{S}}_{j+1}$, restricting the Hilbert space to the ground state on every site.
First, we discuss the perturbation $\hat{H}_j'$ of a magnetic field.
Since
\begin{equation}
 \hat{S}_j^z \Ket{S_j^y = \pm\tfrac{1}{2}}
 = \frac{1}{2} \left[ \left(S + \frac{1}{2}\right) \Ket{S_j^y = \mp\tfrac{1}{2}} + \sqrt{\left(S - \frac{1}{2}\right)\left(S + \frac{3}{2}\right)} \Ket{S_j^y = \pm\tfrac{3}{2}} \right],
\end{equation}
the first-order perturbed Hamiltonian in the restricted Hilbert space is given by
\begin{equation}
 \begin{bmatrix}
  \Braket{S_j^y = +\tfrac{1}{2} | \hat{H}_j' | S_j^y = +\tfrac{1}{2}} & \Braket{S_j^y = +\tfrac{1}{2} | \hat{H}_j' | S_j^y = -\tfrac{1}{2}} \\[3mm]
  \Braket{S_j^y = -\tfrac{1}{2} | \hat{H}_j' | S_j^y = +\tfrac{1}{2}} & \Braket{S_j^y = -\tfrac{1}{2} | \hat{H}_j' | S_j^y = -\tfrac{1}{2}}
 \end{bmatrix}
 = - \frac{H}{2}
 \begin{bmatrix}
  0 & S + \frac{1}{2} \\
  S + \frac{1}{2} & 0
 \end{bmatrix}.
\end{equation}
Diagonalizing the matrix, we obtain the first-order energy $E_{\pm}^{(1)} = \mp (S + \frac{1}{2}) \frac{H}{2}$ and the corresponding wavefunction $\ket{\phi_{j, \pm}^{(0)}} = \left(\ket{S_j^y = +\tfrac{1}{2}} \pm \ket{S_j^y = -\tfrac{1}{2}}\right) / \sqrt{2}$ with the zeroth order in $H$.
The energy splitting is equivalent to a ``magnetic field'' $\tilde{H} := (S + \frac{1}{2}) H$ applied to a $1/2$ spin.

Next we consider the other perturbation $\hat{H}_j''$ with respect to the exchange interaction.
Using the bases $\ket{\phi_\pm^{(0)}}$, we obtain a matrix $[M_{j; \alpha \beta, \gamma \delta}] := \left[ \bra{\phi_{j, \alpha}^{(0)}} \bra{\phi_{j+1, \beta}^{(0)}} \hat{H}_j'' \ket{\phi_{j, \gamma}^{(0)}} \ket{\phi_{j+1, \delta}^{(0)}} \right]$ as follows:
\begin{equation}
 \begin{bmatrix}
  M_{j; ++, ++} & M_{j; ++, +-} & M_{j; ++, -+} & M_{j; ++, --} \\
  M_{j; +-, ++} & M_{j; +-, +-} & M_{j; +-, -+} & M_{j; +-, --} \\
  M_{j; -+, ++} & M_{j; -+, +-} & M_{j; -+, -+} & M_{j; -+, --} \\
  M_{j; --, ++} & M_{j; --, +-} & M_{j; --, -+} & M_{j; --, --}
 \end{bmatrix}
 = \frac{J}{4}
 \begin{bmatrix}
  \left(S + \frac{1}{2}\right)^2 & 0 & 0 & 1 - \left(S + \frac{1}{2}\right)^2 \\
  0 & - \left(S + \frac{1}{2}\right)^2 & 1 + \left(S + \frac{1}{2}\right)^2 & 0 \\
  0 & 1 + \left(S + \frac{1}{2}\right)^2 & - \left(S + \frac{1}{2}\right)^2 & 0 \\
  1 - \left(S + \frac{1}{2}\right)^2 & 0 & 0 & \left(S + \frac{1}{2}\right)^2
 \end{bmatrix}.
\end{equation}
These matrix elements are equivalent to those of a spin-$1/2$ XYX exchange interaction
\begin{equation}
 J \Bra{s_j^z = \tfrac{\alpha}{2}} \Bra{s_{j+1}^z = \tfrac{\beta}{2}}
 \left[ \left(S + \frac{1}{2}\right)^2 \hat{s}_j^x \hat{s}_{j+1}^x + \hat{s}_j^y \hat{s}_{j+1}^y + \left(S + \frac{1}{2}\right)^2 \hat{s}_j^z \hat{s}_{j+1}^z \right]
 \Ket{s_j^z = \tfrac{\gamma}{2}} \Ket{s_{j+1}^z = \tfrac{\delta}{2}}
 \quad (\alpha, \beta, \gamma, \delta = \pm 1),
\end{equation}
where $\hat{\bm{s}}_j$ is an effective spin-$1/2$ operator on the $j$th site.
Therefore, our model in Eq.~\eqref{eq:Hamiltonian_SM} can be mapped to the spin-$1/2$ XYX model in Eq.~\eqref{eq:Hamiltonian_map_SM} when $K \gg J, H$ is satisfied.

\begin{figure}[tbp]
 \includegraphics[width=.8\linewidth, pagebox=artbox]{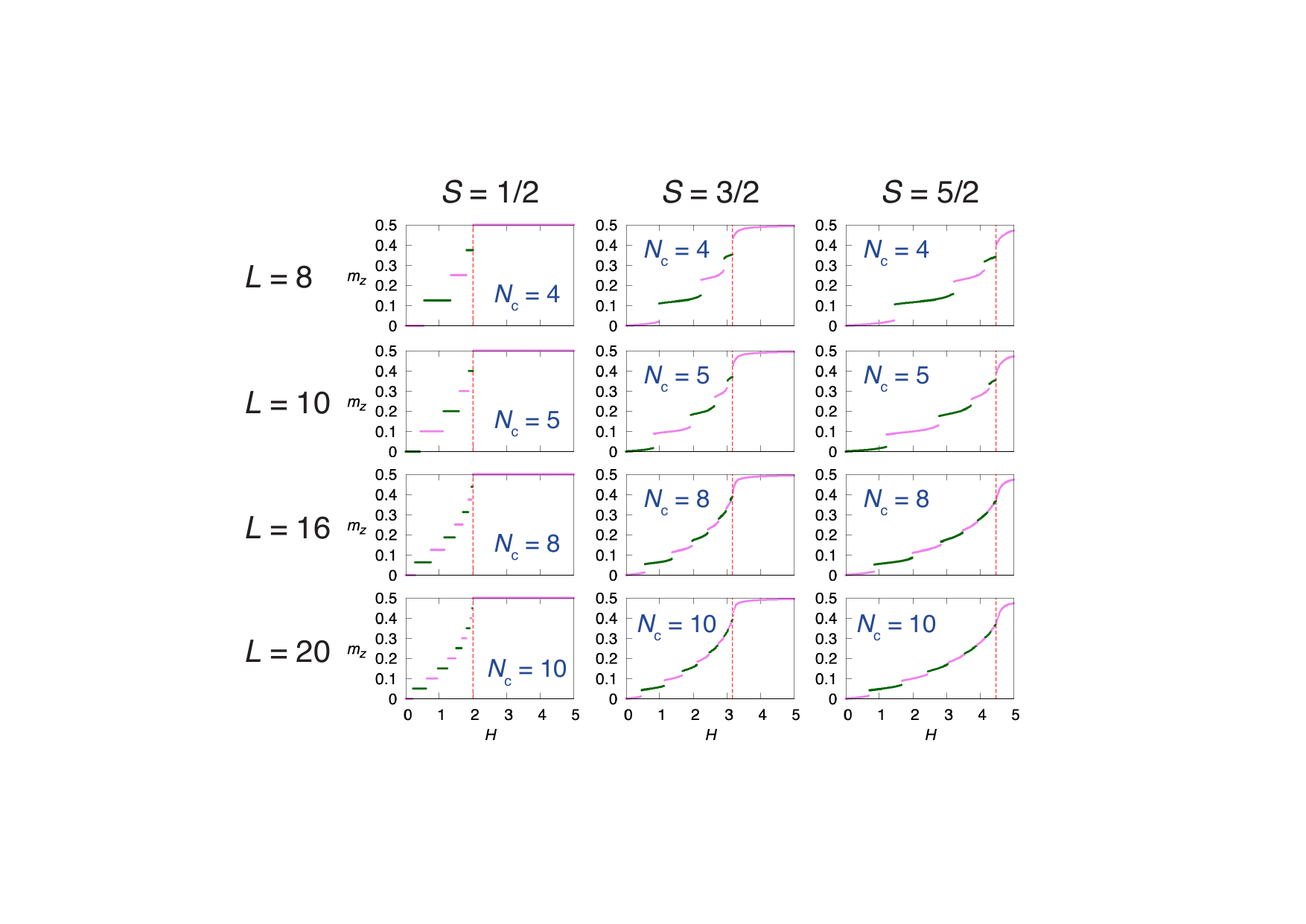}
 \caption{Magnetization $m_z$ vs magnetic field $H$ curves for $L = 8$, $10$, $16$, and $20$ obtained by exact diagonalization of the effective spin-$1/2$ model. The data for $L = 8$ are identical to those in the main text. The violet and green lines represent that the ground state has crystal momenta $0$ and $\pi$, respectively. The red dashed lines show the ``classical'' line $H = J \sqrt{2[(S + 1/2)^2 + 1]}$.}
 \label{fig:numerical_results_supple_xyx}
\end{figure}

We show the magnetization curves of the effective spin-$1/2$ model in Fig.~\ref{fig:numerical_results_supple_xyx}, where the first row ($L = 8$) is the same data as those in Fig.~1 of the main text.
All data exhibit level crossings with $N_{\mathrm{c}} = L/2$, consistent with the relations in Eq.~(2) in the main text.
It is also worth noting that in the large-$S$ limit, the Hamiltonian $\hat{\mathcal{H}}_{\mathrm{map}}$ reduces to an XX model in a transverse field, $\hat{\mathcal{H}}_{\mathrm{map}} = \tilde{J} \sum_{j=1}^{L} (\hat{s}_j^x \hat{s}_{j+1}^x + \hat{s}_j^z \hat{s}_{j+1}^z) - \tilde{H} \sum_{j=1}^{L} \hat{s}_j^z$, for which level crossings between $0$- and $\pi$-momentum states can also be confirmed numerically.
Since a \textit{classical} XX model in a transverse field just exhibits canted moments, we can conclude that the level crossings are a quantum effect.

\section{Exact solution of level crossing on a ``classical'' line}
In the main text and Sec.~\ref{sec:mapping_largeK}, the mapped spin-$1/2$ XYX (XXZ) model under a transverse magnetic field [Eq.~\eqref{eq:Hamiltonian_map_SM}] is discussed for half-odd-integer spins.
We here make contact with earlier studies of the effective model.
References~\cite{Kurmann1982_SM, Muller1985_SM, Dmitriev2002_SM} identified a ``classical'' line $\tilde{H} = \tilde{J} \sqrt{2(1 + \Delta)}$ on which the model becomes exactly solvable and a level crossing takes place between $0$- and $\pi$-momentum states.
We have checked that this line coincides with the ferromagnetic transition point (see red dashed lines in Fig.~\ref{fig:numerical_results_supple_xyx}).
In the following, to make our manuscript self-contained, we explicitly show that the effective model has twofold degenerate ground states on the ``classical'' line.
Note that the following discussions have some overlaps with Refs.~\cite{Kurmann1982_SM, Muller1985_SM, Dmitriev2002_SM}.

Let $\phi$ be $\arccos(\tilde{H}_{\mathrm{cl}} / 2\tilde{J}) = \arccos\left(\sqrt{(1 + \Delta) / 2}\right)$ (note that $0 < \Delta \leq 1$).
Then, we consider a unitary transformation $\hat{R}(\phi) := \bigotimes_{j=1}^{L} \exp[\ii (-1)^j \phi \hat{s}_j^y]$, which rotates a spin on the odd-$j$th (even-$j$th) site with an angle $\phi$ ($-\phi$) around the $y$ axis:
\begin{equation}
 \hat{R}(\phi)^{-1} \hat{\bm{s}}_j \hat{R}(\phi)
 =
 \underbrace{
 \begin{bmatrix}
  \cos\phi & 0 & -(-1)^j \sin\phi \\
  0 & 1 & 0 \\
  (-1)^j \sin\phi & 0 & \cos\phi
 \end{bmatrix}
 }_{U_j(\phi)}
 \hat{\bm{s}}_j
 =: \hat{\bm{\sigma}}_j
 \label{eq:staggered_spin_rotation}
\end{equation}
The Hamiltonian is transformed by $\hat{R}(\phi)$ as follows:
\begin{align}
 \hat{R}(\phi)^{-1} \hat{\mathcal{H}}_{\mathrm{map}} \hat{R}(\phi)
 &= \tilde{J} \sum_{j=1}^{L} (\hat{\sigma}_j^x \hat{\sigma}_{j+1}^x + \Delta \hat{\sigma}_j^y \hat{\sigma}_{j+1}^y + \hat{\sigma}_j^z \hat{\sigma}_{j+1}^z) - \tilde{H} \sum_{j=1}^{L} \hat{\sigma}_j^z \notag \\
 &= \tilde{J} \sum_{j=1}^{L} \left[ \Delta \hat{\bm{s}}_j \cdot \hat{\bm{s}}_{j+1} + (-1)^j \sqrt{1 - \Delta^2} (\hat{s}_j^x \hat{s}_{j+1}^z - \hat{s}_j^z \hat{s}_{j+1}^x) \right]
 - \frac{\tilde{H}}{\sqrt{2}} \sum_{j=1}^{L} \left[ \sqrt{1 + \Delta} \, \hat{s}_j^z + (-1)^j \sqrt{1 - \Delta} \, \hat{s}_j^x \right].
\end{align}
On the classical line $\tilde{H} = \tilde{H}_{\mathrm{cl}}$, in particular, this equation reduces to
\begin{equation}
 \hat{R}(\phi)^{-1} \hat{\mathcal{H}}_{\mathrm{map}} \hat{R}(\phi)
 = \tilde{J} \sum_{j=1}^{L} \left[ \Delta \hat{\bm{s}}_j \cdot \hat{\bm{s}}_{j+1} - (1 + \Delta) \hat{s}_j^z + (-1)^j \sqrt{1 - \Delta^2} (\hat{s}_j^x \hat{s}_{j+1}^z - \hat{s}_j^z \hat{s}_{j+1}^x - \hat{s}_j^x) \right].
\end{equation}
Then the following statement holds:
\begin{proposition}
 \label{prop:RHR_GS}
 Let $\ket{0}$ be a fully polarized state $\ket{\uparrow \uparrow \dots \uparrow}$.
 Then $\ket{0}$ is the ground state of $\hat{R}(\phi)^{-1} \hat{\mathcal{H}}_{\mathrm{map}} \hat{R}(\phi)$ with its eigenenergy being $E_0 = - \tilde{J} L \left(\frac{1}{2} + \frac{\Delta}{4}\right)$.
\end{proposition}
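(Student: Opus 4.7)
The plan is a two-part strategy: first directly verify that $\ket{0}$ is an eigenstate of $\hat{R}(\phi)^{-1}\hat{\mathcal{H}}_{\mathrm{map}}\hat{R}(\phi)$ with the claimed eigenvalue, then establish $E_0$ as a global lower bound via a bond decomposition. For the eigenstate check, I would apply the transformed Hamiltonian term by term to $\ket{0}$. The diagonal pieces give $\Delta\,\hat{\bm{s}}_j\cdot\hat{\bm{s}}_{j+1}\ket{0} = (\Delta/4)\ket{0}$ and $-(1+\Delta)\hat{s}_j^z\ket{0} = -\tfrac{1+\Delta}{2}\ket{0}$; summing over $j$ yields precisely $E_0 = -\tilde{J}L(\tfrac{1}{2}+\tfrac{\Delta}{4})$. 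The staggered off-diagonal piece $(-1)^j\sqrt{1-\Delta^2}(\hat{s}_j^x\hat{s}_{j+1}^z - \hat{s}_j^z\hat{s}_{j+1}^x - \hat{s}_j^x)\ket{0}$, after using $\hat{s}_j^z\ket{0} = \hat{s}_{j+1}^z\ket{0} = \tfrac{1}{2}\ket{0}$, collapses to $-\tfrac{(-1)^j\sqrt{1-\Delta^2}}{2}(\hat{s}_j^x + \hat{s}_{j+1}^x)\ket{0}$. Summing over $j$ and re-indexing the $\hat{s}_{j+1}^x$ term ($j+1 \to j$ under the PBC with $L$ even, which flips the staggered sign) makes the two resulting staggered sums cancel term by term, so $\ket{0}$ is an eigenstate with eigenvalue $E_0$.

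For the variational lower bound, I would decompose $\hat{R}(\phi)^{-1}\hat{\mathcal{H}}_{\mathrm{map}}\hat{R}(\phi) = \sum_{j=1}^L \hat{R}(\phi)^{-1}\hat{b}_j\hat{R}(\phi)$, where $\hat{b}_j := \tilde{J}(\hat{s}_j^x\hat{s}_{j+1}^x + \Delta\,\hat{s}_j^y\hat{s}_{j+1}^y + \hat{s}_j^z\hat{s}_{j+1}^z) - \tfrac{\tilde{H}}{2}(\hat{s}_j^z + \hat{s}_{j+1}^z)$ is the symmetric bond splitting of $\hat{\mathcal{H}}_{\mathrm{map}}$. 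Since each conjugate $\hat{R}(\phi)^{-1}\hat{b}_j\hat{R}(\phi)$ differs from $\hat{b}_j$ by a product unitary, their spectra coincide, so it suffices to diagonalize the $4\times 4$ matrix of $\hat{b}_j$. It is block-diagonal in the $\{\ket{\uparrow\uparrow},\ket{\downarrow\downarrow}\}$ and $\{\ket{\uparrow\downarrow},\ket{\downarrow\uparrow}\}$ sectors, with eigenvalues $\tilde{J}/4 \pm \tilde{J}\sqrt{\tilde{h}^2/4 + (1-\Delta)^2/16}$ and $\tilde{J}\Delta/4,\ -\tilde{J}(1/2+\Delta/4)$, where $\tilde{h} := \tilde{H}/\tilde{J}$. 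On the classical line $\tilde{h}^2 = 2(1+\Delta)$, the algebraic identity $8(1+\Delta) + (1-\Delta)^2 = (3+\Delta)^2$ forces the lower first-block eigenvalue to equal $-\tilde{J}(1/2+\Delta/4)$ as well. Hence $\hat{b}_j \geq -\tilde{J}(1/2+\Delta/4)$ and summing over bonds gives $\hat{R}(\phi)^{-1}\hat{\mathcal{H}}_{\mathrm{map}}\hat{R}(\phi) \geq E_0$; since $\ket{0}$ attains this bound, it is a ground state.

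The main obstacle I anticipate lies in the cancellation bookkeeping of the eigenstate check: the three pieces inside the $(-1)^j$ bracket each produce single-spin-flipped contributions, and a clean cancellation requires both $L$ even and periodic boundary conditions so that the re-indexed sum acquires exactly the opposite staggered sign; a parity slip would leave residual $\hat{s}_j^x\ket{0}$ terms and spoil the result. The bond-spectrum computation itself is routine, and the coincidence of the two $2\times 2$ block minima at $\tilde{h} = \sqrt{2(1+\Delta)}$ is precisely the factorization condition identified by Kurmann, Thomas, and M\"uller.
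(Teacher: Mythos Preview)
Your proof is correct. The eigenstate verification is exactly what the paper does (it merely states that ``one can easily confirm'' this part; you spell out the staggered-sum cancellation explicitly, and your bookkeeping is sound).

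Where you differ is in the lower-bound argument. The paper does \emph{not} diagonalize the $4\times 4$ bond Hamiltonian. Instead it introduces the rotated operators $[\hat{s}_j^{\bar{x}},\hat{s}_j^{\bar{y}},\hat{s}_j^{\bar{z}}]^{\mathrm{T}} := U_j(2\phi)[\hat{s}_j^{x},\hat{s}_j^{y},\hat{s}_j^{z}]^{\mathrm{T}}$ and the staggered operators $\hat{s}_j^{\xi,\eta}:=(-1)^j\hat{s}_j^{x,y}$, and rewrites
\[
\hat{R}(\phi)^{-1}\hat{\mathcal{H}}_{\mathrm{map}}\hat{R}(\phi)-E_0
= \tilde{J}\sum_j\Bigl\{\Delta\Bigl[\tfrac{1}{4}-(\hat{s}_j^\xi\hat{s}_{j+1}^\xi+\hat{s}_j^\eta\hat{s}_{j+1}^\eta+\hat{s}_j^z\hat{s}_{j+1}^z)\Bigr]
+(\tfrac{1}{2}-\hat{s}_j^z)(\tfrac{1}{2}-\hat{s}_{j+1}^{\bar{z}})+(\tfrac{1}{2}-\hat{s}_j^{\bar{z}})(\tfrac{1}{2}-\hat{s}_{j+1}^z)\Bigr\},
\]
which is manifestly a sum of nonnegative operators (the first bracket by $\hat{\bm{s}}_j\cdot\hat{\bm{s}}_{j+1}\le\tfrac{1}{4}$, the projector products by $\hat{s}^{z,\bar{z}}\le\tfrac{1}{2}$). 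Your route---the symmetric bond split and direct spectrum computation, with the identity $8(1+\Delta)+(1-\Delta)^2=(3+\Delta)^2$ forcing the two block minima to coincide on the classical line---is the classic Kurmann--Thomas--M\"uller factorization argument and is more elementary: no algebraic rewriting needs to be discovered. The paper's version buys a bit more structure: the $(\tfrac{1}{2}-\hat{s}^z)(\tfrac{1}{2}-\hat{s}^{\bar{z}})$ projectors make it transparent \emph{why} $\ket{0}$ saturates the bound, and the decomposition generalizes more readily if one wants to analyze excitations above the factorized state.
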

\begin{proof}
 One can easily confirm that $\ket{0}$ is an eigenstate of $\hat{R}(\phi)^{-1} \hat{\mathcal{H}}_{\mathrm{map}} \hat{R}(\phi)$ and its eigenenergy is equal to $E_0$.
 We thus show that $E_0$ is the lowest energy.
 First, let $[\hat{s}_j^{\bar{x}}, \hat{s}_j^{\bar{y}}, \hat{s}_j^{\bar{z}}]^{\mathrm{T}}$ be $U_j(2\phi) [\hat{s}_j^{x}, \hat{s}_j^{y}, \hat{s}_j^{z}]^{\mathrm{T}}$, where $U_j(\phi)$ is defined in Eq.~\eqref{eq:staggered_spin_rotation}.
 Then the following equation can be derived after some algebra:
 \begin{align}
  & \hat{R}(\phi)^{-1} \hat{\mathcal{H}}_{\mathrm{map}} \hat{R}(\phi) + \tilde{J} L \left(\frac{1}{2} + \frac{\Delta}{4}\right) \notag \\
  &= \tilde{J} \sum_{j=1}^{L} \left\{ \Delta \left[ \frac{1}{4} - (\hat{s}_j^\xi \hat{s}_{j+1}^\xi + \hat{s}_j^\eta \hat{s}_{j+1}^\eta + \hat{s}_j^z \hat{s}_{j+1}^z) \right] + \left(\frac{1}{2} - \hat{s}_j^z\right) \left(\frac{1}{2} - \hat{s}_{j+1}^{\bar{z}}\right) + \left(\frac{1}{2} - \hat{s}_j^{\bar{z}}\right) \left(\frac{1}{2} - \hat{s}_{j+1}^z\right) \right\},
 \end{align}
 where $\hat{s}_j^{\xi, \eta} := (-1)^j \hat{s}_j^{x, y}$.
 Noting that $\hat{s}_j^\xi \hat{s}_{j+1}^\xi + \hat{s}_j^\eta \hat{s}_{j+1}^\eta + \hat{s}_j^z \hat{s}_{j+1}^z \leq 1/4$ and $\hat{s}_j^{z, \bar{z}} \leq 1/2$, we obtain
 \begin{equation}
  \hat{R}(\phi)^{-1} \hat{\mathcal{H}}_{\mathrm{map}} \hat{R}(\phi) + \tilde{J} L \left(\frac{1}{2} + \frac{\Delta}{4}\right) \geq 0,
 \end{equation}
 which means that $E_0$ is the lowest energy.
\end{proof}
From Proposition~\ref{prop:RHR_GS}, we straightforwardly find that $\hat{R}(\phi) \ket{0}$ is the ground state of $\hat{\mathcal{H}}_{\mathrm{map}}$ on the classical line $\tilde{H} = \tilde{H}_{\mathrm{cl}}$.
Due to the $\mathbb{Z}_2$ symmetry $\hat{Z}$, $\hat{R}(-\phi) \ket{0}$ is also the ground state.
These two states are related by the translation symmetry: $\hat{T} \hat{R}(\phi) \ket{0} = \hat{R}(-\phi) \ket{0}$.
Therefore, there exist doubly degenerate ground states with momenta $0$ and $\pi$ on the classical line:
\begin{equation}
 \begin{cases}
  \hat{R}(\phi) \ket{0} + \hat{R}(-\phi) \ket{0} & k = 0, \\
  \hat{R}(\phi) \ket{0} - \hat{R}(-\phi) \ket{0} & k = \pi.
 \end{cases}
\end{equation}

\section{Semiclassical approach}
In this section, we employ a slightly different narrative and reiterate the essence of the main text from the perspective of semiclassical field theory.
Our aim is to convince the reader that the underlying theme of our work as seen from this vantage point is that of \textit{spin Berry phase} effects.
It takes only a cursory survey of the current theoretical literature on magnetism to see that the common thread to much of its quantum exotica boils down to just this concept~\cite{Sachdev_2023_SM}.

In the language used below our main offering amounts to a simple recipe independent of dimensionality, which generates \textit{lattice Hamiltonians describing antiferromagnets whose low-energy physics are governed by spin Berry phases}.
(The precise way in which the influence of Berry phases manifests itself will depend on details, such as perturbative terms added onto our basic Hamiltonian.)
The scheme only works for systems built of half-odd-integer spin moments; the low-energy sector, which potentially generates spin Berry phases, is quenched when the Hamiltonian consists of integer-valued spins.
This constitutes a reasonably generic route to designing a spin parity effect.
It is worthwhile to note, in contrast, that while all examples of known spin parity effects that we mentioned in the main text can be understood in terms of spin Berry phases, their specific mechanisms, as well as their Hamiltonians vary, are strongly tied to spatial dimensionality as well as lattice geometry, and are not always easy to construct systematically.

This portion of the Supplemental Materials consists of subsections A--D.
The following is a layout of how the discussion evolves below:
\begin{itemize}
 \item
 Section~\ref{sec:semiclassical_idea} is an exposition of our general scheme in a semiclassical language.
 We will see how it leads us to identify spin Berry phases as the factor discriminating spin systems built of half-odd-integer spins from their integer spin counterpart.
 \item
 Section~\ref{sec:higher_dimensions} spells out briefly what the implications of our strategy are when we broaden our range of spin systems to higher dimensions and more general interactions.
 \item
 Section~\ref{sec:semiclassical_Zeeman} describes an application of our general strategy to the quantum spin chain model taken up in the main text, which is an antiferromagnet with an anisotropic interaction and an applied transverse magnetic field.
 Here we arrive at an effective field theory that enables us to make contact with a recent work on the magnetization behavior of \textit{chiral ferromagnetic} spin chains~\cite{Kodama2023_SM}, which also features a spin parity effect.
 In both work the spin parity effect has its source in the spin Berry phase term of the effective action.
 \item
 Section~\ref{sec:semiclassical_Kzero} provides, based on the same semiclassical point of view, a study on how the magnetization behavior of finite-size spin chains changes when we remove the single-ion anisotropy term from our Hamiltonian.
 The characteristic pseudo-plateaus showing up in the magnetization curve (to be distinguished from magnetization plateaus which are stable in the thermodynamic limit) now ceases to be a spin parity effect as they are present for all values of $S$.
 They nevertheless \textit{can} still be understood to be a spin Berry phase effect;
 the latter is represented by an action previously employed in a semiclassical analysis of magnetization plateaus~\cite{TanakaTotsukaHu2009_SM}.
 It is interesting that both the magnetization plateau and the pseudo-plateau are, in the semiclassical framework consequences of the same Berry phase term.
 The TLL characterization of pseudo-plateaus which played an important role in the main text is given a natural interpretation that follows from our effective theory.
 \item
 Section~\ref{sec:semiclassical_summary} gives a short summary of the earlier subsections, and in particular tries to place the contents of Secs.~\ref{sec:semiclassical_Zeeman} and \ref{sec:semiclassical_Kzero} on a common footing.
\end{itemize}

\subsection{The main idea}
\label{sec:semiclassical_idea}
Consider the Hamiltonian studied in the main text:
\begin{equation}
 \mathcal{H} = \sum_{j} \left[ J \bm{S}_j \cdot \bm{S}_{j+1} + K (S^z_j)^2 - H S^x_j \right],
 \label{eq:1d_Hamiltonian}
\end{equation}
and take the limit where the $K$-term well dominates over the exchange and Zeeman interactions, i.e., $K \gg J, H$.
The leading contribution to the Hamiltonian, the $K$-term, acts locally. Let us view this situation within a semiclassical approach, where we employ the machinery of the spin coherent state path integral.
Denoting the spin vector as $\bm{S}(\tau, x) = S\bm{n}(\tau, x)$, where $\bm{n}(\tau, x)$ has unit norm, the decoupled action which arises at each site in this ``atomic'' limit consists of two local terms: the spin Berry phase (or Wess--Zumino) term, and a rotor-like inertial term, which can be expressed in Euclidean spacetime as
\begin{equation}
 \mathcal{S}_\mathrm{atomic} = -\ii S \omega[\bm{n}(\tau)] + \int \mathrm{d}\tau \, \frac{\chi}{2} (\partial_{\tau}\bm{n})^2.
 \label{eq:rotor_action}
\end{equation}
The quantity $\omega[\bm{n}(\tau)]$ is the surface area on the unit sphere swept out by the imaginary-time dynamics of the rotor $\bm{n}(\tau)$.
Though the $\chi$-term was introduced above by hand, we will later see that it arises dynamically from our Hamiltonian in the course of integrating out the higher-energy degrees of freedoms.
Were we to ignore in \eqref{eq:rotor_action} the effect of anisotropy, this would precisely be the action of a point particle of unit electric charge, constrained to move on the surface of a unit sphere with a Dirac magnetic monopole of strength $2S$ sitting at its center~\cite{fradkin_book_2013_SM}.
The eigenstates are monopole harmonics~\cite{Wu_Yang_1976_SM}, which exhibit a $(2S+1)$-fold degeneracy in the ground state.

We will now argue that the anisotropic nature of the $K$-term can bring an altogether different picture into the low-energy physics.
The basic assumption is that the latter term effectively confines the motion of the rotor $\bm{n}$ to the $xy$ plane.
To see its consequences, we represent with $\theta$ the angular orientation of $\bm{n}$ within this plane.
The planar limit of action \eqref{eq:rotor_action} then reads
\begin{equation}
 \mathcal{S}_\mathrm{atomic}
 \stackrel{\text{planar}}{=}
 \int \mathrm{d}\tau \mathrm{d}x \left[-\ii S \partial_{\tau}\theta + \frac{\chi}{2} (\partial_{\tau}\theta)^2\right].
 \label{eq:planar_atomic_limit_action}
\end{equation}

\begin{figure}[h]
 \includegraphics[width=.5\textwidth]{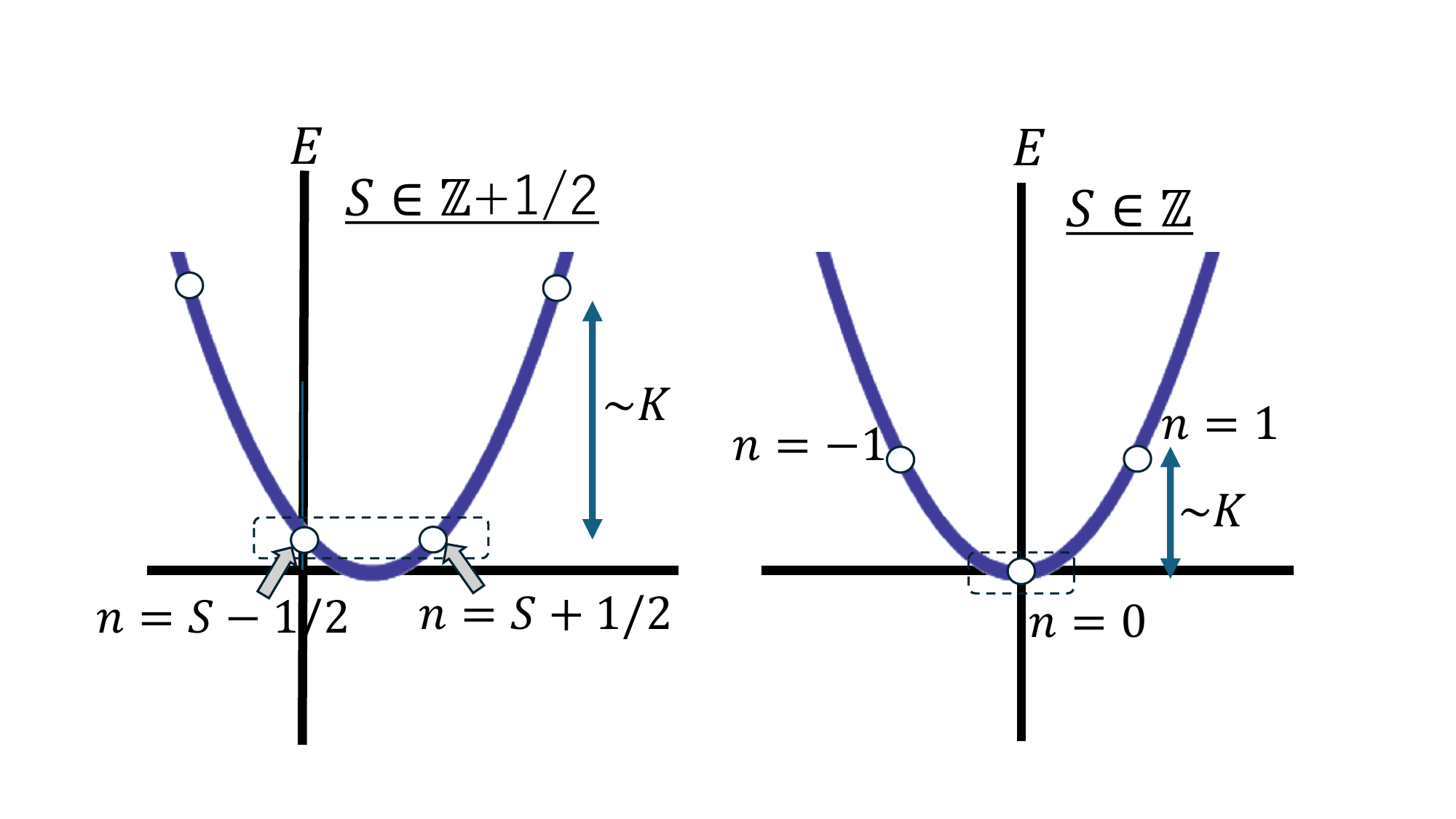}
 \caption{Low energy portion of the energy eigenvalues of $\hat{\mathcal{H}}_\mathrm{atomic}$ for the half-odd-integer $S$ (left) and integer $S$ (right) cases. The lowest energy sector is spanned by states within the dotted squares.}
 \label{fig:low_energy_levels}
\end{figure}

\noindent
Figure~\ref{fig:low_energy_levels} depicts the low-energy part of the eigenstates of the Hamiltonian corresponding to the action \eqref{eq:planar_atomic_limit_action},
\begin{equation}
 \hat{\mathcal{H}}_\mathrm{atomic} = \frac{1}{2\chi} \left(\hat{\pi}_{\theta} - S\right)^2 ,
\end{equation}
where $\hat{\pi}_{\theta} := -\ii \frac{d}{d\theta}$ is the momentum conjugate to $\theta$.
Noting that the eigenvalue $n$ of $\hat{\pi}_{\theta}$ is integral valued, it is clear from the figure that the lower-most sector of the spectrum, separated from higher energy states by an energy gap of order $K$, is a doublet when $S$ is a half-odd integer, while being nondegenerate when $S$ is integral.

For the remainder of this subsection, we will set the magnetic field to zero, $H = 0$;
the effect of the Zeeman term shall be the subject of Sec.~\ref{sec:semiclassical_Zeeman}.
Adding on back the exchange interaction (here assumed to be secondary, i.e., that its energy scale lies within the gap $\sim K$), we now have planar spins sitting on each site which interact with its nearest neighbors in an antiferromagnetic fashion.
To better describe this situation, it is now more convenient to have $\bm{n}$ and $\theta$ represent the orientation of the \textit{staggered} moments, i.e.,
\begin{equation}
 \bm{S}_j = (-1)^j S \bm{n}_j, \quad
 \bm{n}_j = (\cos\theta_j, \sin\theta_j, 0).
 \label{eq:staggered_moments}
\end{equation}
(This redefining procedure will be dealt with more systematically in Sec.~\ref{sec:semiclassical_Zeeman}.)
The key observation to make is that \textit{the spin Berry phase term behaves as if $\bm{n}$ represents an $S = 1/2$ moment when $S$ is half-odd integer-valued, while this contribution is totally absent when $S$ is an integer}.
Feeding \eqref{eq:staggered_moments} into \eqref{eq:planar_atomic_limit_action} and switching to a continuum description~\cite{Haldane_PRL_1988_SM}, we obtain (we shown in more detail later) a quantum XY model bearing the form
\begin{equation}
 \mathcal{S}[\theta(\tau, x)] = \int \mathrm{d}\tau \mathrm{d}x \left[ -\ii \frac{S_\mathrm{eff}}{a} \partial_{\tau}\theta + \frac{\chi}{2} (\partial_{\tau}\theta)^2 + \frac{\kappa}{2} (\partial_x \theta)^2 \right],
\end{equation}
where $a$ is the lattice constant and $S_\mathrm{eff}$ is either $1/2$ or $0$ as explained above.
(An explanation is in order as to what the above equation describes for the integer $S$ case, as it superficially appears to represent a massless Tomonaga--Luttinger liquid.
On the contrary, the lack of intervention of Berry phase factors dictates that the system is typically unstable toward gap-opening through the effect of vortices.
Formally the latter are vertex operators of the field dual to $\theta$, and can be generated as they do not break the inherent U(1) symmetry.
We will come back to this point shortly when discussing the $S = 1$ case.)
We note that the nonuniversal coefficients $\chi$ and $\kappa$ are subject to renormalization effects, while the precise value of the effective spin $S_\mathrm{eff}$ governs the pattern of phase interference among the space-time configurations entering the partition function $Z = \int \mathcal{D}\theta(\tau, x) e^{-\mathcal{S}[\theta(\tau, x)]}$, as we will now see.

There are several different ways to go about this task;
we will discuss two, each leading to the conclusion that the Berry phase term acts to render the system ingappable when $S$ is a half-odd integer.
The first argument borrows from duality arguments detailed in Ref.~\cite{TanakaTotsukaHu2009_SM};
noticing that spacetime vortices are the prime source of a potential mass-gap generation, we re-express the action in terms of the vortex density $\rho_\mathrm{v} := \frac{1}{2\pi} \epsilon_{\mu\nu} \partial_{\mu} \partial_{\nu}\theta$ ($\mu = \tau, x$) to find that it has the form $\mathcal{S}_\mathrm{v} = \mathcal{S}_\mathrm{Coulomb} + \ii 2\pi S_\mathrm{eff} \int \mathrm{d}\tau \mathrm{d}x \rho_\mathrm{v}x$, where the first term is a two-body logarithmic interaction between vortices and the second a Berry phase term.
Restricting as usual to vortices with the lowest vorticities $\pm 1$, we consider the effect of introducing a test unit vortex into the system.
One can read off that when $S_\mathrm{eff} = 1/2$, two spacetime configurations shifted from each other in the $x$-direction by a lattice constant (but is otherwise identical) differ in their Feynman weight $e^{-\mathcal{S}_\mathrm{v}}$ by a Berry phase factor of $-1$.
The pairwise cancellation suggests that for half-odd-integer $S$, vortices are always confined and fail to disorder the system.
For the second argument, we go back to the lattice and note that for any given site, flipping the sign of the spin Berry phase term is immaterial for any $S$ since the difference in phase is always an integer multiple of $2\pi$.
We take the continuum limit after this sign flip has been performed for every other site.
A small fugacity expansion into a gas of vortices, again with the lowest vorticities, results in a sine-Gordon model.
The spin Berry phase fixes the prefactor of the cosine term; it is found to be proportional to $\cos(\pi S_\mathrm{eff})$~\cite{Affleck_merons_PRL_1986_SM}, implying once again that for the half-odd-integer $S$ case, a mass gap cannot open up even when vortices are relevant in the renormalization group sense.

Despite it being a null result, the implications for integer $S$ provide a useful sanity check for our scheme.
Consider the case $S = 1$.
Assuming that our picture is valid, the fugacity expansion yields a cosine term with the prefactor $\cos(\pi S_\mathrm{eff}) \stackrel{\text{$S_\mathrm{eff} = 0$}}{=} 1$.
This is to be compared with the corresponding prefactor $\cos(\pi S) \stackrel{\text{$S = 1$}}{=} -1$, which arises if the same expansion were carried out for an $S = 1$ spin chain in the Haldane gap phase with an easy-plane anisotropy~\cite{Affleck_merons_PRL_1986_SM}.
The sign difference between the two cases is consistent with the fact that for $S = 1$, the large-$K$ phase (where the ground state is a trivial product state) is distinct from the Haldane gap state (a typical symmetry-protected topological phase), with a gap-closing point intervening between the two~\cite{Gu_Wen_PRB_2009_SM, Pollmann_PRB_2012_SM}.
These observations also agree with abelian bosonization studies~\cite{Schulz_PRB_1986_SM, Fuji_PRL_2009_SM}.

\subsection{Higher dimensional generalization and beyond}
\label{sec:higher_dimensions}
As noted in the main text, the ``atomic-limit'' picture depicted in Fig.~\ref{fig:low_energy_levels} has a range of validity that goes far beyond the specific model of Eq.~\eqref{eq:1d_Hamiltonian}.
It is in fact generally correct regardless of dimensionality, lattice geometry \textit{or} the Hamiltonian itself, provided other interactions are sufficiently small in their magnitudes compared to the single-site anisotropy. We discuss what this observation implies for higher dimensional antiferromagnets.
For integer $S$, there apparently is not much difference in the outcome depending on the specifics of the system under consideration: we generally expect a featureless, gapped ground state.
Our main concern therefore will be on the rich variety of quantum effects that can arise in the half-odd integer $S$ cases.

Before delving into 2D cases, it is instructive to gain intuition using spin chains with half-odd integer $S$.
We saw in the previous subsection that for this case, unit vortices interfere destructively.
This argument also tells us that this interference can be removed if vortices are paired, i.e. have vorticity $\pm 2$, hence leading to vortex condensation and a gap.
It is well known~\cite{Haldane1982_PRB_SM, *[][{ (erratum).}]{Haldane1982_PRB_erratum_SM}, Park_Sachdev_AnnPhys_2002_SM} that for $S = 1/2$, such doubly quantized vortices are stabilized by a second nearest neighbor antiferromagnetic exchange, and will result---when the ratio $J_2 / J_1$ between the next-nearest neighbor ($J_2$) and nearest neighbor ($J_1$) exchange integrals is sufficiently large, in a spontaneously dimerized ground state with a two-fold degeneracy~\cite{Haldane1982_PRB_SM, Read1989_SM,Park_Sachdev_AnnPhys_2002_SM}.
A solvable realization of this scenario is the $S = 1/2$ Mazumdar--Ghosh model with $J_2 / J_1 = 1/2$.
Using the perturbation scheme of the main text, we readily see that a higher $S$ spin chain (where $S$ is half-odd integral) consisting of the $K$-term as well as the $J_1$- and $J_2$-terms can be mapped, in the dominant-$K$ regime, into a $S = 1/2$ XYX model with $J_1$ and $J_2$ interactions.
Summarizing, the dichotomy that we derive for this problem takes the form:
\begin{equation}
 \begin{array}{ccc} \hline\hline
  & \text{integer $S$} & \text{half-odd integer $S$} \\ \hline
  \text{small $J_2 / J_1$} & \text{trivial gapped state} & \text{TL state} \\
  \text{large $J_2 / J_1$} & \text{trivial gapped state} & \text{twofold degenerate spin-Peierls state} \\ \hline\hline
 \end{array}
 \label{eq:dichotomy}
\end{equation}
That these two forms of spin parity effects should arise is in complete accord with the LSM theorem as applied to 1D spin systems.
We can also ask ourselves if a distinction in spin Berry phases arises between the small- and large-$K$ regimes.
The answer, while subtle, turns out to be affirmative.
(The distinction becomes somewhat clearer in higher dimensions.)
The main difference manifests itself in the integer $S$ case.
While the large-$K$ ground state is a featureless product state void of Berry phase effects, the effective action at small $K$ does contain a spin Berry phase term, which generates nontrivial surface contributions when $S$ is odd.
(A related observation regarding the difference in vortex Berry phases for $S = 1$ in the two regimes was also mentioned in the previous subsection.)
The surface term signals the presence of boundary states which appear under an open boundary condition.
It also reflects the sensitivity of the ground state wave function~\cite{Takayoshi_Totsuka_Tanaka2015_SM} to the
global topology of the snapshot configuration.

One can proceed in a similar fashion in 2D. For concreteness let us consider antiferromagnets on a square lattice, first without the $K$-term.
The relevant singular space-time events, essentially playing the same role as the vortices in one dimension lower, are \textit{monopoles}.
(A unit monopole tunnels the system between configurations whose Skyrmion numbers differ by one.)
One can show~\cite{Haldane_PRL_1988_SM,Read1989_SM, Park_Sachdev_AnnPhys_2002_SM} based on the associated Berry phase factors, that in order for monopoles to be able to condense and create an energy gap, they will have to bundle up into quadruples when $S = 1/2 \pmod{2}$ or $S = 3/2 \pmod{2}$, and form pairs when $S = 1 \pmod{2}$.
When $S = 2 \pmod{2}$ they can condense without any form of grouping.
Correspondingly, the ground state degeneracy when monopoles become relevant depends on $S$ modulo $2$, and is four (when $S = 1/2$ or $S = 3/2 \bmod 2$), two (when $S = 1 \bmod 2$), or one (when $S = 2 \bmod 2$).
The degenerate states are either spin Peierls-like or plaquette dimer states, which can be shifted by one site or rotated by $\pi/2$ to transform into their degenerate counterparts.
As in the spin chain case, the monopoles events are expected to become relevant with the addition of frustration, as well as some appropriate four-spin terms.
With this information at hand, we now introduce the $K$-term.
The integer $S$ case will once again yield trivially gapped states, while for the half-integer $S$ case, the ground state can be expected to be gapless if monopoles are irrelevant, and (reflecting the aforementioned $S = 1/2$ result at $K = 0$) four-fold degenerate when they are relevant.
We can again see that the Berry phase effects are clearly different between cases where $K$ is small and large.
(This is most easily seen by comparing the two regimes for the $S = 1$ case.)
This analysis can be repeated for the case when the spins reside on a honeycomb lattice, which leads to a different pattern (dependent on $S$ modulo $3/2$) of grouping of the monopoles.

We conclude this subsection by adding that the ``atomic limit'' picture can be applied as well to nonbipartite (e.g. triangular) lattices.
When $S = 1/2$, this may lead on general grounds to a topologically ordered ground state with fractionalized, anyonic excitations, which is the third and last alternative that the generalized LSM theorem leaves room for.
While requiring further investigation, large $S$ spin system simulated by cold-atom systems, when projected onto an effective $S = 1/2$ system by incorporating large-$K$ interactions, may lead, along this line to exotic ground states.

\subsection{Adding on the Zeeman interaction}
\label{sec:semiclassical_Zeeman}
Coming back to the spin chain model of the main text, Eq.~\eqref{eq:1d_Hamiltonian}, we now provide a semiclassical description for the $H \neq 0$ case based on the large-$S$ mapping of an antiferromagnetic spin chain to the O(3) nonlinear sigma model~\cite{Haldane_PRL_1988_SM}.
Here, as well as in Sec.~\ref{sec:semiclassical_Kzero}, the spatial extent of our spin chain will be denoted as $L := N_\mathrm{site} \times a$, with the number of sites $N_\mathrm{site}$ chosen to be an even integer.
(Please note that this notation differs from that used in the main text: there, $L$ stood for the integer-valued site number, and the lattice constant was set to unity. In this section, where we deal with continuum field theories, it turns out to be convenient to retain the explicit dependence on the lattice constant. Hence the use of the notation $a$.)
Periodic boundary conditions will always be assumed.
A convenient point of departure is the corresponding action for the case \textit{without} the $K$-term.
This reads:
\begin{equation}
 \mathcal{S}[\bm{n}(\tau, x)] = \mathcal{S}_\mathrm{top}
 + \int \mathrm{d}\tau \mathrm{d}x \left\{ \frac{1}{4Ja} \left[(\partial_{\tau}\bm{n} - \ii \bm{H} \times \bm{n})^2 - (\bm{H} \cdot \bm{n})^2 \right] + \frac{JS^2 a}{2} (\partial_x \bm{n})^2 \right\}.
\end{equation}
Let us consider how the presence of the $K$-term will modify this action.
We start with the second term on the right-hand side.
The piece involving the vector product $\bm{H} \times \bm{n}$ will be relevant if the precessing of the staggered magnetization around the magnetic field is appreciable.
One expects, however, that it will be largely suppressed by the anisotropy term, and the action will be dominated by the dynamics of the field $\theta$, which we encountered in Sec.~\ref{sec:semiclassical_idea}.
We next turn to the topological term $\mathcal{S}_\mathrm{top}$, which in the isotropic case is well known to be $\mathcal{S}_\mathrm{top} = -\ii 2\pi S Q_{\tau x}$, where $Q_{\tau x}$ is the integer-valued Skyrmion number for $\bm{n}(\tau, x)$.
In the planar situation, which we are interested in, there are two alternative ways to modify this action, which are basically equivalent~\cite{Takayoshi_Totsuka_Tanaka2015_SM}.
One is to continue using the same form while incorporating meron configurations, for which the Skyrmion numbers are halved, i.e., $Q_{\tau x} = \pm 1/2$ in the least costly configurations~\cite{Affleck_merons_PRL_1986_SM}.
The other is to recall the discussion of Sec.~\ref{sec:semiclassical_idea} and undo the staggering of the spin Berry phases at each site prior to taking the continuum limit.
This will yield as before the action $\mathcal{S}_\mathrm{top} = \ii \frac{S_\mathrm{eff}}{a} \int \mathrm{d}\tau \mathrm{d}x \partial_{\tau}\theta(\tau, x)$.
Adopting the second form, we arrive at a quantum sine-Gordon model:
\begin{equation}
 \mathcal{S}[\theta] = \int \mathrm{d}\tau \mathrm{d}x
 \left\{ -\ii \frac{S_\mathrm{eff}}{a} \partial_{\tau}\theta + \frac{1}{2g} \left[\frac{1}{v^2} (\partial_{\tau}\theta)^2 + (\partial_x \theta)^2\right] - \frac{1}{4Ja}H^2 \cos^2\theta \right\},
 \label{eq:quantum_SG_action}
\end{equation}
where $g = \frac{1}{JS^2 a}$ and $v = \sqrt{2} aSJ$.
Once formulated in this way, we can follow closely the analysis carried out in the appendix of Ref.~\cite{Kodama2023_SM} for the case of chiral ferromagnets, to see how the sine-Gordon solitons and their Berry phases
affect the magnetization process.
We first note that for the present problem, the Euler--Lagrange equation for $\theta$ admits a $\pi$-soliton as the static solution with the lowest domain wall height.
This solution however is a spinon, across which the two sublattices interchange roles, and is thus in conflict with the periodic boundary condition.
We should, therefore, focus on the effect of $2\pi$ solitons.
(The admission of $\pi$-solitons into the low-energy theory will lead to an interplay between spin parity and soliton chirality~\cite{Braun_Loss_PRB_1996_SM}.)
Let us represent such a solution for our sine-Gordon equation $\partial_{xx}\theta = \frac{M}{2} \sin 2\theta$ as $\theta(x) := \Theta_{2\pi}(x - X)$, where $M := \frac{H}{\sqrt{2}JSa}$ and $X$ is the center coordinate (hereafter called the collective coordinate) of the soliton.
By definition $\int \mathrm{d}x \partial_x \Theta_{2\pi} = 2\pi$, and from a virial-theorem-like argument one also finds that $\int \mathrm{d}x (\partial_x \Theta_{2\pi})^2 = M^2$.
Having identified the relevant static solution, we now promote the collective coordinate to a dynamical degree of freedom $X(\tau)$.
Substituting $\theta = \Theta_{2\pi}(x - X(\tau))$ into Eq.~\eqref{eq:quantum_SG_action} we obtain an effective action for $X(\tau)$:
\begin{equation}
 \mathcal{S}[X(\tau)] = \int \mathrm{d}\tau \left[ \ii \frac{2\pi S_\mathrm{eff}}{a} \partial_{\tau}X + \frac{1}{2} \tilde{M}(\partial_{\tau}X)^2 \right],
\end{equation}
where $\tilde{M} := \frac{H^2}{4(Ja)^3 S^2}$.
The corresponding Hamiltonian is:
\begin{equation}
\hat{H}_{X} = \frac{1}{\tilde{M}} \left(\hat{\pi}_X - \frac{2\pi S_\mathrm{eff}}{a}\right)^2 + V(x),
\end{equation}
where $\hat{\pi}_X := -\ii \frac{d}{dX}$ is the momentum operator conjugate to $X$, and we have added a potential energy $V(X)$ with the periodicity $V(x + a) = V(x)$ as a remnant of the underlying lattice.
The energy band experienced by a single minimal soliton in motion under the influence of the potential energy $V(x)$ has a minimum at the wavenumber $k = \frac{2\pi S_\mathrm{eff}}{a}$.
The above steps can be repeated for higher domain walls, with an increment in $k$ by $\pi$ for the case $S_\mathrm{eff} = 1/2$ for each additional winding of $\phi(x)$ along the extent of the spin chain.
As the $2\pi$ solitons for the case $S_\mathrm{eff} = 1/2$ are magnons, the system can sustain up to $N_\mathrm{sites}/2$ such objects provided this simple picture of domain-wall generation holds up all the way, at least in a qualitative sense, through the magnetizing process.

\subsection{$K = 0$}
\label{sec:semiclassical_Kzero}
The gapless phase of the antiferromagnetic spin chain under a magnetic field at $K = 0$ is generally believed to be a variant of the U(1)-symmetric Tomonaga--Luttinger liquid (TLL).
The universal structures of TLLs, which played a central role in the main text, are strongly constrained by requirements imposed by conformal field theory.
An effective field theory of this problem should share these structures and, moreover, provide an explicit derivation of the relevant quantum numbers that characterize the TLL description of our spin chain.
A possible route to derive such a theory is to use bosonization methods.
There exist several versions of this technique that can be applied to spin chains with arbitrary $S$, such as that developed by Schulz~\cite{Schulz_PRB_1986_SM}.
In the same spirit as in preceding subsections though, we will take a simple semiclassical perspective and see how it enables us to connect to and reinforce the discussions of the main text.
Once again, keeping tract of the spin Berry phase will be seen to be essential, although now their implications will depend not on the spin parity, but on the quantity $S - m$, where $m$ is the magnetization density.

In this subsection, we will choose to align the applied magnetic field with the $z$-axis.
Taking advantage of the U(1) symmetry within the $xy$ plane we employ a ``canted ansatz'' for partially magnetized spin moments~\cite{TanakaTotsukaHu2009_SM}, $\bm{S}_j = S\bm{N}_j$, where
\begin{equation}
 \bm{N}_j = \left((-1)^j\sqrt{S^2-m^2}\cos\phi_j, (-1)^j\sqrt{S^2-m^2}\sin\phi_j, m \right).
 \label{eq:canted_spin_ansatz}
\end{equation}
The phases $\phi_j$ will be treated as the slowly varying degree of freedom.
\begin{figure}[h]
 \includegraphics[width=0.3\textwidth]{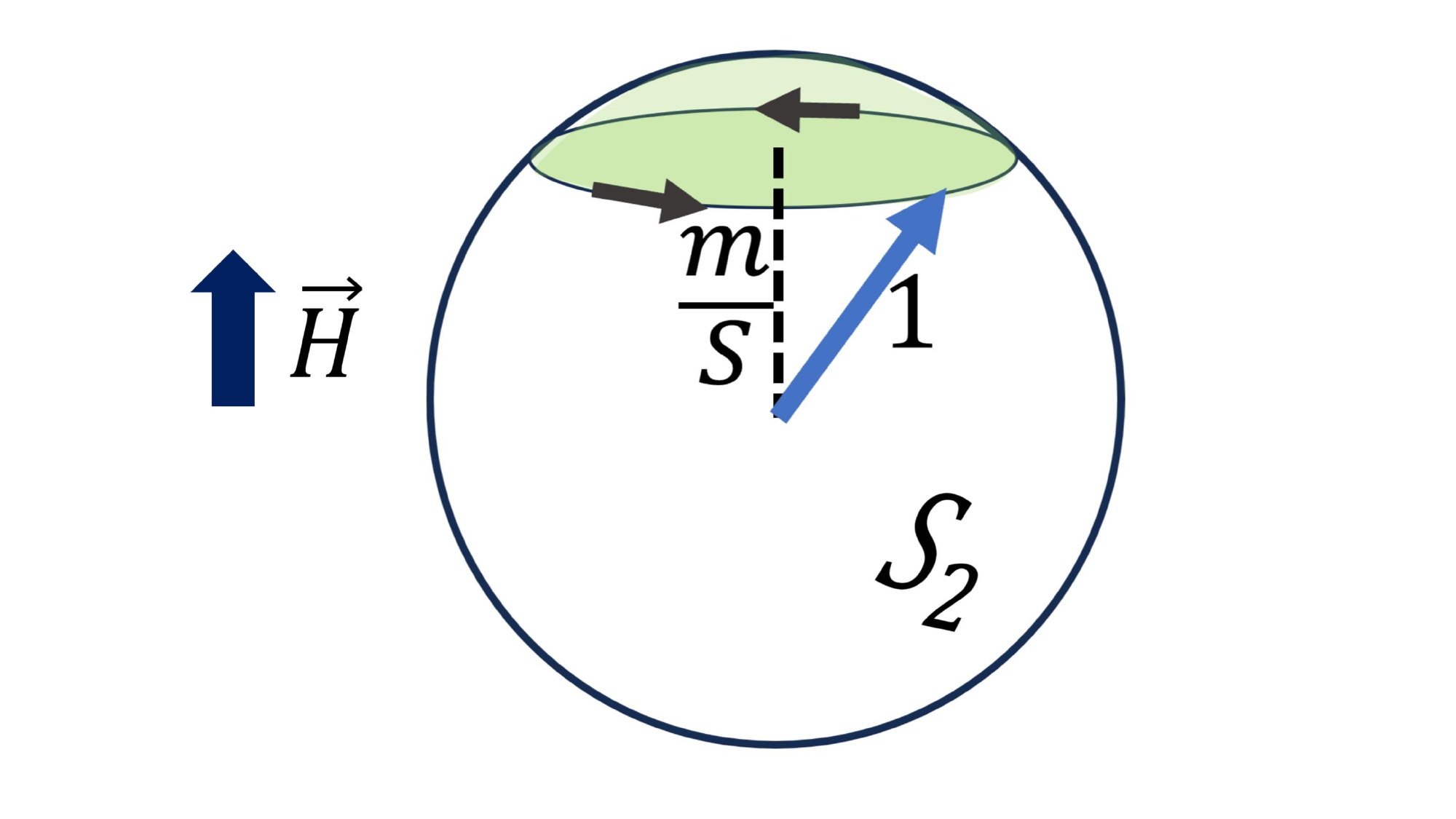}
 \caption{Precessing motion of a canted spin moment projected onto the unit sphere $S_2$. The surface area (shaded region), multiplied by $S$ gives the spin Berry phase.}
 \label{fig:canted_spin}
\end{figure}
It is easily verified that (see Fig.~\ref{fig:canted_spin}) the Berry phase action of a spin residing at site $j$ is
\begin{equation}
 \mathcal{S}_\mathrm{BP}[\phi_j(\tau)] = -\ii (S - m) \int \mathrm{d}\tau \partial_{\tau}\phi_j(\tau).
 \label{eq:BP_term_for_a_canted_spin}
\end{equation}
The \textit{non-staggered} nature of the action \eqref{eq:BP_term_for_a_canted_spin} owes to the fact that only two components of the spin vector \eqref{eq:canted_spin_ansatz} have the staggered factors $(-1)^j$.
Substituting the ansatz \eqref{eq:canted_spin_ansatz} into the lattice Hamiltonian with $K = 0$ and taking the continuum limit, we find that the basic form of the effective action for our TLL is written in terms of the angular field $\phi(\tau, x)$ and the magnetization density $m(\tau, x)$ as:
\begin{equation}
 \mathcal{S}[m, \phi] = \int \mathrm{d}\tau \mathrm{d}x
 \left[ -\ii \frac{S - m}{a} \partial_\tau \phi + Jm^2 + \frac{Ja^2}{2} (S^2 - m^2) (\partial_x \phi)^2 - mH \right].
 \label{eq:effective_TL_action}
\end{equation}
Allowing for fluctuations of the magnetization around a uniform value, $m = m_0 + \delta m(\tau, x)$, and integrating over $\delta m(\tau, x)$, the leading terms of our action are found to be
\begin{equation}
 \mathcal{S}[\phi] = \int \mathrm{d}\tau \mathrm{d}x
 \left[ -\ii \frac{S - m_0}{a} \partial_\tau \phi + \frac{1}{4Ja^2} (\partial_\tau \phi)^2 + \frac{Ja^2}{2} (S^2 - m_0^2) (\partial_x \phi)^2 \right],
 \label{eq:leading_order_action}
\end{equation}
where we have omitted the Zeeman term, which is a source term for $m$.
This is a TLL action with a Berry phase term.
Within this description, the latter term controls the gappability of this system~\cite{TanakaTotsukaHu2009_SM, Takayoshi_Totsuka_Tanaka2015_SM}.
As our aim is in investigating the massless TLL phase, we will assume that the quantization conditions~\cite{Oshikawa_Yamanaka_Affleck_SM} which stabilizes a genuine magnetization plateau (i.e., those that persist in the large $L$ limit, to be distinguished from the pseudo-plateaus~\cite{Parkinson1985_SM} discussed in the main text) are not met.

The Hamiltonian corresponding to \eqref{eq:leading_order_action} reads
\begin{equation}
 \mathcal{H}[\hat{\pi}_{\phi}, \phi] = \int \mathrm{d}x
 \left[ Ja^2 \left(\hat{\pi}_{\phi} - \frac{S - m_0}{a}\right)^2 + \frac{Ja^2}{2}(S^2 - m_0^2) (\partial_x \phi)^2 \right].
\end{equation}
The operator $\hat{\pi}_\phi(x) := -\ii \frac{\delta}{\delta \phi(x)}$ is the momentum canonically conjugate to $\phi(x)$.
The first term on the right-hand side is formally equivalent to the kinetic energy of a particle minimally coupled to a gauge field [i.e., it is of the form $\sim (p - eA)^2$].
One can utilize this gauge degree of freedom to relate the ground state wave functional $\Psi[\phi(x)]$ to its counterpart in the absence of the gauge coupling, $\Psi_0 [\phi(x)]$:
\begin{equation}
 \Psi[\phi(x)] = e^{\ii \frac{S - m_0}{a} \int \mathrm{d}x \phi(x)} \Psi_0[\phi(x)].
\end{equation}
We will now show that the prefactor extracted above [which has its origins in the Berry phase term of the action \eqref{eq:leading_order_action}] makes an Aharonov--Bohm effect-like contribution to the crystal momentum of the state described by $\Psi[\phi(x)]$.
To this end, let us apply a one-site translation $x \rightarrow x + a$ on this state.
We concentrate on the prefactor for the moment; the information on the crystal momentum contained in $\Psi_0[\phi(x)]$ will be discussed separately.
One easily sees that the factor will experience a phase shift of the amount
\begin{equation}
 (S - m_0) \int \mathrm{d}x \partial_x \phi = 2\pi (S - m_0) \mathcal{J}
 \label{eq:phase_shift_1}
\end{equation}
where the integer $\mathcal{J} = \frac{1}{2\pi} \int \mathrm{d}x \partial_x \phi(x)$ is the winding number of the configuration $\phi(x)$.
This phase shift can be understood to be a portion of the Galilei boost that the quantum fluid receives from a phase slip event that is associated with the formation of a configuration with nonzero $\mathcal{J}$~\cite{Wen_2004_SM}.

Further contributions to the momentum of low-energy states can be read off from the structure of the action $\mathcal{S}[m, \phi]$.
Namely, one notices from an inspection of Eq.~\eqref{eq:effective_TL_action} that the quantity playing the role of a canonical momentum coupling to $\phi$ is $\frac{\delta m}{a}$.
To correctly account for the transformation that it imposes on the field $\phi(x)$, we need to take into account the short-range antiferromagnetic correlation.
Specifically, a one-site translation has the effect of interchanging the role of the two sublattices, implying that in order to make this a symmetry transformation, a $\pi$-shift needs to be introduced in addition to the usual shifting $\phi(x) \rightarrow \phi(x + a)$.
With the interpretation that $\delta m = \frac{\Delta M}{L}$, where $\Delta M$ is the increment of the total magnetization of the low-energy excitation, the expectation value of the generator of the transformation just mentioned amounts to
\begin{equation}
 \frac{\Delta M}{L} \int \mathrm{d}x (a \partial_x \phi + \pi) = \frac{\Delta M}{L} (2\pi a \mathcal{J} + \pi L).
 \label{eq:phase_shift_2}
\end{equation}
Combining Eqs.~\eqref{eq:phase_shift_1} and \eqref{eq:phase_shift_2}, and further using the notation $m_0 = \frac{M}{L}$, the crystal momentum of our interest is
\begin{equation}
 k = 2\pi \frac{LS - M}{aL} \mathcal{J} + \frac{\Delta M}{aL} (2\pi \mathcal{J}) + \pi\frac{\Delta M}{a}.
\end{equation}
To make contact with the characteristic quantum numbers of the main text, we have the correspondence $\Delta N \leftrightarrow \Delta M$, $\Delta D \leftrightarrow \mathcal{J}$ and $2\pi\rho = 2\pi\frac{LS - M}{aL}$.
This should be compared with a similar expression for the case in the absence of an external magnetic field~\cite{Haldane1981_preprint_SM}, with the main difference being that the boson density $\rho$ in the latter case takes a value commensurate with the underlying lattice.
For magnon excitations with $\Delta M = 1$ (here we should also set $\mathcal{J} = 0$) we have $N_\mathrm{c} = LS$.

\subsection{Summary---comparing $K = 0$ and $K \gg J$}
\label{sec:semiclassical_summary}
To repeat an earlier statement, it often happens that in a semiclassical description of quantum magnetism, the
spin Berry phase becomes solely responsible for generating the relevant quantum effects.
This is true for the effective theories discussed in Secs.~\ref{sec:semiclassical_Zeeman} and \ref{sec:semiclassical_Kzero}.
For both $K \gg J$ and $K = 0$, the system was described in terms of planar (phase) fields (both of which we will denote here for convenience as $\phi$), with a low-energy action containing a spin Berry phase term of the form
\begin{equation}
 \mathcal{S}_\mathrm{BP} = \ii \int \mathrm{d}\tau \mathrm{d}x \frac{S_\mathrm{eff}}{a} \partial_{\tau}\phi.
\end{equation}
The quantity $S_\mathrm{eff}$ is the size of the ``active'' spin moment undergoing quantum dynamics.
For the case $K \gg J$, we had $S_\mathrm{eff} = 1/2$ for half-odd-integer spin systems, while $S_\mathrm{eff} = 0$ for integer spin systems, which led to the spin parity effect taken up in the main text.
Meanwhile, for the $K = 0$ case, we saw that $S_\mathrm{eff} = S - m$, where $m$ is a function of the applied magnetic field $H$.
For a given value of $S$, we are, therefore, sweeping the size of the effective spin throughout the magnetizing process, which explains the absence of spin parity effects.
[An interesting possibility remains, though, that a variant of the spin parity effect may arise for the ``spin'' $S - m$~\cite{TanakaTotsukaHu2009_SM} when $m$ takes special values.
If we exclude this possibility (whose manifestation will be affected by subtle short-range physics), the finite-size cross-over phenomenon discussed in Sec.~\ref{sec:semiclassical_Kzero} will be the only relevant quantum effect that derives directly from the Berry phase term.]

\end{document}